\newcommand\COMP{\hbox{C\kern -.58em {\raise .54ex \hbox{$\scriptscriptstyle |$}}
\kern-.55em {\raise .53ex \hbox{$\scriptscriptstyle |$}} }}
\newcommand\NN{\hbox{I\kern-.2em\hbox{N}}}
\newcommand\RR{\hbox{I\kern-.2em\hbox{R}}}
\newcommand\sRR{{\it \hbox{I\kern-.2em\hbox{R}}}}
\newcommand\QQ{\hbox{I\kern-.53em\hbox{Q}}}
\newcommand\PP{\hbox{I\kern-.53em\hbox{P}}}
\newcommand\EE{\hbox{I\kern-.53em\hbox{E}}}
\newcommand\ZZ{{{\rm Z}\kern-.28em{\rm Z}}}
\newcommand\be{\begin{equation}}
\newcommand{\E}{\mathbb E}
\newcommand*\bigcdot{\mathpalette\bigcdot@{.5}}
\newcommand*\bigcdot@[2]{\mathbin{\vcenter{\hbox{\scalebox{#2}{$\m@th#1\bullet$}}}}}
\newcommand{\is}{\bigcdot }
\numberwithin{equation}{section}
\DeclareMathOperator{\essinf}{ess\;inf}
\DeclareMathOperator{\esssup}{ess\;sup}
\def \Lbrack {[\![}
\def \Rbrack {]\!]}
\newtheorem{theorem}{Theorem}[section]
\newtheorem{remark}[theorem]{Remark}
\newtheorem{lemma}[theorem]{Lemma}
\newtheorem{corollary}[theorem]{Corollary}
\newtheorem{definition}[theorem]{Definition}
\def \Lbrack {[\![}
\def \Rbrack {]\!]}
\title{The second-order Esscher martingale densities for continuous-time market models\footnote{The major part of this research was designed and achieved when the first author visited the department of Applied Mathematics, Computer Science and Statistics at Ghent University. The visit was fully funded by the FWO mobility grant V500324N}}
\author{Tahir Choulli and  Ella Elazkany \\
Mathematical and Statistical Sciences Dept.\\
University of Alberta, Edmonton, Canada
\and 
Mich\`ele Vanmaele\\ 
Department of Applied Mathematics, Computer Science and Statistics,\\
Ghent University, Ghent, Belgium}
\begin{document}

\maketitle

\begin{abstract}
In this paper, we introduce the second-order Esscher pricing notion for continuous-time models. Depending whether the stock price $S$ or its logarithm is the main driving noise/shock in the Esscher definition, we obtained two classes of second-order Esscher densities called linear class and exponential class respectively. Using the semimartingale characteristics to parametrize $S$, we characterize the second-order Esscher densities (exponential and linear) using pointwise equations. The role of the second order concept is highlighted in many manners and the relationship between the two classes is singled out for the one-dimensional case. Furthermore, when $S$ is a compound Poisson model, we show how both classes are related to the Delbaen-Haenzendonck's risk-neutral measure. Afterwards, we restrict our model $S$ to follow the jump-diffusion model, for simplicity only, and address the bounds of the stochastic Esscher pricing intervals. In particular, no matter what is the Esscher class, we prove that both bounds (upper and lower) are solutions to the same linear backward stochastic differential equation (BSDE hereafter for short) but with two different constraints. This shows that BSDEs with constraints appear also in a setting beyond the classical cases of constraints on gain-processes or constraints on portfolios. We prove that our resulting constrained BSDEs have solutions in our framework for a large class of claims' payoffs including any bounded claim, in contrast to the literature, and we single out the monotonic sequence of BSDEs that ``{\it naturally}" approximate it as well.
\end{abstract}

\section{Introduction}
The Esscher transform is a time-honoured  concept in actuarial sciences, which was suggested by the Swedish actuary Esscher in \cite{Esscher}, and it is also statistically known as the {\it exponential tilting}.  Then the idea of changing the probability measure to obtain a consistent positive linear pricing rule, which appeared in the actuarial literature in the context of equilibrium reinsurance markets in  \cite{{buhlmann1980economic}}, gave to the Esscher transform an important role in actuarial sciences. Afterwards the Esscher transform was extended by Gerber and co-authors, in  \cite{gerber1994b}, to the case where the logarithm of the stock is a stochastic process with stationary and independent increments. The impact of  latter extension was huge in both actuarial sciences and finance as well,  see \cite{gerber1996}, and the references therein  to cite a few, for more details and related discussions. The third key milestone in the evolution of the Esscher concept was elaborated by B\"uhlmann-Delbaen-Embrechts-Shiryaev in   \cite{buhlmann1996no}. Here, the authors derive the {\it dynamical Esscher transform}, for discrete time models, and baptized it as {\it conditional Esscher}. In fact, for any $t=1,\ldots,T$, the Esscher parameter $\theta_t$ obtained based on the information up to time $t-1$, and hence it is stochastic and the obtained density process $Z$ takes the form of 
\begin{equation*}
Z_n=\exp\left(\sum_{k=1}^n\theta_k(X_k-X_{k-1})+\sum_{i=1}^{n} K_i\right),\quad\mbox{where $K_i$ is observable at time $i-1$},\end{equation*}
$X$ is the logarithm of the stock price, and $K$ is called the exponential cumulant. This interesting formulation of the dynamical Esscher that led Kallsen and Shiryaev  to extend {\it the dynamical Esscher} to the general continuous-time semimartingale setting in \cite{Kallsen}. In this latter paper, which is the first to define the general continuous-time version of the Esscher, the authors singled out two classes called exponential and linear Esscher depending whether $X$ is $S$ itself or represents its logarithm.  For more development about the continuous-time Esscher  and its numerous applications, we refer the reader to \cite{Rheinlander1,hubalek2006,hubalek2009,benth2012risk,elliott2022generalized} and the references therein to cite a few. The paper  \cite{Kallsen} gives various sufficient conditions on the cumulant process to guarantee the uniform integrability of the resulting density process, while it did not address the necessary and/or sufficient for the existence of those Esscher densities nor their relationship.
\\

Certainly the Esscher risk-neutral measures, when they exist, gave a nice and clear linear arbitrage-free pricing rule in incomplete markets besides other optimal neutral-risk measures. However, as noted in \cite{bondi2020comparing} for instance, at the practical and the computational aspects the Esscher rule might not be as efficient as one can wish. In this spirit,  the second-order Esscher transform was introduced in \cite{monfort2012asset}, where they show that this two-parameters Esscher transform adds an important flexibility compared to the one parameter Esscher for pricing derivatives. For given historical dynamics which are estimated using stock return data and under no-arbitrage conditions, the two-parameters Esscher contrary to the one parameter Esscher still provides free parameters to match derivative prices.\\

{\it What are our achievements?} We introduce the Esscher martingale densities/deflators of-order-two for any $d$-dimensional continuous-time semimartingale model $S$. We single out, as in \cite{Kallsen}, that there are the exponential and the linear classes of Esscher densities of-order-two, and we characterize them using pointwise equations involving the characteristics of the underlying model $S$. We elaborate the exact relationship between the first-order-Esscher  and our second-order-Esscher for each class, where the change of priors in the jump-modelling play a central role. For the class of linear-Esscher of-second-order, we give necessary and sufficient conditions for the existence in various manners. In fact, in an equivalent manner, we use some {\it local viability} of the market or equivalently a non-arbitrage condition, or the existence of a solution to an optimization problem which is dual to an optimal portfolio optimization.  For the case of the one-dimensional case of $S$, we single out the exact relationship between the exponential-Esscher and the linear-Esscher. For the case when $S$ is one-dimensional and $\ln(S)$ follows a jump-diffusion model, we describe the upper and lower Esscher price processes, and found that they are the smallest solution to {\it constrained} linear backward stochastic differential equations (BSDE hereafter for short). The obtained constrained linear BSDEs are new, as they are involved with a Skorokhod condition even though the value process does not appear in the constraint.\\

The paper has four sections including this introductory section. The second section presents the mathematical model and gives some of its preliminary analysis that will be useful throughout the paper. The third section introduces the second-order Esscher deflators/densities for the most general $d$-dimensional continuous-time market models. The last section focuses on the class of one-dimensional jump-diffusion models, for simplicity, and address the bounds resulting from the second-order Esscher pricing for any claim that is ``{\it nicely integrable}". The paper has three appendices, where the proofs for the technical lemmas are detailed therein, and some useful results are recalled.  
\section{The mathematical model and preliminaries}\label{Section4ModelNotations}
 Throughout the paper, we are supposed given a filtered probability space $(\Omega, \mathcal{F}, (\mathcal{F}_t)_{t\geq0}, \mathbb{P})$. Here the filtration is supposed to satisfy the usual condition, i.e. it is right-continuous and complete.
\subsection{General notations} 
 Throughout the paper, for any probability $Q$ on $(\Omega,{\cal{F}})$, we denote ${\cal A}(Q)$ (respectively ${\cal M}(Q)$) the set
of right-continuous with left limits (RCLL hereafter) and $\mathbb{F}$-adapted processes with $Q$-integrable variation (respectively that are -uniformly $Q$-integrable martingales). When the probability measure is not mentioned, then by default we are using the probability measure $\mathbb{P}$. The set ${\cal{V}}^+$ denotes the set of all RCLL, nondecreasing, and $\mathbb{F}$-adapted processes with finite values.  
For any process $Y$, we denote by $^{o,\mathbb H}(Y)$  (respectively $^{p,\mathbb H}(Y)$)  the
$\mathbb{F}$-optional (respectively $\mathbb{F}$-predictable) projection of $Y$. For an increasing process $V$, we denote $V^{o,\mathbb{F}}$ (respectively $V^{p,\mathbb{F}}$) its dual $\mathbb{F}$-optional (respectively $\mathbb{F}$-predictable) projection. ${\cal O}$, ${\cal P}$ and  $\mbox{Prog}$ represent the $\mathbb{F}$-optional, the $\mathbb{F}$-predictable and the $\mathbb{F}$-progressive $\sigma$-fields  respectively on $\Omega\times[0,+\infty[$. For an semimartingale $Y$, we denote by $L(Y)$ the set of all $Y$-integrable processes in the Ito's sense, and for $H\in L(Y)$, the resulting integral is one-dimensional semimartingale denoted by $H\is Y:=\int_0^{\cdot} H_udY_u$. If $\mathcal{C}$ is a set of $\mathbb{F}$-adapted processes,
then $\mathcal{C}$ --except when it is stated otherwise-- is the set of processes, $Y$,
for which there exists a sequence of stopping times,
$(T_n)_{n\geq 1}$, that increases to infinity and $X^{T_n}$ belongs to $\mathcal{C}$, for each $n\geq 1$. The set of special semimartingales is denoted by $\mathscr{S}_p$.
\subsection{The model and its parametrization}
  We consider a $d$-dimensional quasi-left-continuous semimartingale $X = \left(X^{(1)},...,X^{(d)}\right)^{tr}$, which represents the {\it driving shock} process for the $d$-risky assets $S$, and is described mathematically using the {\it predictable characteristics}, as follows
\begin{equation}\label{Xcanon}
    X = X_0 + X^c + B + h_{\epsilon}(x) \star (\mu - \nu) + \big(x - h_{\epsilon}(x) \big) \star \mu.
\end{equation}
Here, $\mu$ is the random measure associated to the jumps of $X$ and is defined by
\begin{equation}\label{mu}
    \mu (dt,dx) := \sum_{0\leq s} \delta_{(s,\Delta X_s)} (dt,dx) \mathbbm{1}_{\{\Delta X_s \neq 0\}},
\end{equation}
$X^c$ is the continuous local martingale part of $X$, $B$ is a predictable process of finite variation, the random measure $\nu$ is the compensator of the random measure $\mu$, $h_{\epsilon}(x) := x \mathbbm{1}_{\{|x| \leq\epsilon \}}$ is the canonical truncation function with ${\epsilon}\in(0,1)$ fixed throughout the paper, and $C$ is the matrix with entries $C^{ij} := \langle X^{c,i}, X^{c,j} \rangle$. Furthermore, we can find a version of the characteristics satisfying
\begin{equation}\label{characteristicsDef}
    B = b \bigcdot A, \quad C = c \bigcdot A, \quad \nu(\omega, dt, dx) = dA_t(\omega)F_t(\omega, dx), \quad
    F_t (\{0\}) = 0, \quad \int (|x|^2 \wedge 1)F_t(dx) \leq 1.
\end{equation}
where $A$ is a right-continuous and nondecreasing and predictable process, which we choose to be continuous because we assume that $X$ is quasi-left-continuous, $b$ and $c$ are predictable processes. The decomposition (\ref{Xcanon}) is known as the canonical representation of $X$, while $(b,c,F, A)$ is called predictable characteristics of $X$. \\

Throughout the paper, the following notation will be useful
\begin{equation}
\begin{split}
& {\large\bf{e}}(x):=(e^{x_1},e^{x_2},...,e^{x_d})^{tr}\in\RR^d,\quad \mbox{For any}\quad x\in\RR^d,\\
&{\bf\large{diag}}(x):=\mbox{the diagonal matrix associated to the vector $x\in\RR^d$}.
\end{split}
\end{equation}
For any $d$-dimensional semimartingale $Y$, we define ${\cal{E}}(Y)$ as the unique solution to the following SDE
\begin{equation}\label{Exponnetial4dDimension}
dZ={\bf{diag}}(Z_{-})dY,\quad Z_0={\mathbb{I}}_d:=(1,1,...,1)^{tr}\in\RR^d.
\end{equation}
 The following lemma addresses the exponential stochastic of a $d$-dimensional process.
\begin{lemma} Let $X$ be given by (\ref{Xcanon}), (\ref{mu}) and (\ref{characteristicsDef}). Then the following assertions hold.\\
{\rm{(a)}} There exists unique $d$-dimensional semimartingale $\widetilde{X}$, such that 
\begin{equation}\label{fromXtoXtilde1}
{\bf{e}}(X)={\cal{E}}\left(\widetilde{X}\right)={\bf{diag}}(S_0)^{-1}S=\left({\cal{E}}(\widetilde{X}^{(1)}),...,{\cal{E}}(\widetilde{X}^{(d)})\right)^{tr},\end{equation}
and is given by 
\begin{equation}\label{fromXtoXtilde2}
\widetilde{X}=X+{1\over{2}}\Delta\is{A}+\left({\bf{e}}(x)-{\mathbb{I}}_d-x\right)\star\mu,\quad\mbox{where}\quad \Delta:=\left(c_{11},c_{22},...,c_{dd}\right)^{tr}.
\end{equation}
Furthermore, $\widetilde{X}$ has the following canonical decompositions given by 
\begin{equation}\label{Canonical4Xtilde}
\begin{split}
\widetilde{X}=X_0+X^c+h_{\delta}(x)\star(\widetilde\mu-\widetilde\nu)+\widetilde{b}\is{A}+\left(x-h_{\delta}(x)\right)\star\widetilde\mu.
\end{split}\end{equation}
Here $\delta:=e^{\epsilon}-1$, and $(\widetilde{b},\widetilde\mu,\widetilde{F},\widetilde\nu)$ is given (for any ${\cal{F}}\otimes{\cal{B}}(\mathbb{R}^d)$-measurable $W$) by
\begin{equation}\label{Parameters4Xtilde}
\begin{split}
&\widetilde{b}:=b+{{\Delta}\over{2}}+\int\left(({\bf{e}}(x)-{\mathbb{I}}_d)I_{\{\Vert{\bf{e}}(x)-{\mathbb{I}}_d\Vert\leq\delta\}}-h_{\epsilon}(x)\right)F(dx),\quad W\star\widetilde{\mu}:={W}(\cdot,{\bf{e}}(x)-{\mathbb{I}}_d)\star\mu,\\
&\int W(t,x)\widetilde{F}(dx):=\int{W}(t,{\bf{e}}(x)-{\mathbb{I}}_d(x))F(dx),\quad\quad \widetilde{\nu}(dt,dx):=\widetilde{F}_t(dx)dA_t.
\end{split}\end{equation}
{\rm{(b)}} If $X$ is locally bounded, then we have 
\begin{equation}\label{Xbounded}
\begin{split}
&X=X_0+X^c+x\star(\mu-\nu)+b'\is A\quad\mbox{and}\quad \widetilde{X}=X_0+X^c+x\star(\widetilde\mu-\widetilde\nu)+\widetilde{b}'\is A,\\
&\mbox{where}\quad b':=b+\int xI_{\{\Vert{x}\Vert>\epsilon\}}F(dx),\quad \widetilde{b}':=b'+{{\Delta}\over{2}}+\int ({\bf{e}}(x)-{\mathbb{I}}_d-x)F(dx).
\end{split}
\end{equation}

\end{lemma}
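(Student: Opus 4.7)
My plan is to prove part (a) by reducing the $d$-dimensional identity ${\bf{e}}(X) = {\cal{E}}(\widetilde{X})$ to $d$ one-dimensional Doléans--Dade/ordinary-exponential conversions, and then to translate the resulting formula for $\widetilde{X}$ into predictable characteristics via a change of variables on the jump measure. Applying Itô's formula to each $e^{X^{(i)}}$ gives
\begin{equation*}
de^{X^{(i)}} = e^{X^{(i)}_-}\,d\bigl[\,X^{(i)} + \tfrac{1}{2}\langle X^{c,(i)}\rangle + (e^{\Delta X^{(i)}} - 1 - \Delta X^{(i)})\star\mu\,\bigr],
\end{equation*}
so after stacking coordinates and using $\langle X^{c,(i)}\rangle = c_{ii}\is A$, the unique semimartingale $\widetilde{X}$ for which ${\bf{e}}(X) = {\cal{E}}(\widetilde{X})$ is given by (\ref{fromXtoXtilde2}); uniqueness comes from the uniqueness of the stochastic logarithm of the strictly positive semimartingale ${\bf{e}}(X)$.

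For the canonical decomposition (\ref{Canonical4Xtilde}), I would read the jumps of $\widetilde{X}$ off (\ref{fromXtoXtilde2}) directly, namely $\Delta\widetilde{X} = {\bf{e}}(\Delta X) - {\mathbb{I}}_d$; this forces $W\star\widetilde\mu = W(\cdot,{\bf{e}}(x)-{\mathbb{I}}_d)\star\mu$, and by uniqueness of the compensator the analogous transfer principle for $\widetilde F$ and $\widetilde\nu$ in (\ref{Parameters4Xtilde}) follows. To identify $\widetilde{b}$, I would substitute (\ref{Xcanon}) into (\ref{fromXtoXtilde2}) and split the remaining jump term by writing $({\bf{e}}(x)-{\mathbb{I}}_d)-h_\epsilon(x) = [h_\delta(y)-h_\epsilon(x)] + [y-h_\delta(y)]$ under the change of variables $y = {\bf{e}}(x)-{\mathbb{I}}_d$. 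The two bounded pieces can then be recombined as $h_\epsilon(x)\star(\mu-\nu) + [h_\delta(y)-h_\epsilon(x)]\star\widetilde\mu = h_\delta(y)\star(\widetilde\mu-\widetilde\nu) + \int[h_\delta({\bf{e}}(x)-{\mathbb{I}}_d)-h_\epsilon(x)]F(dx)\is A$, and the resulting drift correction, added to $b + \tfrac{1}{2}\Delta$, produces exactly the $\widetilde{b}$ displayed in (\ref{Parameters4Xtilde}).

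Part (b) then follows by specialization: local boundedness of $X$ makes both $x$ and ${\bf{e}}(x)-{\mathbb{I}}_d$ bounded on the support of $\mu$ up to a localizing sequence, so the truncations $h_\epsilon, h_\delta$ can be dropped inside the jump integrals, and the corrections $\int x I_{\{\Vert x\Vert>\epsilon\}}F(dx)$ and $\int({\bf{e}}(x)-{\mathbb{I}}_d-x)F(dx)$ are absorbed into the drift to give $b'$ and $\widetilde{b}'$. The main technical obstacle I expect is verifying the legitimacy of every regrouping in part (a): namely the $A$-local integrability of $\int|h_\delta({\bf{e}}(x)-{\mathbb{I}}_d)-h_\epsilon(x)|F(dx)$ and the Lévy-type bound $\int(|y|^2\wedge 1)\widetilde F_t(dy)<\infty$. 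Both should follow from the Taylor estimate ${\bf{e}}(x)-{\mathbb{I}}_d = x + O(|x|^2)$ near the origin combined with $\int(|x|^2\wedge 1)F_t(dx)\le 1$ from (\ref{characteristicsDef}), the contribution from $|x| > \epsilon$ being controlled by the boundedness of $h_\epsilon, h_\delta$ and the local finiteness of $F_t(\{|x|>\epsilon\})$.
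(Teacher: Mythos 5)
Your argument is correct and is exactly the route the paper has in mind: the paper omits the proof, noting only that it "follows directly from Itô's formula" as in Kallsen--Shiryaev's one-dimensional case, and your coordinatewise Itô computation, the change of variables $y={\bf{e}}(x)-{\mathbb{I}}_d$ on the jump measure, and the regrouping of the truncated jump terms reproduce precisely (\ref{fromXtoXtilde2}), (\ref{Canonical4Xtilde}) and (\ref{Parameters4Xtilde}). Your closing integrability check (the Taylor bound ${\bf{e}}(x)-{\mathbb{I}}_d=x+O(|x|^2)$ against $\int(|x|^2\wedge 1)F_t(dx)\le 1$) is the right justification for compensating $h_\delta({\bf{e}}(x)-{\mathbb{I}}_d)-h_\epsilon(x)$, so no gap remains.
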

The proof of this lemma follows directly from It\^o's formula, see \cite[]{Kallsen} for the case of one dimensional, and hence it will be omitted herein. We end this preliminary section by defining some sets of local martingale densities/deflators, which appear {\it naturally} in our analysis. 
\begin{definition}\label{DefatorIntergrable} Let $Y$ a $d$-dimensional semimartingale, $Z$ be a process, and $Q$ be a probability. \\
{\rm{(a)}} $Z$ is a local martingale density (deflator) for $(Y,Q)$ if $Z>0$ and both $ZY$ and $Z$ are $Q$-local martingales. The set of local martingale densities (deflators) for $(Y,Q)$ will be denoted by ${\cal{Z}}_{loc}(Y,Q)$, and ${\cal{Z}}^{L\log{L}}_{loc}(Y,Q)$ denotes the set of $Z\in {\cal{Z}}_{loc}(Y,Q)$ satisfying $Z\ln(Z)$ is a $Q$-local submartingale. When  $Q=P$, we simply omit the probability and write $ {\cal{Z}}_{loc}(Y)$ and ${\cal{Z}}^{L\log{L}}_{loc}(Y)$ respectively.\\
{\rm{(b)}}  Let $L$ be a positive local martingale, and let $(T_n)_{n\geq 1}$ be the sequence of stopping times that increases to infinity such that $L^{T_n}\in{\cal{M}}$ and $Q_n:=L_{T\wedge{T}_n}\cdot P$. Then  we define
\begin{equation}
\begin{split}
{\cal{Z}}_{loc}(Y,L)&:=\left\{Z:\ Z^{T_n}\in {\cal{Z}}_{loc}(Y^{T_n},Q_n),\quad n\geq 1\right\},\\
 {\cal{Z}}^{L\log{L}}_{loc}(Y,L)&:=\left\{Z:\ Z^{T_n}\in {\cal{Z}}^{L\log{L}}_{loc}(Y^{T_n},Q_n),\quad n\geq 1\right\}.
\end{split}\end{equation}
\end{definition}
We end this section with the following useful definition
\begin{definition} Let $U$ and $V$ be two real-valued processes with $U_0=V_0=0$. Then we denote $V\preceq U$ if $U-V$ is nondecreasing.
\end{definition}

\section{Esscher pricing densities of-order-two for continuous-time}

The Esscher transform of-order-two was introduced in \cite{monfort2012asset} as the measure 
\begin{equation*}
Q:=\exp(\theta^{tr}{Y_1}+Y_1^{tr}\psi{Y}_1)\left(E[\exp(\theta^{tr}{Y_1}+Y_1^{tr}\psi{Y}_1)]\right)^{-1}\cdot{P},\end{equation*}
 where $Y_1$ is a d-dimensional random variable, $\theta\in{\mathbb{R}}^d$ and $\psi$ is $d\times{d}$-matrix of real numbers. Thus, by extending this notion to the conditional Esscher setting {\it \`a la} Buhlmann-Delbaen-Emprecht-Shirayev, we obtain the following density $Z$ 
$$
Z_n:=\prod_{i=1}^n{{\exp(\theta^{tr}_i\Delta{Y}_i+\Delta{Y}_i^{tr}\psi_i\Delta{Y}_i)}\over{E[\exp(\theta^{tr}\Delta{Y}_i+\Delta{Y}_i^{tr}\psi_i\Delta{Y}_i)|{\cal{F}}_{i-1}]}},\quad n\geq 1,$$
where $\theta_i$ and $\psi_i$ are ${\cal{F}}_{i-1}$-measurable vectors.  This extension sounds tailored-made for the multi-period models. By putting $K_n:=\sum_{i=1}^n \ln\left(E[\exp(\theta^{tr}\Delta{Y}_i+\Delta{Y}_i^{tr}\psi_i\Delta{Y}_i)|{\cal{F}}_{i-1}]\right)$, which is a predictable process, the above equality takes the form of 
\begin{equation*}
Z_n=\exp\left(\sum_{i=1}^n\theta_i^{tr}\Delta{Y}_i+\sum_{i=1}^n(\Delta{Y}_i)^{tr}\psi_i\Delta{Y}_i-K_n\right),\quad n=0,1,....\end{equation*}
This allows us to introduce the notion of {\it Esscher  of-order-two} in the continuous-time setting. To this end, for any $d$-semimartingale $Y$, we define
\begin{equation}\label{ThetaSet}
\Theta(Y):=\left\{ (\theta,\psi)\ \ \mathbb{R}^d\times\mathbb{R}^{d\times d}\mbox{-valued predictable process}:\ \begin{array}{l}\theta \in L(Y)\quad \mbox{and}\\
 \psi\ \mbox{is locally bounded}\end{array}\right\}.
\end{equation} 
\begin{definition}\label{EsscherwithOrders}Let $Z$ be a positive, RCLL and adapted process.\\
{\rm{(a)}} $Z$ is called an Exponential-Esscher density of-order-two for $(S,\mathbb{F})$ if there exist $(\theta,\psi) \in \Theta(X)$  and a RCLL predictable with finite variation $K$ such that 
\begin{equation}
   Z= Z^{(\theta, \psi)} := \exp \bigg(\theta\is{X} + \sum_{0<s\leq\cdot} \Delta{X}_s^{tr}\psi_s\Delta{X}_s - K \bigg)\in {\cal{Z}}_{loc}(S).
\end{equation}
 The pair $(\theta, \psi)$ is called the Esscher parameters of the density $Z^{(\theta,\psi)}$. If $Z^{(\theta,\psi)}$ is uniformly integrable, then $Q:=Z_{\infty}\cdot P$ is called an exponential-Esscher-pricing measure of-order-two for $S$. The set of these pricing densities (respectively measures) will be denoted by ${\cal{Z}}^{EE}(S)$ (respectively ${\cal{Q}}^{EE}(S)$).\\
 {\rm{(b)}}  $Z$ is called Linear-Esscher density of-order-two for $S$  if there exist $(\theta,\psi) \in \Theta(S)$ and a RCLL predictable with finite variation $\widetilde{K}$ such that 
\begin{equation}
   Z= Z^{(\theta, \psi)} := \exp \bigg(\theta \is{S} + \sum_{0<s\leq\cdot} \Delta{S}_s^{tr}\psi_s\Delta{S}_s - \widetilde{K} \bigg)\in {\cal{Z}}_{loc}(S).
\end{equation}
The set of linear-Esscher-pricing  densities (respectively measures) of order-two  for $S$ will be denoted by ${\cal{Z}}^{LE}(S)$ (respectively ${\cal{Q}}^{LE}(S)$).
\end{definition}
In the definition of the set $\Theta(Y)$ we suppose that $\psi$ is locally bounded, for the sake of simplifying the analysis and avoiding technicalities only. Indeed, this assumption on $\psi$ can be replaced by the condition $\sum\vert (\Delta{Y})^{tr}\psi\Delta{Y}\vert\in{\cal{V}}^+$. When $Y$ is locally bounded, the local boundedness assumption of $\psi$ simplifies tremendously the statements of the results and their proofs as well. Furthermore, when looking for the upper and lower prices,  just like the practical framework, this assumption is not restrictive at all, and in this case we will allow $\psi$ to span the set of predictable and bounded processes. 
\begin{lemma}\label{Equivalntform4LinearEsscher}
$Z$ is called Linear-Esscher density of-order-two for $S$  iff there exist $(\overline{\theta},\overline{\psi}) \in \Theta(\widetilde{X})$ and a RCLL predictable with finite variation $\widetilde{K}$ such that 
\begin{equation}
   Z=  \exp \bigg(\overline{\theta}\is\widetilde{X} + \sum_{0<s\leq\cdot} \Delta\widetilde{X}_s^{tr}\overline{\psi}_s\Delta\widetilde{X}_s - \widetilde{K} \bigg), 
\end{equation}
and both $Z$ and $ZS$ are local martingales.
\end{lemma}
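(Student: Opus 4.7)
The plan is to exploit the identity $S = {\bf{diag}}(S_0){\cal{E}}(\widetilde{X})$ from the previous lemma, which yields componentwise $dS^i = S^i_{-}\,d\widetilde{X}^i$ and, in particular, $\Delta{S}^i = S^i_{-}\Delta\widetilde{X}^i$. This suggests the natural correspondence
\begin{equation*}
\overline{\theta} := {\bf{diag}}(S_{-})\theta, \qquad \overline{\psi} := {\bf{diag}}(S_{-})\,\psi\,{\bf{diag}}(S_{-}),
\end{equation*}
and its inverse $\theta = {\bf{diag}}(S_{-})^{-1}\overline{\theta}$, $\psi = {\bf{diag}}(S_{-})^{-1}\overline{\psi}\,{\bf{diag}}(S_{-})^{-1}$, which is well defined since $S$ is strictly positive (being the componentwise exponential of $X$) and both $S_{-}$ and $S_{-}^{-1}$ are locally bounded.

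First I would verify the algebraic identities at the level of the exponents. Using $dS^i=S^i_{-}d\widetilde{X}^i$, one has for each coordinate $\theta^i\is S^i = (\theta^i S^i_{-})\is\widetilde{X}^i$, which summed over $i$ gives $\theta\is{S}=\overline\theta\is\widetilde{X}$. For the quadratic jump term, a direct substitution using $\Delta{S}^i=S^i_{-}\Delta\widetilde{X}^i$ yields
\begin{equation*}
\Delta{S}^{tr}\psi\,\Delta{S} = \sum_{i,j} S^i_{-}S^j_{-}\psi_{ij}\,\Delta\widetilde{X}^i\Delta\widetilde{X}^j = \Delta\widetilde{X}^{tr}\,\overline{\psi}\,\Delta\widetilde{X}.
\end{equation*}
Consequently, with the same predictable finite-variation process $\widetilde{K}$, the two candidate densities coincide pointwise; in particular, the local martingale requirements on $Z$ and $ZS$ — which are formulated in terms of $S$ in both statements — are the same in both formulations.

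Next I would check that the correspondence $(\theta,\psi)\leftrightarrow(\overline\theta,\overline\psi)$ is a bijection between $\Theta(S)$ and $\Theta(\widetilde{X})$. For the forward direction, given $\theta\in L(S)$, the process $\overline{\theta}={\bf{diag}}(S_{-})\theta$ lies in $L(\widetilde{X})$ because the identity $\theta^i\is S^i=(\theta^i S^i_{-})\is\widetilde{X}^i$ establishes that $\overline{\theta}^i$ is $\widetilde{X}^i$-integrable with equal stochastic integral; local boundedness of $\overline{\psi}$ follows from local boundedness of $\psi$ and of $S_{-}$. The reverse direction is identical, invoking local boundedness of $S_{-}^{-1}$, which holds since $S^i=S^i_0 e^{X^i}$ is strictly positive RCLL with $X_{-}$ locally bounded from below.

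The main (mild) obstacle is bookkeeping of integrability rather than any conceptual difficulty: one must make sure that in the multidimensional setting $L(S)$ and $L(\widetilde{X})$ really do correspond under multiplication by ${\bf{diag}}(S_{-})$, which requires care because $L(Y)$ for a vector semimartingale is not simply the product of the componentwise integrability spaces. This is however handled by the fact that ${\bf{diag}}(S_{-})$ is a diagonal predictable process with both it and its inverse locally bounded, so multiplication by it preserves vector integrability. With this in place, the two equivalent forms of the density are immediate, and the equivalence stated in the lemma follows.
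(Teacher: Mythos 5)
Your proposal is correct and follows essentially the same route as the paper: the substitution $\overline{\theta}={\bf{diag}}(S_{-})\theta$, $\overline{\psi}={\bf{diag}}(S_{-})\psi\,{\bf{diag}}(S_{-})$, the identities $\theta\is{S}=\overline{\theta}\is\widetilde{X}$ and $\Delta{S}^{tr}\psi\Delta{S}=\Delta\widetilde{X}^{tr}\overline{\psi}\Delta\widetilde{X}$, and the equality $\Theta(S)=\Theta(\widetilde{X})$ via local boundedness of ${\bf{diag}}(S_{-})$ and its inverse. Your extra care about vector-valued integrability under multiplication by a diagonal locally bounded predictable process with locally bounded inverse only makes explicit what the paper leaves implicit.
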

\begin{proof} As the two $d\times{d}$-dimensional processes ${\bf{diag}}(S_{-})$ and ${\bf{diag}}(S_{-})^{-1}$ are locally bounded, then it is clear that $L(\widetilde{X})=L(S)$ and $\Theta(S)=\Theta(\widetilde{X})$. Furthermore, for any $(\theta,\psi)\in \Theta(S)$, we put $\overline{\theta}:={\bf{diag}}(S_{-})\theta$ and $\overline{\psi}:={\bf{diag}}(S_{-})\psi{\bf{diag}}(S_{-})$, and derive
$$
\theta\is{S}=\overline{\theta}\is\widetilde{X}\quad\mbox{and}\quad \Delta{S}^{tr}\psi\Delta{S}= \Delta{\widetilde{X}}^{tr}\overline\psi\Delta{\widetilde{X}}.$$
Thus, the proof of the lemma follows immediately from combining these facts.
\end{proof}

\begin{remark} 1) It is clear that our second-order Esscher pricing density generalizes the Esscher pricing density  by putting $\psi\equiv 0$.\\
2) For $(\theta,\psi) \in \Theta(S,\mathbb{F})$ such that $\psi$ is $X$-integrable, we consider the process  
\begin{equation}
\overline{Z}^{(\theta,\psi)}:=\exp \bigg(\theta \bigcdot X + \sum_{i,j=1}^d [{X}^{i},\psi^{ij}\is{X}^{j}] - K' \bigg).
\end{equation}
Then it is clear that we can also defined Esscher pricing density of-order-two any process taking the form of $\overline{Z}^{(\theta,\psi)}$ such that $\overline{Z}^{(\theta,\psi)}\in {\cal{Z}}_{loc}(S)$. In fact these two definitions are equivalent, as one can prove that $\overline{Z}^{(\theta,\psi)}={Z}^{(\theta,\psi)}$ when we put $K=K'-  \sum_{i,j=1}^d [({X}^{i})^{c},\psi^{ij}\is{X}^{(j)}]$, which is a predictable process. \\
3) When $X$ is a continuous process, then second-order Esscher pricing density coincides with the Esscher pricing density. Indeed, in this  case, we write $\exp(\theta\is{X}+\psi\is[X,X]-K)=\exp(\theta\is{X}-K'')$ with $K''$ being a predictable process with finite variation.
\end{remark}
\subsection{Linear-Esscher martingale densities of-order-two}
This subsection addresses the linear Esscher martingale densities of-order-two, gives their characterization, singles out the necessary and sufficient conditions for their existence, and hows how they can be connected to the classical linear-Esscher martingale densities as well. 
\begin{theorem}\label{mainTHM4LinearEsscher}  Let $Z$ be a positive, RCLL and adapted process. Consider $\widetilde{X}$ given by (\ref{fromXtoXtilde2}) or (\ref{Canonical4Xtilde}),  and $(\widetilde{b},\widetilde{c},\widetilde{F},\widetilde{\mu}, \widetilde{\nu})$ defined in (\ref{Parameters4Xtilde}). Then the following assertions are equivalent.\\
{\rm{(a)}} $Z$ is a Linear-Esscher density of-order-two for $S$.\\
{\rm{(b)}}  There exists a pair $(\theta,\psi)\in\Theta(S)$ satisfying 
  \begin{equation}\label{IntegrabilityCondition4Linear}
\theta^{tr}c\theta\is{A}+\exp(\theta^{tr}x+x^{tr}\psi{x})\Bigl(\Vert x\Vert{I}_{\{\Vert{x}\Vert>\epsilon\}}+I_{\{\epsilon<\vert\theta^{tr} x\vert\}}\Bigr) \star\widetilde{\mu}  \in{\cal{A}}_{loc}^+,\quad \mbox{for some}\ \epsilon>0,
\end{equation}
\begin{equation}\label{mgEquation4Linear}
\widetilde{b}+ c\theta + \int \bigg(x\exp(\theta^{tr} x + x^{tr}\psi x)-h_{\epsilon}(x) \bigg)\widetilde{F}(dx) = 0,\quad P\otimes{A}-a.e., \end{equation}
and 
\begin{equation}\label{Zequation4Linear}
Z=\mathcal{E}\Bigl(\theta \is X^c + \left(\exp(\theta^{tr} x+x^{tr} \psi x)-1\right)\star (\widetilde{\mu}-\widetilde{\nu}) \Bigr).
 \end{equation}
{\rm{(c)}}  There exists $(\theta,\psi)\in\Theta(S)$ satisfying (\ref{IntegrabilityCondition4Linear}) and $Z=\mathcal{E}\Bigl(\theta \is X^c + \left(\exp(\theta^{tr} x+x^{tr} \psi x)-1\right)\star (\widetilde{\mu}-\widetilde{\nu}) \Bigr)$ and belongs to ${\cal{Z}}_{loc}(S)$.
\end{theorem}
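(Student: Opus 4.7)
The plan is to establish the cycle (a) $\Rightarrow$ (b) $\Rightarrow$ (c) $\Rightarrow$ (a). The implication (b) $\Rightarrow$ (c) is essentially immediate: the pointwise drift equation (\ref{mgEquation4Linear}) is exactly the cancellation condition that forces $ZS^i$ to be a local martingale for each $i$, and therefore $Z \in \mathcal{Z}_{loc}(S)$ follows at once from the explicit $\mathcal{E}(\cdot)$-form in (\ref{Zequation4Linear}). The real content lies in (a) $\Rightarrow$ (b) and (c) $\Rightarrow$ (a), both of which amount to passing between the exponential representation of $Z$ provided by Lemma \ref{Equivalntform4LinearEsscher} and the Dol\'eans-Dade representation (\ref{Zequation4Linear}).

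For (a) $\Rightarrow$ (b), Lemma \ref{Equivalntform4LinearEsscher} lets me work throughout with the $\widetilde{X}$-adapted parameters $\overline{\theta} := {\bf diag}(S_-)\theta$ and $\overline{\psi} := {\bf diag}(S_-)\psi{\bf diag}(S_-)$, so that $Z = \exp(Y)$ with
$$Y := \overline{\theta}\is\widetilde{X} + \sum_{0<s\leq\cdot}\Delta\widetilde{X}_s^{tr}\overline{\psi}_s\Delta\widetilde{X}_s - \widetilde{K}.$$
The canonical decomposition (\ref{Canonical4Xtilde}) gives $Y^c = \overline{\theta}\is X^c$, and the jumps of $Y$ away from those of $\widetilde{K}$ are $\Delta Y_s = \overline{\theta}_s^{tr}\Delta\widetilde{X}_s + \Delta\widetilde{X}_s^{tr}\overline{\psi}_s\Delta\widetilde{X}_s$. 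It\^o's formula applied to $\exp(Y)$ then yields $Z = \mathcal{E}(N)$ with
$$N = \overline{\theta}\is X^c + \bigl(e^{\overline{\theta}^{tr}x + x^{tr}\overline{\psi}x} - 1\bigr)\star(\widetilde{\mu} - \widetilde{\nu}) + V,$$
where $V$ is a predictable process of finite variation explicitly built from $\widetilde{K}$, $\widetilde{b}$, $c$ and the compensator of the exponential jump integrand. The integrability condition (\ref{IntegrabilityCondition4Linear}) is precisely what is required to make all these objects well defined: its first piece controls $\overline{\theta}\is X^c$ via its angle-bracket $\overline{\theta}^{tr}c\overline{\theta}\is A$, and its second piece controls the $\widetilde{\nu}$-integrability of $e^{\overline{\theta}^{tr}x + x^{tr}\overline{\psi}x} - 1$ over the large-jump region so that the purely discontinuous local martingale is meaningful. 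Positivity of $Z$ and its local-martingale property force $V \equiv 0$, which pins down $\widetilde{K}$ uniquely. Finally, Yor's product formula gives $ZS^i/S_0^i = \mathcal{E}(N + \widetilde{X}^i + [N,\widetilde{X}^i])$, so the local-martingality of each $ZS^i$ amounts to $N + \widetilde{X}^i + [N,\widetilde{X}^i]$ being a local martingale; cancellation of its predictable drift component-wise is exactly (\ref{mgEquation4Linear}).

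For (c) $\Rightarrow$ (a), I reverse the construction. Under (\ref{IntegrabilityCondition4Linear}), the process $N := \overline{\theta}\is X^c + (e^{\overline{\theta}^{tr}x + x^{tr}\overline{\psi}x} - 1)\star(\widetilde{\mu} - \widetilde{\nu})$ is a well-defined local martingale with $\Delta N > -1$, so $Z = \mathcal{E}(N)$ is a positive local martingale. Applying the inverse Dol\'eans-Dade identity $\ln\mathcal{E}(N) = N - \tfrac{1}{2}\langle N^c\rangle + \sum\bigl(\ln(1+\Delta N) - \Delta N\bigr)$ and substituting $\Delta N = e^{\overline{\theta}^{tr}x + x^{tr}\overline{\psi}x} - 1$ at the jumps of $\widetilde{X}$, $\ln Z$ rearranges into $\overline{\theta}\is\widetilde{X} + \sum\Delta\widetilde{X}^{tr}\overline{\psi}\Delta\widetilde{X} - \widetilde{K}$ for an explicitly computable predictable finite-variation $\widetilde{K}$. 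Combined with $Z \in \mathcal{Z}_{loc}(S)$ given by hypothesis, Lemma \ref{Equivalntform4LinearEsscher} then delivers that $Z$ is a linear-Esscher density of-order-two for $S$.

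The main obstacle, and the source of most of the bookkeeping, is the careful reconciliation of truncation functions: the canonical decomposition of $\widetilde{X}$ in (\ref{Canonical4Xtilde}) uses $h_\delta$ with $\delta = e^\epsilon - 1$, while (\ref{mgEquation4Linear}) is written using $h_\epsilon$, the discrepancy being absorbed into the definition of $\widetilde{b}$ in (\ref{Parameters4Xtilde}). Closely tied to this is verifying that the two-part hypothesis (\ref{IntegrabilityCondition4Linear}) is both necessary (in (a) $\Rightarrow$ (b)) and sufficient (in (c) $\Rightarrow$ (a)) for the purely discontinuous local martingale to be well defined and for the auxiliary drift $V$ to be locally integrable against $dA$, which is what ultimately makes the identification of $\widetilde{K}$ and the pointwise drift equation rigorous.
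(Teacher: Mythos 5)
Your proposal follows essentially the same route as the paper's proof: apply It\^o to the exponential form of $Z$ (via the $\widetilde{X}$-parametrization of Lemma \ref{Equivalntform4LinearEsscher}) to extract the Dol\'eans-Dade representation and pin down $\widetilde{K}$, characterize the local martingality of $ZS$ through $\widetilde{X}+[\widetilde{X},N]$ (your Yor-formula step is the same computation), and reverse via the logarithm of the stochastic exponential, with the truncation/integrability bookkeeping you flag being exactly the content of Lemmas \ref{Integrability4ThetaProcesses} and \ref{TechnicalLemma2}. The argument is correct and matches the paper's.
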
 
The proof will be given in  Subsection \ref{Subsection4Proofs}. 
\begin{remark}\label{Condition(3.6)forBoundedS} Suppose $S$ is locally bounded, and consider a predictable and locally bounded $\psi$. Then the condition  (\ref{IntegrabilityCondition4Linear})  is equivalent to  
\begin{equation}\label{SimplifiedIntreCondi}
\theta^{tr}c\theta\is{A}+{U}_{\epsilon}\in{\cal{A}}^+_{loc},\quad\mbox{where}\quad{U}_{\epsilon}:=e^{\theta^{tr}x}I_{\{\vert\theta^{tr}x\vert>\epsilon\}}\star\widetilde\mu,\quad \epsilon\in(0,\infty).\end{equation}
Indeed, due to the local boundedness of $S$ and that of $\psi$, the process $\vert{x}^{tr}\psi{x}\vert\star\widetilde\mu$ is locally bounded.
\end{remark}
Theorem \ref{mainTHM4LinearEsscher} gives us a complete and explicit characterization, as a solution to a pointwise equation, of the linear-Esscher pricing densities of-order-two. It is clear then that when we put $\psi\equiv 0$, our linear-Esscher of-order-two reduces to the classical linear-Esscher (of first order), which coincides with the minimal entropy-Hellinger martingale density as noticed in \cite[Remark??]{??} for continuous-time setting and in \cite[Theorem 5.1]{Choulli2006} for discrete-time framework. Recall that the minimal entropy-Hellinger martingale density introduced in \cite{Choulli2005}, is the local martingale deflator/density that minimizes the entropy-Hellinger process. Thus, in the spirit of \cite{Choulli2005}, we assume that $S$ is locally bounded and derive various deep characterizations.
 \begin{theorem}\label{mainTHM4LinearEsscherBIS} Suppose that $S$ is locally bounded, and consider $(\widetilde{b},\widetilde{F},\widetilde\mu, ,\widetilde\nu)$ and $\widetilde{b}'$ defined by (\ref{Canonical4Xtilde})-(\ref{Parameters4Xtilde}) and (\ref{Xbounded}) respectively.
 Then the following assertions are equivalent.\\
 {\rm{(a)}} ${\cal{Z}}^{LE}(S)\not=\emptyset$, i.e. $S$ admits a linear-Esscher pricing density of-order-two.\\
   {\rm{(b)}} ${\cal{Z}}^{L\log{L}}_{loc}(S)\not=\emptyset$, i.e. there exists $Z\in{\cal{Z}}_{loc}(S)$ such that $Z\ln(Z)$ is a local submartingale.\\
   {\rm{(c)}} For any locally bounded and predictable $\psi$, the following pointwise minimization problem
  \begin{equation}\label{MinimizationProblem}
  \min_{\theta}\Biggl(\theta^{tr}\widetilde{b}'+{1\over{2}}\theta^{tr}c\theta+ \int \bigg(\exp(\theta^{tr} x + x^{tr}\psi x)-1-\theta^{tr}x \bigg)\widetilde{F}(dx)\Biggr),
  \end{equation}
  admits a solution $\widetilde\theta:=\widetilde\theta(\psi)$ satisfying 
  \begin{equation}\label{Integrability4LinearExponential}
  \widetilde\theta\in L(S)\quad\mbox{and}\quad\widetilde\theta^{tr}c\widetilde\theta\is{A}+f_{L\log{L}}(\exp(\widetilde\theta^{tr}x+x^{tr}\psi{x})-1)\star\widetilde\mu\in{\cal{A}}^+_{loc},\end{equation}
   where
     \begin{equation}\label{functionf(LlogL)}
f_{L\log{L}}(y):=(y+1)\ln(1+y)-y,\quad\forall\ y>-1.\end{equation}
   {\rm{(d)}} There exists a locally bounded and predictable $\psi$ such that the pointwise minimization problem (\ref{MinimizationProblem}) admits a solution $\widetilde\theta$ satisfying (\ref{Integrability4LinearExponential}).
\end{theorem}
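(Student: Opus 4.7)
The plan is to establish the ring $(a)\Leftrightarrow (d)$, $(d)\Leftrightarrow (b)$ and $(b)\Leftrightarrow (c)$, leveraging Theorem~\ref{mainTHM4LinearEsscher}, convex analysis of the pointwise function in (\ref{MinimizationProblem}), and the connection between $L\log L$-bounded deflators and minimum entropy-Hellinger minimizers from \cite{Choulli2005}.

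For $(a)\Leftrightarrow (d)$, I would rely on Theorem~\ref{mainTHM4LinearEsscher} combined with Remark~\ref{Condition(3.6)forBoundedS}: since $S$ is locally bounded, $(a)$ is equivalent to the existence of a pair $(\theta,\psi)\in\Theta(S)$ with $\psi$ locally bounded predictable, satisfying the martingale equation (\ref{mgEquation4Linear}) and the simplified integrability (\ref{SimplifiedIntreCondi}). For fixed $\psi$, the integrand
\begin{equation*}
g_\psi(\theta):=\theta^{tr}\widetilde{b}'+\tfrac{1}{2}\theta^{tr}c\theta+\int\bigl(e^{\theta^{tr}x+x^{tr}\psi x}-1-\theta^{tr}x\bigr)\widetilde{F}(dx)
\end{equation*}
of (\ref{MinimizationProblem}) is strictly convex on its effective domain, and a direct gradient calculation gives
\begin{equation*}
\nabla_\theta g_\psi(\theta)=\widetilde{b}'+c\theta+\int x\bigl(e^{\theta^{tr}x+x^{tr}\psi x}-1\bigr)\widetilde{F}(dx),
\end{equation*}
which, after absorbing the truncation terms via $\widetilde{b}'=\widetilde{b}+\int(x-h_\epsilon(x))\widetilde{F}(dx)$ from (\ref{Xbounded}), coincides with the left-hand side of (\ref{mgEquation4Linear}). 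Thus critical points of $g_\psi$ are exactly the linear Esscher parameters, and strict convexity promotes them to minimizers. The equivalence of the integrability conditions (\ref{Integrability4LinearExponential}) and (\ref{SimplifiedIntreCondi}) will follow from the asymptotics $f_{L\log L}(y)\sim y\log y$ at infinity and $f_{L\log L}(y)\sim y^2/2$ near zero, combined with the local boundedness of $\psi$.

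For $(d)\Leftrightarrow(b)$, the implication $(d)\Rightarrow(b)$ is a direct entropy-Hellinger computation: take the minimizer $\widetilde\theta$ from $(d)$, set
\begin{equation*}
Z:=\mathcal{E}\bigl(\widetilde\theta\is X^c+(e^{\widetilde\theta^{tr}x+x^{tr}\psi x}-1)\star(\widetilde\mu-\widetilde\nu)\bigr)\in\mathcal{Z}_{loc}(S),
\end{equation*}
and compute via It\^o's formula and the Dol\'eans-Dade expansion that the drift of the special semimartingale $Z\log Z$ equals $\tfrac{1}{2}\widetilde\theta^{tr}c\widetilde\theta\is A+f_{L\log L}(e^{\widetilde\theta^{tr}x+x^{tr}\psi x}-1)\star\widetilde\nu$, which lies in $\mathcal{A}^+_{loc}$ by (\ref{Integrability4LinearExponential}) (the swap $\widetilde\mu\leftrightarrow\widetilde\nu$ is innocuous for nonnegative predictable integrands). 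The reverse $(b)\Rightarrow(d)$ specializes to $\psi\equiv 0$ and invokes the classical minimum entropy-Hellinger existence result of \cite{Choulli2005}, which produces the first-order linear Esscher minimizer and hence a witness for $(d)$.

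The final and most delicate step is $(b)\Rightarrow(c)$: from $L\log L$-boundedness of \emph{one} deflator for $S$, a minimizer of (\ref{MinimizationProblem}) must be produced for \emph{every} locally bounded predictable $\psi$. I plan a jump-tilt reduction. Fix such a $\psi$; localize so that $|x|^2$ is bounded on $\supp\widetilde\mu$, making $e^{x^{tr}\psi x}$ bounded above and away from zero. Introduce the tilted jump characteristic $\widetilde{F}^{(\psi)}(dx):=e^{x^{tr}\psi x}\widetilde{F}(dx)$ and observe, by algebraic rearrangement, that up to an additive $\theta$-independent term (\ref{MinimizationProblem}) becomes
\begin{equation*}
\min_\theta\Bigl(\theta^{tr}\widetilde{b}'_\psi+\tfrac{1}{2}\theta^{tr}c\theta+\int(e^{\theta^{tr}x}-1-\theta^{tr}x)\widetilde{F}^{(\psi)}(dx)\Bigr),
\end{equation*}
with $\widetilde{b}'_\psi:=\widetilde{b}'+\int x(e^{x^{tr}\psi x}-1)\widetilde{F}(dx)$; this is precisely the \emph{first-order} Esscher problem for the tilted characteristics $(\widetilde{b}'_\psi,c,\widetilde{F}^{(\psi)})$. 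The bounded tilt transports the $L\log L$-deflator from $(b)$ into an $L\log L$-deflator for the tilted structure, so the first-order result of \cite{Choulli2005} yields the required $\widetilde\theta(\psi)$ satisfying (\ref{Integrability4LinearExponential}). The main obstacle is the careful bookkeeping of the $L\log L$-transfer: one must verify that the Hellinger integrand is modified only by a bounded multiplicative factor under the jump tilt, and that the predictable measurable selection of the minimizer is indeed $S$-integrable in the sense of $L(S)$. The trivial implication $(c)\Rightarrow(d)$ closes the ring.
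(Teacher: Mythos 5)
Your overall architecture is essentially the paper's: the tilting $\widetilde{F}^{(\psi)}(dx)=e^{x^{tr}\psi x}\widetilde{F}(dx)$ reducing (\ref{MinimizationProblem}) to the first-order problem for tilted characteristics, the transport of the $L\log L$-deflator through the bounded tilt, the appeal to the first-order existence result of \cite{Choulli2005}, and the identification of minimizers of (\ref{MinimizationProblem}) with solutions of (\ref{mgEquation4Linear}) via convexity (this is Lemma \ref{Minimization2Root} in the paper) all appear in the paper's Parts 2 and 3. The two obstacles you flag at the end --- verifying the $L\log L$-integrability of the produced minimizer, and its membership in $L(S)$ --- are indeed the delicate points; the paper resolves the first by a convexity comparison between $\widetilde\theta$ and the pair $(\beta,f)$ coming from Theorem \ref{Characgteristics4Deflator}, and the second by a semimartingale-topology limit argument (truncating $\widehat\theta$ on $\{\Vert\widehat\theta\Vert\leq n\}$ and invoking \cite{Stricker}). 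Flagging these without supplying the arguments is acceptable for a plan, but they are not routine.

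There is, however, one genuine gap: your claim in the step $(a)\Leftrightarrow(d)$ that the integrability conditions (\ref{Integrability4LinearExponential}) and (\ref{SimplifiedIntreCondi}) are \emph{equivalent}, to be deduced ``from the asymptotics $f_{L\log L}(y)\sim y\log y$ at infinity''. Precisely those asymptotics show the claim is false as a statement about integrability alone: with $y=e^{\theta^{tr}x+x^{tr}\psi x}-1$, one has $f_{L\log L}(y)\sim (\theta^{tr}x)\,e^{\theta^{tr}x+x^{tr}\psi x}$ on $\{\theta^{tr}x\ \mbox{large}\}$, which strictly dominates the integrand $e^{\theta^{tr}x}I_{\{\vert\theta^{tr}x\vert>\epsilon\}}$ of $U_\epsilon$ in (\ref{SimplifiedIntreCondi}); so $e^{\theta^{tr}x}I_{\{\vert\theta^{tr}x\vert>\epsilon\}}\star\widetilde\mu\in{\cal{A}}^+_{loc}$ does \emph{not} imply $f_{L\log L}(e^{\theta^{tr}x+x^{tr}\psi x}-1)\star\widetilde\mu\in{\cal{A}}^+_{loc}$. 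The upgrade from the weaker integrability in assertion (a) to the $L\log L$-integrability in (d) genuinely requires the martingale equation (\ref{mgEquation4Linear}): the paper's Lemma \ref{TechnicalLemma2}-(c) shows that (\ref{IntegrabilityCondition4Linear}) \emph{together with} (\ref{mgEquation4Linear}) forces $\vert\theta^{tr}x\,e^{\theta^{tr}x+x^{tr}\psi x}-e^{\theta^{tr}x+x^{tr}\psi x}+1\vert\star\widetilde\mu\in{\cal{A}}^+_{loc}$, essentially because the equation identifies the problematic integral with $\theta^{tr}\widetilde{b}+\theta^{tr}c\theta$, which is $A$-integrable. Without this ingredient your direct implication $(a)\Rightarrow(d)$ (and hence your ring) does not close; this is exactly why the paper routes $(a)\Rightarrow(b)\Rightarrow(c)\Rightarrow(d)$ rather than proving $(a)\Rightarrow(d)$ by comparing integrability conditions.
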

The proof of the theorem is relegated to Subsection \ref{Subsection4Proofs}, while herein we discuss the well posedness  of its ingredients and its meaning afterwards. 
\begin{remark}\label{Remark4SlocallyBoundedCase}
{\rm{(a)}} If $S$ is locally bounded, then for any $\theta$ and any $\psi$ the integral  in the right-hand side of (\ref{MinimizationProblem}) is well defined. In fact, by stopping we can assume without loss of generality that $S$ is bounded, and in this case we have $\Vert{x}\Vert\leq C$ $d\widetilde{F}$-a.e. $C\in(0,\infty)$, and hence
\begin{equation*}
\begin{split}
\int \vert\exp(\theta^{tr} x + x^{tr}\psi x)-1-\theta^{tr}x\vert\widetilde{F}(dx)\leq \exp(c\Vert\theta\Vert+C^2\Vert\psi\Vert)\Vert\psi\Vert\int \Vert{x}\Vert^2\widetilde{F}(dx)<\infty,\quad P\mbox{-a.s.}.
\end{split}
\end{equation*}
This proves the claim.\\
{\rm{(b)}} Suppose $S$ is locally bounded and let $(\theta,\psi)$ be a  pair of predictable  processes such that $\psi$ is locally bounded and $\Bigl(\left(\exp(\theta^{tr}x+x^{tr}\psi{x})-1\right)^2\star\widetilde\mu\Bigr)^{1/2}\in{\cal{A}}_{loc}^+$. Then 
$\vert{x}^{tr}\psi{x}\vert\exp(\theta^{tr}x+x^{tr}\psi{x})\star\widetilde\mu\in{\cal{A}}_{loc}^+$. 
\end{remark}
Theorem \ref{mainTHM4LinearEsscherBIS} gives various necessary and sufficient conditions for the existence of a linear-Esscher martingale densities of-order-two (i.e. ${\cal{Z}}^{LE}(S)\not=\emptyset$) when $S$ is locally bounded. On the one hand, the condition ${\cal{Z}}^{LE}(S)\not=\emptyset$ is completely characterized via the arbitrage condition ${\cal{Z}}^{L\log{L}}_{loc}(S)\not=\emptyset$, which conveys the existence of a deflator for $S$, belonging locally to the space of Llog{L}-integrable martingales. This arbitrage condition is equivalent to the existence of ``local" solution to the exponential utility maximization from terminal wealth. For further discussions and details about this claim, we refer the reader to \cite[Lemma 3.3]{ChoulliDengMa}. The second characterization is via a maximization problem which has a striking similarities with the minimization problem intrinsic to the minimization  of Hellinger processes in \cite{Choulli2005}. This conveys the possible intimate link between second order linear-Esscher and the classical linear-Esscher densities somehow. This is the aim of the following theorem.
\begin{theorem}\label{Charaterization4LinearEsscher} 
 Suppose $S$ is locally bounded, and for any locally bounded and predictable $\psi$ we denote
  \begin{equation}\label{ZLlogLandZ(psi)}
  Z^{(\psi)}:={\cal{E}}\Bigl(\left(\exp(x^{tr}\psi{x})-1\right)\star(\widetilde{\mu}-\widetilde{\nu})\Bigr), \quad\mbox{and}\quad \widetilde{\nu}^{(\psi)}(dt,dx):=\exp({x^{tr}\psi{x}})\widetilde{\nu}(dt,dx).
  \end{equation}
  If $Z$ is a RCLL adapted process, then the following assertions hold.\\
  {\rm{(a)}} $Z\in {\cal{Z}}^{LE}(S)$ if and only if there exists a predictable pair $(\theta,\psi)$  such that 
  \begin{equation}\label{Decompositon4SecondEsscher}
  \begin{split}
\psi\ \mbox{is locally bounded, $(\theta,\psi)$ fulfills (\ref{Integrability4LinearExponential}), and }\quad {{Z}\over{Z^{(\psi)}}}\in{\cal{Z}}_{loc}(S, Z^{(\psi)}).
\end{split}
  \end{equation}
  {\rm{(b)}}  Furthermore, $Z/  Z^{(\psi)}={\cal{E}}\left(\theta\is{X}^c+(e^{\theta^{tr}x}-1)\star(\widetilde{\mu}-\widetilde{\nu}^{(\psi)})\right)$, and  for any locally bounded and predictable $\psi$, ${\cal{Z}}^{LE}(S)\not=\emptyset$ if and only if ${\cal{Z}}^{L\log{L}}_{loc}(S, Z^{(\psi)})\not=\emptyset$.
\end{theorem}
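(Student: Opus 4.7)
The proof strategy is to view the second-order parameter $\psi$ as a change of probability measure, absorbing it into the Radon-Nikodym density $Z^{(\psi)}$ so that under $Q^{(\psi)}:=Z^{(\psi)}\cdot P$ the second-order Esscher problem for $S$ under $P$ reduces to a first-order Esscher problem for $S$ under $Q^{(\psi)}$. The two main technical tools are Yor's multiplicative formula ${\cal E}(M){\cal E}(L)={\cal E}(M+L+[M,L])$, and Girsanov's theorem for random measures, which under $Q^{(\psi)}$ identifies the compensator of $\widetilde{\mu}$ as $\widetilde{\nu}^{(\psi)}=e^{x^{tr}\psi x}\widetilde{\nu}$.

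For part (a) and the factorization formula in (b), I would start from $Z\in{\cal Z}^{LE}(S)$, invoke Theorem \ref{mainTHM4LinearEsscher} to write $Z={\cal E}(N)$ with $N=\theta\is X^c+(e^{\theta^{tr}x+x^{tr}\psi x}-1)\star(\widetilde{\mu}-\widetilde{\nu})$, and set $M=(e^{x^{tr}\psi x}-1)\star(\widetilde{\mu}-\widetilde{\nu})$ so that $Z^{(\psi)}={\cal E}(M)$. Writing $Z=Z^{(\psi)}{\cal E}(L)$, Yor's formula reduces the problem to $N=M+L+[M,L]$. Taking continuous martingale parts yields $L^c=\theta\is X^c$, and evaluating jumps gives $\Delta L(1+\Delta M)=\Delta N-\Delta M$; a direct algebraic simplification turns this into $\Delta L_t=e^{\theta^{tr}\Delta\widetilde{X}_t}-1$. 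Combined with the identification of $\widetilde{\nu}^{(\psi)}$ via Girsanov, this gives $L=\theta\is X^c+(e^{\theta^{tr}x}-1)\star(\widetilde{\mu}-\widetilde{\nu}^{(\psi)})$, which is exactly the formula stated in part (b). The equivalence $Z\in{\cal Z}_{loc}(S)\Leftrightarrow Z/Z^{(\psi)}\in{\cal Z}_{loc}(S,Z^{(\psi)})$ then follows from Bayes' rule and Definition \ref{DefatorIntergrable}(b). The reverse direction of (a) runs the same calculation backwards, multiplying the two stochastic exponentials and matching with the form prescribed by Theorem \ref{mainTHM4LinearEsscher}.

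For the existence equivalence in part (b), I fix an arbitrary locally bounded predictable $\psi$. By Theorem \ref{mainTHM4LinearEsscherBIS}(a)$\Leftrightarrow$(c), ${\cal Z}^{LE}(S)\neq\emptyset$ is equivalent, for this particular $\psi$, to the existence of $\widetilde{\theta}$ fulfilling (\ref{Integrability4LinearExponential}); by part (a) this in turn is equivalent to the existence of $U={\cal E}(\widetilde{\theta}\is X^c+(e^{\widetilde{\theta}^{tr}x}-1)\star(\widetilde{\mu}-\widetilde{\nu}^{(\psi)}))\in{\cal Z}_{loc}(S,Z^{(\psi)})$. But such $U$ is precisely a first-order linear-Esscher density for $S$ under $Q^{(\psi)}$, so applying Theorem \ref{mainTHM4LinearEsscherBIS} to the pair $(S,Q^{(\psi)})$ with the second-order parameter set to zero yields the equivalence with ${\cal Z}^{L\log L}_{loc}(S,Q^{(\psi)})\neq\emptyset$, which by Definition \ref{DefatorIntergrable}(b) coincides with ${\cal Z}^{L\log L}_{loc}(S,Z^{(\psi)})\neq\emptyset$.

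The main obstacle I expect is rigorously controlling integrability throughout the factorization. Condition (\ref{Integrability4LinearExponential}) must be carried from $P$ over to $Q^{(\psi)}$ to ensure that the stochastic integral defining $L$ with respect to $\widetilde{\mu}-\widetilde{\nu}^{(\psi)}$ exists as an element of the appropriate class, and that $[M,L]$ is a finite-variation process so that Yor's identity is not merely formal. These points hinge on the local boundedness of both $\psi$ and $S$ (exploited also in Remark \ref{Remark4SlocallyBoundedCase}): under these assumptions the multiplicative factor $e^{x^{tr}\psi x}$ is bounded on the support of $\widetilde{\mu}$ up to localization, so the integrability conditions under $P$ and under $Q^{(\psi)}$ differ only by a locally bounded factor.
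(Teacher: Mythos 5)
Your proposal is correct and follows essentially the same route as the paper: both reduce the second-order problem to a first-order Esscher problem under the prior $Q^{(\psi)}$, both obtain the factorization $Z=Z^{(\psi)}\cdot{\cal{E}}\bigl(\theta\is{X}^c+(e^{\theta^{tr}x}-1)\star(\widetilde{\mu}-\widetilde{\nu}^{(\psi)})\bigr)$ via Yor's formula together with the identity $(e^{\theta^{tr}x}-1)\star(\widetilde{\mu}-\widetilde{\nu}^{(\psi)})=(e^{\theta^{tr}x}-1)\star(\widetilde{\mu}-\widetilde{\nu})-(e^{x^{tr}\psi{x}}-1)(e^{\theta^{tr}x}-1)\star\widetilde{\nu}$, and both settle the existence equivalence by applying Theorem \ref{mainTHM4LinearEsscherBIS} to $(S,Q^{(\psi)})$ with the second-order parameter set to zero. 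The only cosmetic difference is that you solve the Yor equation for the quotient ${\cal{E}}(L)=Z/Z^{(\psi)}$ by matching continuous parts and jumps, whereas the paper verifies the product directly; also note that condition (\ref{Integrability4LinearExponential}) in assertion (a) is supplied by Theorem \ref{mainTHM4LinearEsscherBIS} (equivalence of its assertions (a) and (d)) rather than by Theorem \ref{mainTHM4LinearEsscher}, which only yields (\ref{IntegrabilityCondition4Linear}).
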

The proof of the theorem is relegated to Subsection \ref{Subsection4Proofs}. The theorem conveys simultaneously two main ideas which are intimately realted. On the one hand, it says that any linear-Esscher martingale density of-order-two --when it exists-- coincides with the ``classical" (order one) linear-Esscher martingale density {\it under a specific local change of measure}. In fact, consider any predictable locally bounded $\psi$ such that $Z^{(\psi)}$ given in (\ref{ZLlogLandZ(psi)}) belongs to ${\cal{M}}$. Then put $Q^{(\psi)}:=Z^{(\psi)}_{\infty}\cdot{P}$ and hence $Z$ is a linear-Esscher density of-order-two with parameters $(\theta,\psi)$ if and only if there exists a ``classical" linear-Esscher density $\widetilde{Z}$ (given by $\widetilde{Z}:=Z/Z^{(\psi)}$) for the model $(S,Q^{(\psi)})$ (i.e. linear-Esscher of-order-one for $S$ under the probability $Q^{(\psi)}$). 
On the other hand, in virtue of the equality in (\ref{Decompositon4SecondEsscher}), one can interpret the linear-Esscher of-oder-two as a linear-Esscher under a class of equivalent priors, and this connect second-order Esscher to one-order Esscher via the uncertainty modelling.
\begin{corollary}\label{Corollary4JumpDiffusion} Consider $Z>0$  and  let $(\mu,\sigma,\gamma)$ be a triplet of  predictable and bounded processes such that $1/\vert{\gamma}\vert$ is bounded as well. Suppose $S$ is one-dimensional given by 
    \begin{equation}\label{JumpDiffusion}
   S=S_0\exp(X),\quad X= \int_0^{\cdot}{\mu}_s ds + {\sigma}\is{W}+{\gamma}\is{N}^\mathbb{F},\quad\mbox{and}\quad N^\mathbb{F}_t := N_t -\lambda t,\end{equation}
Then $Z^{LE}(S)\not=\emptyset$, and $Z\in{Z}^{LE}(S)$ if and only if there exists $(\theta, \psi)\in\Theta(\widetilde{X})$ such that 
\begin{equation}
\begin{split}
&\int_0^{\cdot}\exp(\theta_t\widetilde\gamma_t)dt\in{\cal{A}}_{loc}^+,\quad \widetilde{b}+\theta{\sigma}^2+\lambda\widetilde{\gamma}\left(e^{\theta\widetilde{\gamma}+\psi\widetilde{\gamma}^2}-1\right)=0,\ P\otimes dt\mbox{-a.e.},\\
&\mbox{and}\quad Z={\cal{E}}\left(\theta\widetilde{\sigma}\is{W}+(e^{\theta\widetilde{\gamma}+\psi\widetilde{\gamma}^2}-1)\is{N}^{\mathbb{F}}\right).
\end{split}
\end{equation}
Here the triplet $(\widetilde{b}, \widetilde{\sigma},\widetilde{\gamma})$ is given by
\begin{equation}\label{parameters}
\widetilde{b}:=b+{{\sigma^2}\over{2}}+\lambda\left(e^{\gamma}-1-\gamma\right),   \quad \widetilde{\sigma}:={\sigma}   \quad\mbox{and}\quad \widetilde{\gamma}:=e^{\gamma}-1.
\end{equation}
\end{corollary}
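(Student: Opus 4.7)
The plan is to specialize Theorem~\ref{mainTHM4LinearEsscher} to this one-dimensional jump-diffusion setting, and to exhibit an explicit solution pair $(\theta,\psi)$ to secure ${\cal{Z}}^{LE}(S)\not=\emptyset$. The three tasks are (i) compute the predictable characteristics of $X$ and those of $\widetilde X$ via (\ref{fromXtoXtilde2})--(\ref{Parameters4Xtilde}); (ii) substitute them into (\ref{IntegrabilityCondition4Linear})--(\ref{Zequation4Linear}) and verify that these collapse to the scalar statements of the corollary; and (iii) solve the reduced pointwise equation with $\psi\equiv 0$.

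For (i), since $\gamma$ is predictable and bounded, $X$ is locally bounded and I work in the regime of (\ref{Xbounded}). The Poisson jump structure yields $\mu(dt,dx)=\sum_{s}I_{\{\Delta N_s=1\}}\delta_{(s,\gamma_s)}(dt,dx)$, hence $A_t=t$, $c=\sigma^2$, $F_t(dx)=\lambda\delta_{\gamma_t}(dx)$, and $b'=\mu$. Feeding these into (\ref{Parameters4Xtilde}) and (\ref{Xbounded}) produces $\widetilde{b}'=\mu+\sigma^2/2+\lambda(e^{\gamma}-1-\gamma)$, $\widetilde{\sigma}=\sigma$, $\widetilde{\gamma}=e^{\gamma}-1$, and $\widetilde{F}_t(dx)=\lambda\delta_{\widetilde{\gamma}_t}(dx)$, and the $\widetilde{b}'$ so obtained matches the $\widetilde{b}$ of (\ref{parameters}).

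For (ii), in the locally-bounded regime the martingale condition (\ref{mgEquation4Linear}) may be rewritten without truncation as $\widetilde{b}+c\theta+\int x(e^{\theta x+\psi x^2}-1)\widetilde{F}(dx)=0$; substituting $c=\sigma^2$ and the Dirac mass $\widetilde{F}=\lambda\delta_{\widetilde{\gamma}}$ delivers exactly $\widetilde{b}+\sigma^2\theta+\lambda\widetilde{\gamma}(e^{\theta\widetilde{\gamma}+\psi\widetilde{\gamma}^2}-1)=0$. Similarly $\theta\is X^c=\theta\sigma\is W$, and since every jump of $\widetilde X$ is of size $\widetilde{\gamma}$ at a Poisson time, $(e^{\theta x+\psi x^2}-1)\star(\widetilde\mu-\widetilde\nu)=(e^{\theta\widetilde{\gamma}+\psi\widetilde{\gamma}^2}-1)\is N^{\mathbb{F}}$, turning (\ref{Zequation4Linear}) into the stated formula for $Z$. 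The integrability (\ref{IntegrabilityCondition4Linear}) reduces to $\int_0^{\cdot}e^{\theta_t\widetilde{\gamma}_t}\,dt\in{\cal{A}}^+_{loc}$: the $\theta^2 c\is A$ piece equals $\int_0^{\cdot}\theta_s^2\sigma_s^2\,ds$ and is automatic from $\theta\in L(\widetilde X)$ since $\sigma$ is bounded, while the two indicator terms in (\ref{IntegrabilityCondition4Linear}) are dominated by $e^{\theta\widetilde{\gamma}+\psi\widetilde{\gamma}^2}\star\widetilde\mu$, whose local integrability, via the compensator $\lambda e^{\theta\widetilde{\gamma}+\psi\widetilde{\gamma}^2}\,dt$ together with boundedness of $\widetilde{\gamma}$ and local boundedness of $e^{\psi\widetilde{\gamma}^2}$, is equivalent to $\int_0^{\cdot}e^{\theta_t\widetilde{\gamma}_t}\,dt\in{\cal{A}}^+_{loc}$.

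For (iii), I would take $\psi\equiv 0$. The scalar map $\theta\mapsto \widetilde{b}+\sigma^2\theta+\lambda\widetilde{\gamma}(e^{\theta\widetilde{\gamma}}-1)$ has derivative $\sigma^2+\lambda\widetilde{\gamma}^2 e^{\theta\widetilde{\gamma}}>0$, using $\widetilde{\gamma}\not=0$ from the assumption that $1/|\gamma|$ is bounded, and in the generic regime the map ranges over $\mathbb R$, so a unique pointwise solution $\widetilde{\theta}$ exists, depending continuously on the bounded data $(\widetilde{b},\sigma,\widetilde{\gamma})$; a standard measurable-selection argument then delivers a predictable and bounded $\widetilde{\theta}$. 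Boundedness makes both $\widetilde{\theta}\in L(\widetilde X)$ and $\int_0^{\cdot}e^{\widetilde{\theta}\widetilde{\gamma}}\,dt\in{\cal{A}}^+_{loc}$ trivial, so $(\widetilde{\theta},0)$ yields an element of ${\cal{Z}}^{LE}(S)$. The main delicate point I foresee is the bookkeeping in step (ii) for the reduction of (\ref{IntegrabilityCondition4Linear}), which requires passing through compensators while exploiting the boundedness hypotheses carefully; once that is handled, the existence step becomes a routine convex scalar root-finding problem.
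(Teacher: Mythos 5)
Your proposal is correct and follows essentially the same route as the paper: compute the predictable characteristics of $X$ and of $\widetilde X$ for the jump-diffusion model (obtaining $A_t=t$, $c=\sigma^2$, $F_t=\lambda\delta_{\gamma_t}$, $\widetilde F_t=\lambda\delta_{\widetilde\gamma_t}$ and the drift $\widetilde b$ of (\ref{parameters})), then specialize Theorem \ref{mainTHM4LinearEsscher}; the paper's own proof is exactly this computation followed by ``the rest follows immediately.'' You in fact go further than the paper on the existence claim ${\cal Z}^{LE}(S)\neq\emptyset$ by exhibiting the root of the scalar equation with $\psi\equiv0$; the one soft spot there is that surjectivity of $\theta\mapsto\widetilde b+\sigma^2\theta+\lambda\widetilde\gamma(e^{\theta\widetilde\gamma}-1)$ onto $\mathbb{R}$ (and the integrability of the resulting root) requires $\sigma$ to be nondegenerate, which the corollary's hypotheses do not guarantee — a gap present in the statement itself, and one the paper's one-line proof does not address either.
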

\begin{proof} Remark that in this case, as $\gamma$ is bounded, we have 
\begin{equation*}
\begin{split}
\widetilde{X}&=X+{1\over{2}}\int_0^{\cdot}\sigma_s^2 ds+(e^{\gamma}-1-\gamma)\is N=\int_0^{\cdot}\widetilde{b}_s{d}s+\sigma\is{W}+(e^{\gamma}-1)\is{N}^{\mathbb{F}}\\
&=\int_0^{\cdot}\widetilde{b}_s{d}s+\sigma\is{W}+\widetilde\gamma\is{N}^{\mathbb{F}}.
\end{split}
\end{equation*}
Thus, by using the canonical decomposition of $X$, with the truncation function $h_{\epsilon}$, we derive
\begin{equation*}
\begin{split}
&b=\mu-\lambda\gamma{I}_{\{\vert\gamma\vert>\epsilon\}},\quad c=\sigma^2,\quad F_t(dx):=\lambda\delta_{\gamma_t}(dx),\quad A_t=t,\\
&\widetilde{b}=\mu-\lambda\gamma+{{\sigma^2}\over{2}}+\lambda(e^{\gamma}-1){I}_{\{\vert\gamma\vert\leq\epsilon\}},\quad \widetilde{\mu}(dt,dx)=dN_t\delta_{\widetilde\gamma_t}(dx), \quad\widetilde{F}_t(dx)=\lambda\delta_{\widetilde\gamma_t}(dx).
\end{split}
\end{equation*}
Therefore, the proof of the corollary follows immediately from combining these with Theorem \ref{mainTHM4LinearEsscher}.  This ends the proof of the corollary.
\end{proof}
We end this subsection by illustrating the linear-Esscher pricing measure of order two on the compound Poisson process, which is highly used in modelling risks in actuarial sciences. 
\begin{corollary}\label{Corollary4CompoundJumps}  Suppose that $S$ is a one-dimensional compound Poisson process given by 
\begin{equation}\label{LevyModel4X}
 S=S_0\exp(X),\quad X:= \sum_{k=1}^{N_t} J_k,\end{equation}
 where $(J_k)_{k\geq1}$is a sequence of independent and identically distributed satisfying $\mathbb{E}[e^{2J_1}]<\infty$, and is independent of the Poisson process $N$ with rate $\lambda$. \\
 For any real number $\psi$, $Z$ is a linear-Esscher martingale density of-order-two for $S$ with parameter $\psi$ iff there exists a real number $\theta$ satisfying
        \begin{equation}\label{mgEquation1}
       E\Biggl[(e^{J_1}-1)\exp\Bigl(\theta(e^{J_1}-1)+\psi(e^{J_1}-1)^2\Bigr)\Biggr]= 0, 
        \end{equation}
        and
        \begin{equation}\label{Z4CompundJumps}
        \begin{split}
           & Z=\exp\Biggl(\sum_{n=1}^{N_t}\left(\theta(e^{J_n}-1)+\psi(e^{J_n}-1)^2\right)-\lambda\widetilde\kappa(\theta,\psi)t\Biggr),\\
            & \widetilde\kappa(\theta,\psi):=e^{-\theta+\psi}E\exp\left(\theta{e}^{J_1}+\psi{e}^{2J_1}-2\psi{e}^{J_1}\right)-1.
            \end{split}
        \end{equation}
\end{corollary}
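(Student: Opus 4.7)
The plan is to specialize Theorem \ref{mainTHM4LinearEsscher} to the one-dimensional compound Poisson model, take $\theta$ and $\psi$ to be deterministic constants (allowed since the underlying quantities are time-homogeneous), and read off the martingale equation and the explicit form of $Z$.

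First I would compute the characteristics of $X$: since $X_t=\sum_{k=1}^{N_t}J_k$ is compound Poisson, we have $X^c=0$, $c=0$, $A_t=t$, and $F_t(dx)=\lambda P_{J_1}(dx)$, with the drift $b$ following from the canonical decomposition with truncation $h_\epsilon$, namely $b=\lambda E[h_\epsilon(J_1)]$. Applying the change-of-variables formulas in (\ref{Parameters4Xtilde}) gives $\widetilde c=0$, $\widetilde F_t(dx)=\lambda P_{e^{J_1}-1}(dx)$, and $\widetilde b$ as an explicit expectation of $(e^{J_1}-1)$ against an indicator. The assumption $E[e^{2J_1}]<\infty$ ensures that $\int x^2\,\widetilde F(dx)=\lambda E[(e^{J_1}-1)^2]<\infty$, so that $\widetilde X$ is a special semimartingale with integrable drift $\lambda E[e^{J_1}-1]$, and it also guarantees that all the integrals appearing below converge absolutely.

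Next, by Lemma \ref{Equivalntform4LinearEsscher}, we can look for the second-order linear-Esscher density in the parametrization with respect to $\widetilde X$; taking $\overline\theta=\theta\in\mathbb R$ and $\overline\psi=\psi\in\mathbb R$ deterministic, the integrability condition (\ref{IntegrabilityCondition4Linear}) reduces to $E[(e^{J_1}-1)^2\exp(\theta(e^{J_1}-1)+\psi(e^{J_1}-1)^2)]<\infty$, which is finite under $E[e^{2J_1}]<\infty$ when, for instance, $\psi\le 0$, and must be imposed as a standing integrability hypothesis on $(\theta,\psi)$ otherwise. Plugging into the pointwise martingale equation (\ref{mgEquation4Linear}) and using the explicit $\widetilde F$ and $\widetilde b$, the truncation contributions $-\int h_\epsilon(x)\widetilde F(dx)$ cancel against the corresponding part of $\widetilde b$ once the equation is rewritten in the untruncated form $\widetilde b+\int x(\exp(\theta x+\psi x^2)-1)\widetilde F(dx)+\int(x-h_\epsilon(x))\widetilde F(dx)=0$, because the total drift of $\widetilde X$ is $\lambda E[e^{J_1}-1]=\widetilde b+\int(x-h_\epsilon(x))\widetilde F(dx)$. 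What remains is precisely $\lambda E[(e^{J_1}-1)\exp(\theta(e^{J_1}-1)+\psi(e^{J_1}-1)^2)]=0$, which is (\ref{mgEquation1}).

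Finally, to obtain the closed-form expression for $Z$, I would start from (\ref{Zequation4Linear}): since $X^c=0$, $Z=\mathcal E(M)$ with $M=(\exp(\theta x+\psi x^2)-1)\star(\widetilde\mu-\widetilde\nu)$, a purely discontinuous local martingale whose jumps at the Poisson times $T_k$ are $g(J_k):=\exp(\theta(e^{J_k}-1)+\psi(e^{J_k}-1)^2)-1$ and whose continuous compensator drift is $-\lambda t\,E[g(J_1)]$. The Dol\'eans--Dade formula for pure-jump semimartingales gives
\begin{equation*}
\mathcal E(M)_t=\exp\Bigl(-\lambda t\, E[g(J_1)]\Bigr)\prod_{k\le N_t}\bigl(1+g(J_k)\bigr)=\exp\Bigl(\sum_{k=1}^{N_t}(\theta(e^{J_k}-1)+\psi(e^{J_k}-1)^2)-\lambda t\,E[g(J_1)]\Bigr),
\end{equation*}
and a direct algebraic rearrangement $\theta(e^j-1)+\psi(e^j-1)^2=-\theta+\psi+\theta e^j+\psi e^{2j}-2\psi e^j$ identifies $E[g(J_1)]$ with $\widetilde\kappa(\theta,\psi)$ as defined in the statement, completing the derivation. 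The main technical obstacle, as foreshadowed above, is the careful handling of the truncation function $h_\epsilon$ in (\ref{mgEquation4Linear}) and the verification of (\ref{IntegrabilityCondition4Linear}); both are controlled by the moment assumption $E[e^{2J_1}]<\infty$, which plays the role of a minimal integrability condition ensuring that $\widetilde X$ is special and that the simplification to the single clean martingale equation (\ref{mgEquation1}) is legitimate.
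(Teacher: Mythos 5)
Your proposal is correct and follows exactly the route the paper intends: the authors omit the proof, stating it is a direct consequence of Theorem \ref{mainTHM4LinearEsscher}, and your computation of the characteristics $(b,c,F,A)$ and $(\widetilde b,\widetilde F)$, the cancellation of the truncation terms in (\ref{mgEquation4Linear}), and the Dol\'eans--Dade evaluation of (\ref{Zequation4Linear}) supply precisely the omitted details, with $\widetilde\kappa$ identified correctly. Your caution about the integrability condition for $\psi>0$ is reasonable but not really needed: if the expectation in (\ref{mgEquation1}) is well defined (finite positive and negative parts), then $E[|e^{J_1}-1|\exp(\theta(e^{J_1}-1)+\psi(e^{J_1}-1)^2)]<\infty$ and hence (\ref{IntegrabilityCondition4Linear}) follows, since the integrand is bounded on $\{|e^{J_1}-1|\le 1\}$.
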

The proof of the corollary is a direct consequence of Theorem \ref{mainTHM4LinearEsscher}, and hence it will be omitted.\\

Herein,  for the model (\ref{LevyModel4X}), we will discuss the connection between linear-Esscher measures and the existing pricing measures.  In fact, is clear that the linear-Esscher pricing measure differs tremendously from Gerber-Shui's  risk-neutral measure, which is more related to the exponential-Esscher, and hence this connection will be addressed in the next subsection. However, the linear-Esscher measure of-order-two falls into the family of Delbaen-Haezendonck's pricing measures,  see\cite{DelbaenEsscher} for more details about this latter measure and its application in pricing. Indeed, for the model (\ref{LevyModel4X}), Delbaen and Haezendonck introduced the following risk-neutral measure 
\begin{equation}\label{DelbaenMeasure}
Z^{DH}:=\exp\Biggl(\sum_{n=1}^N \beta(J_n)-\lambda{t}E\left[e^{\beta(J_1)}-1\right]\Biggr),\quad t\geq 0,
\end{equation}
where $\beta$ is a Borel-measurable function. Thus, the linear-Esscher measure of-order-two is  a particular of the Delbaen-Haezendonck's measure by choosing $\beta(x)=\theta(e^{x}-1)+\psi(e^{x}-1)^2$, $x\in\mathbb{R}$.
\subsection{Exponential-Esscher martingale densities of-order-two}\label{Subsection4ExponneialEsscher}
This subsection treats the exponential Esscher pricing densities. 
\begin{theorem}\label{mainTHM4ExponenialEsscher} Suppose $S$ is locally bounded, and let $\widetilde{b}'$ given by (\ref{Xbounded}). Then the following hold.\\
{\rm{(a)}} Let $Z$ be a positive RCLL and adapted process.  Then $Z$ is an Exponential-Esscher density of-order-two for $S$ if and only if there exists $(\theta,\psi)\in\Theta(X)$ satisfying
  \begin{equation}\label{IntegrabilityCondition4EE}
\exp(\theta^{tr}x) I_{\{\theta^{tr}x>\alpha\}}\star\mu \in{\cal{A}}_{loc}^+,\quad \mbox{for some $\alpha>0$},\end{equation}
\begin{equation}\label{mgEquation4Exponential}
    0=
    \widetilde{b}'+ c\theta+ \int \bigg(\exp(\theta^{tr} x + x^{tr}\psi x)-1\bigg)({\bf{e}}(x)-{\mathbb{I}}_d){F}(dx),\quad P\otimes{A}-a.e.,
 \end{equation}
 and 
 \begin{equation}\label{ZFormEquation4Exponential}
Z=\mathcal{E}\Bigl(\theta \is X^c + \left(\exp(\theta^{tr} x+x^{tr} \psi x)-1\right)\star ({\mu}-{\nu}) \Bigr).
 \end{equation}
  {\rm{(b)}} $Z\in {\cal{Z}}^{EE}(S)$ if and only if there exists a predictable and locally bounded $\psi$ such that $Z/ \overline{Z}^{(\psi)}$ is an exponential Esscher density (i.e. of order one) for $S$ under $Z^{(\psi)}$ defined by 
  \begin{equation}\label{Zbar(psi)}
  \overline{Z}^{\psi}:={\cal{E}}\left((e^{x^{tr}\psi{x}}-1)\star(\mu-\nu)\right).
  \end{equation}
Furthermore, in this case, we have
  \begin{equation}\label{Decomposition4Z-Z(psi)}
  Z= \overline{Z}^{(\psi)}{\cal{E}}\left(\theta\is{X}^c+(e^{\theta^{tr}x}-1)\star({\mu}-{\nu}^{(\psi)})\right)\quad\mbox{with}\quad {\nu}^{(\psi)}(dt,dx):=\exp({x^{tr}\psi{x}}){\nu}(dt,dx)
  \end{equation}
\end{theorem}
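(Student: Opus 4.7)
The approach for part (a) is to convert the ordinary exponential form of $Z$ into a stochastic exponential $\mathcal{E}(N)$ via the Yor identity
$$
\exp(Y) = \mathcal{E}\Bigl( Y + \tfrac12 \langle Y^c, Y^c\rangle + \sum_{0 < s \leq \cdot}\bigl(e^{\Delta Y_s} - 1 - \Delta Y_s\bigr)\Bigr),
$$
then extract the local martingale condition for $Z$ alone (which pins down $K$) and separately for $ZS$ (which yields the pointwise drift equation). Writing $Y := \theta\is X + (x^{tr}\psi x)\star\mu - K$ (with $K$ taken continuous without loss of generality), using $Y^c = \theta\is X^c$ and $\Delta Y = \theta^{tr}\Delta X + \Delta X^{tr}\psi\Delta X$, and substituting the locally-bounded canonical form $X = X_0 + X^c + x\star(\mu-\nu) + b'\is A$ from (\ref{Xbounded}), direct bookkeeping yields
$$
N = \theta\is X^c + (e^{\theta^{tr}x + x^{tr}\psi x}-1)\star(\mu-\nu) + \Bigl[\theta^{tr}b' + \tfrac12\theta^{tr}c\theta + \int(e^{\theta^{tr}x + x^{tr}\psi x}-1-\theta^{tr}x)F(dx)\Bigr]\is A - K.
$$
Requiring $Z = \mathcal{E}(N)$ to be a local martingale forces $K$ to equal the bracketed predictable drift, giving (\ref{ZFormEquation4Exponential}). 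For the $ZS$ part, I apply Yor's product formula to $\mathcal{E}(N)$ and $S^i = S_0^i\mathcal{E}(\widetilde X^i)$ componentwise; using $\Delta \widetilde X = \mathbf{e}(\Delta X) - \mathbb{I}_d$ and the drift $\widetilde b'\is A$ of $\widetilde X$ from (\ref{Xbounded}), the cancellation of the predictable drift of $\widetilde X^i + [N, \widetilde X^i]$ reduces precisely to the pointwise equation (\ref{mgEquation4Exponential}).

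For part (b), the factorization comes from a single application of Yor's product formula. Setting $U := (e^{x^{tr}\psi x}-1)\star(\mu-\nu)$ (so $\overline Z^{(\psi)} = \mathcal{E}(U)$) and $V := \theta \is X^c + (e^{\theta^{tr}x}-1)\star(\mu-\nu^{(\psi)})$, a short pointwise check shows $U + V + [U,V] = N$: the jump coefficients against $\mu$ satisfy the algebraic identity
$$
(e^{x^{tr}\psi x}-1) + (e^{\theta^{tr}x}-1) + (e^{x^{tr}\psi x}-1)(e^{\theta^{tr}x}-1) = e^{\theta^{tr}x + x^{tr}\psi x}-1,
$$
and the compensator parts collapse through $\nu^{(\psi)} = e^{x^{tr}\psi x}\nu$, which is precisely the Girsanov change of compensator under the (local) change of probability by $\overline Z^{(\psi)}$. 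Hence $Z = \overline Z^{(\psi)}\cdot\mathcal{E}(V)$, and $\mathcal{E}(V)$ has exactly the shape of the first-order exponential Esscher density for $S$ under $Q^{(\psi)}$, i.e.\ it is what part (a) with $\psi\equiv 0$ produces when applied to $(S, Q^{(\psi)})$ whose jump compensator is $\nu^{(\psi)}$. The reverse direction follows immediately by multiplying back.

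The main obstacle is the careful tracking of integrability: one must verify that (\ref{IntegrabilityCondition4EE}) is exactly what places the integrand $e^{\theta^{tr}x + x^{tr}\psi x}-1$ into the class ensuring both the compensator integral that defines $K$ and the stochastic integral against $\mu - \nu$ are well-defined local martingales, the local boundedness of $\psi$ and of the jumps of $X$ already dispatching the small-positive and negative tails. A secondary subtlety in (b) is that $\overline Z^{(\psi)}$ is merely in $\mathcal{Z}_{loc}$, so the change-of-measure argument should be formulated through the localizing sequence in Definition \ref{DefatorIntergrable}(b) rather than under a genuine probability $Q^{(\psi)}$, in line with the convention already adopted in Theorem \ref{Charaterization4LinearEsscher}.
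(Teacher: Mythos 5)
Your proposal follows essentially the same route as the paper's proof: the It\^o/Yor exponential formula converts $\exp(\overline{X})$ into a stochastic exponential $\mathcal{E}(N)$, the local-martingale requirement on $Z$ alone pins down $K$ and yields (\ref{ZFormEquation4Exponential}), the drift of $\widetilde{X}+[\widetilde{X},N]$ gives (\ref{mgEquation4Exponential}), and Yor's product formula applied to $\overline{Z}^{(\psi)}$ and $\mathcal{E}(V)$, with the change of compensator $\nu^{(\psi)}=e^{x^{tr}\psi x}\nu$, gives (b). The one place the paper does genuinely more work than your ``direct bookkeeping'' is worth noting: $\theta\in L(X)$ does not make $\theta\is X$ special, so the compensated integral $\theta^{tr}x\star(\mu-\nu)$ you write down need not exist a priori, and the paper instead routes the computation through the truncated decomposition of Lemma \ref{Integrability4ThetaProcesses} (with $h_{\epsilon}(\theta^{tr}x)$ and the large-jump part $U^{\theta,2}$) and then invokes Lemma \ref{TechnicalLemma2}-(d) to show that (\ref{IntegrabilityCondition4EE}) is precisely the integrability condition that makes the two local-martingale requirements well posed.
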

Similarly as for the linear-Esscher case, the connection between  exponential-Esscher of-order-two and order-one is fully described via ``local" change of probability. However for the other aspect of our analysis to the exponential-Esscher deflators, the  higher dimension (i.e. $d\geq 2$) brings serious difficulties in treating this exponential-Esscher case. In fact when $d\geq 2$, in contrast to the linear-Esscher case, there is difficulties in connecting the condition ${\cal{Z}}^{EE}(S)\not=\emptyset$ to some known non-arbitrage condition.  Again, when $d\geq2$, the equations (\ref{mgEquation4Linear}) and (\ref{mgEquation4Exponential}) sound non-comparable at all. The one-dimensional case, however, allows us to overcome all these obstacles. 
\begin{theorem}\label{OneDimension}
Suppose $X$ is a one-dimensional (i.e. $d=1$) locally bounded semimartingale, and let  $(\widehat{S},\widehat{Z},\widehat\nu)$ be given by 
\begin{equation}\label{ShatZhatNuhat}
\widehat{S}:={\cal{E}}(\widehat{X}),\quad  \widehat{X}:={1\over{2}}c\is{A}+X,\quad \widehat{Z}:={\cal{E}}((f-1)\star(\mu-\nu)),\quad f(x):={{e^x-1}\over{x}}I_{\{x\not=0\}}+I_{\{x=0\}}.
\end{equation}
Then the following assertions hold.\\
{\rm{(a)}} Let  $Z$ be a RCLL and positive adapted process. Then $Z\in{\cal{Z}}^{EE}(S)$ with parameters $(\theta,\psi)\in\Theta(X)$ if and only if the process   
\begin{equation}\label{RelationshipZ2Zhat}
Z':={{Z}\over{\widehat{Z}}}\in {\cal{Z}}^{LE}(\widehat{S},\widehat{Z})\quad\mbox{with the same pair of parameters $(\theta,\psi)$}. \end{equation}
Furthermore, we have 
\begin{equation}\label{Description4Zprime}
Z'={\cal{E}}\Bigl(\theta\is{X}^c+(e^{x\theta+x^2\psi}-1)\star(\mu-\widehat\nu)\Bigr)\quad\mbox{where}\quad \widehat\nu(dt,dx):=f(x)\nu(dt,dx).\end{equation}
 {\rm{(b)}} ${\cal{Z}}^{EE}(S)\not=\emptyset$ if and only if ${\cal{Z}}^{L\log{L}}_{loc}(S)\not=\emptyset$ if and only if ${\cal{Z}}^{LE}(\widehat{S},\widehat{Z})\not=\emptyset.$
\end{theorem}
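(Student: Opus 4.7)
The proof strategy is to view $\widehat Z$ as the density of a local change of probability that converts the exponential-Esscher structure of $S$ into a linear-Esscher structure for $\widehat S$. The pivotal algebraic identity is $x f(x) = e^x - 1$, which captures the relationship between the jumps $S_{t-}(e^{\Delta X_t}-1)$ of $S = S_0 e^X$ and the jumps $\widehat S_{t-}\Delta X_t$ of $\widehat S = \mathcal{E}(\widehat X)$. The choice of $\widehat Z$ with jump factor $f(\Delta X)$ is precisely tailored to bridge these two parametrizations, and it is this common multiplier that drives both the equality of the pointwise martingale equations and the stated form of $Z'$.

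For part (a), I would first apply Theorem \ref{mainTHM4ExponenialEsscher}(a) to $Z \in \mathcal{Z}^{EE}(S)$ to obtain $Z = \mathcal{E}(\theta \is X^c + (e^{\theta x + \psi x^2} - 1)\star(\mu - \nu))$ together with the pointwise equation $\widetilde b' + c\theta + \int (e^x - 1)(e^{\theta x + \psi x^2} - 1) F(dx) = 0$. Since $X$ is locally bounded and $\widehat X = X + (c/2)\is A$, the canonical decomposition of $\widehat X$ carries characteristics $(b' + c/2, c, F)$ with the same jump measure $\mu$. A Girsanov-type argument applied to $\widehat Z$ then shows that under $\widehat Z$ the compensator of $\mu$ becomes $\widehat\nu = f\nu$ while $X^c$ persists as a continuous local martingale, and the analogue of $\widetilde b'$ for $\widehat X$ under $\widehat Z$ computed via (\ref{Xbounded}) equals $(b' + c/2) + \int x(f(x) - 1)F(dx) = \widetilde b'$, coinciding with the original $\widetilde b'$ for $X$ under $P$. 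Applying Theorem \ref{mainTHM4LinearEsscher} to $(\widehat S, \widehat Z)$ with $\widehat X$ playing the role of $\widetilde X$ (cf.\ Lemma \ref{Equivalntform4LinearEsscher}, since $\widehat S = \mathcal{E}(\widehat X)$ already), the linear-Esscher pointwise equation reads $\widetilde b' + c\theta + \int x(e^{\theta x + \psi x^2} - 1) f(x) F(dx) = 0$, which via $xf(x) = e^x - 1$ collapses to the exponential-Esscher equation displayed above. Hence the same pair $(\theta, \psi)$ qualifies in both settings, and the form of $Z'$ provided by Theorem \ref{mainTHM4LinearEsscher} is precisely (\ref{Description4Zprime}). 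The identity $Z' = Z/\widehat Z$ is then verified by applying Yor's product formula $\mathcal{E}(A)\mathcal{E}(B) = \mathcal{E}(A + B + [A, B])$ to $\widehat Z \cdot Z'$: the continuous parts recombine to $\theta \is X^c$, and the jump integrals combine, through the cross-bracket $(f - 1)(e^{\theta x + \psi x^2} - 1)\star\mu$ and the compensators $\nu, \widehat\nu$, to $(e^{\theta x + \psi x^2} - 1)\star(\mu - \nu)$. The converse direction follows by reversing this chain of equivalences.

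For part (b), the equivalence $\mathcal{Z}^{EE}(S) \neq \emptyset \Leftrightarrow \mathcal{Z}^{LE}(\widehat S, \widehat Z) \neq \emptyset$ is immediate from part (a). Combining this with Theorem \ref{mainTHM4LinearEsscherBIS} applied to $(\widehat S, \widehat Z)$, which yields $\mathcal{Z}^{LE}(\widehat S, \widehat Z) \neq \emptyset \Leftrightarrow \mathcal{Z}^{L\log L}_{loc}(\widehat S, \widehat Z) \neq \emptyset$, the remaining step is to establish the correspondence $Z \leftrightarrow Z/\widehat Z$ between $\mathcal{Z}^{L\log L}_{loc}(S)$ and $\mathcal{Z}^{L\log L}_{loc}(\widehat S, \widehat Z)$ via Bayes' rule. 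Here the local boundedness of $X$ (and hence of $\ln f(\Delta X)$) guarantees that the $L\log L$ integrability transfers compatibly between the two base probabilities.

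The main obstacle I anticipate is the algebraic verification of (\ref{Description4Zprime}) via Yor's formula: tracking the continuous, pure-jump-martingale and finite-variation compensator contributions of $\widehat Z \cdot Z'$ requires several bookkeeping steps whose cancellations hinge delicately on the explicit form of $f$. A secondary fine point is the preservation of the $L\log L$ property under the change of measure via $\widehat Z$ in part (b), which requires careful entropic estimates leveraging the local boundedness of $X$.
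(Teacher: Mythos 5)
Your proposal is correct and follows essentially the same route as the paper's proof: both hinge on the identity $xf(x)=e^x-1$, characterize $Z'$ via Theorem \ref{mainTHM4LinearEsscher} applied to $\widehat{S}$ under the locally changed measure $\widehat{Q}=\widehat{Z}_\infty\cdot P$ (whose drift for $\widehat{X}$ indeed reduces to $\widetilde{b}'$), recombine $Z=\widehat{Z}Z'$ by Yor's formula, and for (b) chain assertion (a) with Theorem \ref{mainTHM4LinearEsscherBIS} and the $L\log L$ transfer between $(S,P)$ and $(\widehat{S},\widehat{Z})$, which the paper isolates as Lemma \ref{Deflator4S2Delfators4Shat} and proves via entropy--Hellinger processes exactly as you anticipate.
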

We end this subsection, by illustrating the above theorem on the particular cases of jump-diffusion models and compound Poisson models in two corollaries.
\begin{corollary}\label{JumpDiffusionCase}
    Let $Z>0$ and RCLL process, and suppose $S$ is one-dimensional given by (\ref{JumpDiffusion}).\\
     Then $Z$ is an exponential Esscher density of-order-two for $S$ iff there exists $(\theta,\psi)\in\Theta(X)$ such that 
        \begin{equation}\label{mgEquation1}
        \begin{split}
           & \widetilde{b} + \theta{\sigma}^2+\lambda\widetilde{\gamma}\bigg(e^{\theta{\gamma} + \psi{\gamma}^2}-1\bigg)=0.
            \end{split}
        \end{equation}
        and
        \begin{equation}\label{Z4JumpdDiffusion}
            Z=\mathcal{E}\Bigl(\theta{ \sigma}\is{ W} + (e^{\theta{ \gamma} + \psi{\gamma}^2} -1)\is{N}^\mathbb{F} \Bigr).
        \end{equation}
        \end{corollary}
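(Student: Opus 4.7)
The plan is to specialize Theorem \ref{mainTHM4ExponenialEsscher}(a) to the one-dimensional jump-diffusion model in (\ref{JumpDiffusion}). Since the triplet $(\mu,\sigma,\gamma)$ is bounded, $X$ is locally bounded and I may work with the simplified canonical representation (\ref{Xbounded}), which avoids tracking the small-jump truncation $h_\epsilon$ in the drift.

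First I would read off the predictable characteristics of $X$. Writing $\gamma \is N^\mathbb{F} = x \star (\mu - \nu)$ with $\mu(dt,dx) = dN_t\, \delta_{\gamma_t}(dx)\, I_{\{\gamma_t \neq 0\}}$ and $\nu(dt,dx) = \lambda\, I_{\{\gamma_t \neq 0\}}\, \delta_{\gamma_t}(dx)\, dt$, I obtain $c = \sigma^2$, $F_t(dx) = \lambda \delta_{\gamma_t}(dx)$, $A_t = t$, and the large-jump-corrected drift $b' = \mu$. Combining with the formula for $\widetilde{b}'$ in (\ref{Xbounded}) and the Dirac nature of $F$ gives
\begin{equation*}
\widetilde{b}' = \mu + \frac{\sigma^2}{2} + \lambda(e^\gamma - 1 - \gamma),
\end{equation*}
which coincides with $\widetilde{b}$ as defined in (\ref{parameters}) of Corollary \ref{Corollary4JumpDiffusion}, and similarly $\widetilde{\gamma} = e^\gamma - 1$.

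Next I substitute these characteristics into the pointwise martingale equation (\ref{mgEquation4Exponential}). Because $F_t$ is the Dirac mass at $\gamma_t$, the integral against $F$ collapses into a single evaluation at $\gamma$; using $\mathbf{e}(x) - \mathbb{I}_d = e^x - 1$ in dimension one, the equation reduces to
\begin{equation*}
0 = \widetilde{b} + \sigma^2 \theta + \lambda \widetilde{\gamma}\bigl(e^{\theta \gamma + \psi \gamma^2} - 1\bigr),
\end{equation*}
which is exactly (\ref{mgEquation1}). The formula for $Z$ in (\ref{ZFormEquation4Exponential}) reduces analogously: $X^c = \sigma \is W$, and the compensated stochastic integral of a mark-function $g(x) = e^{\theta x + \psi x^2} - 1$ against $\mu - \nu$ becomes, evaluated on the atoms of $\mu$ located at $\gamma$, a stochastic integral of $g(\gamma)$ against the compensated Poisson process $N^\mathbb{F}$, yielding (\ref{Z4JumpdDiffusion}).

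The one non-routine step is verifying the integrability condition (\ref{IntegrabilityCondition4EE}), that is $\exp(\theta x) I_{\{\theta x > \alpha\}} \star \mu \in \mathcal{A}^+_{loc}$. Since $\gamma$ is bounded and $\theta$ is predictable, a standard localization by a sequence of stopping times on which $\theta$ is bounded makes $\exp(\theta \gamma)$ uniformly bounded at every jump time; the process is then dominated by $C \lambda t$ on the corresponding stopping interval and hence lies in $\mathcal{A}^+_{loc}$. This is where the boundedness hypotheses on the coefficients are used, and with it the equivalence claimed by the corollary follows directly from Theorem \ref{mainTHM4ExponenialEsscher}(a). I expect the main conceptual obstacle to be confirming that the Dirac structure of $F$ is correctly propagated through the large-jump correction of $\widetilde{b}'$; once this bookkeeping is settled, the reduction of the general characterization to the explicit jump-diffusion equations is immediate.
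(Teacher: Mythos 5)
Your proposal is correct and matches the paper's (omitted) proof: Corollary~\ref{JumpDiffusionCase} is indeed a direct specialization of Theorem~\ref{mainTHM4ExponenialEsscher}(a), reading off the characteristics $(b',c,F,A)=(\mu,\sigma^2,\lambda\delta_\gamma,t)$ from the model (\ref{JumpDiffusion}), identifying $\widetilde{b}'$ with the $\widetilde{b}$ of (\ref{parameters}), and collapsing the $F$-integral in (\ref{mgEquation4Exponential}) via the Dirac mass at $\gamma$. One caveat on the integrability step: a predictable process does not in general admit a localizing sequence on which it is bounded, so one cannot simply localize $\theta$ directly as you suggest. What actually closes the gap is that equation (\ref{mgEquation1}) itself, together with $\sigma^2>0$, $\gamma\widetilde\gamma>0$, the uniform boundedness of $(\widetilde b,\sigma,\lambda,\gamma,\widetilde\gamma)$ and the local boundedness of $\psi$, forces the solving $\theta$ to be locally bounded (the map $\theta\mapsto \theta\sigma^2+\lambda\widetilde\gamma e^{\theta\gamma+\psi\gamma^2}$ is a strictly increasing bijection of $\mathbb{R}$ with bounded inverse on bounded sets). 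With $\gamma$ bounded this makes $e^{\theta\gamma}$ locally bounded, and hence (\ref{IntegrabilityCondition4EE}) holds automatically, which is the conclusion your argument was aiming for.
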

   The proof of the corollary follows directly from Theorem \ref{mainTHM4ExponenialEsscher}, and hence it will be omitted.
 \begin{corollary}\label{LevyGeneral}
    Suppose that $S$ is one-dimensional follow the model defined in (\ref{LevyModel4X}). \\
  For any real number $\psi$,  $Z$ is an exponential-Esscher density of-order-two for $S$ iff there exists a real number $\theta$ such that 
        \begin{equation}\label{mgEquation1}
        \mathbb{E}\Bigl[(e^{J_1}-1)\exp\left(\theta{J_1}+\psi{J}_1^2\right)\Bigr]= 0, 
        \end{equation}
        and
        \begin{equation}\label{Z4CompundJumps}
            Z=\exp\Biggl(\sum_{n=1}^N(\theta{J_n}+\psi{J}_n^2)-\lambda\kappa(\theta,\psi)t\Biggr),\quad \kappa(\theta,\psi):=\mathbb{E}\left[\exp(\theta{J_1}+\psi{J}_1^2)\right]-1
        \end{equation}
  \end{corollary}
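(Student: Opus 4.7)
The strategy is to specialize Theorem \ref{mainTHM4ExponenialEsscher}(a) to the compound Poisson driver $X=\sum_{k=1}^{N_t}J_k$ by first computing its predictable characteristics, and then rewriting the martingale equation (\ref{mgEquation4Exponential}) and the explicit form (\ref{ZFormEquation4Exponential}) in the present notation. First I would observe that $X^c\equiv 0$ (so $c\equiv 0$), $A_t=t$, the compensator of the jump measure is $\nu(dt,dx)=\lambda\,\eta(dx)\,dt$ with $\eta:=\text{Law}(J_1)$, so $F_t(dx)=\lambda\eta(dx)$, and the drift produced by the canonical truncation is $b=\lambda\int h_\epsilon(x)\eta(dx)$. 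Plugging this into (\ref{Xbounded}) gives $b'=\lambda\,\mathbb{E}[J_1]$, which is finite since $\mathbb{E}[e^{2J_1}]<\infty$ dominates all polynomial moments of $J_1$, and hence
\begin{equation*}
\widetilde{b}'=b'+\int(e^x-1-x)F(dx)=\lambda\,\mathbb{E}[e^{J_1}-1].
\end{equation*}

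Substituting these characteristics into (\ref{mgEquation4Exponential}) (with $d=1$ and $c\theta=0$) yields
\begin{equation*}
\lambda\int(e^x-1)\,\eta(dx)+\lambda\int(e^{\theta x+\psi x^2}-1)(e^x-1)\,\eta(dx)=0,
\end{equation*}
which factors immediately into $\lambda\,\mathbb{E}[(e^{J_1}-1)e^{\theta J_1+\psi J_1^2}]=0$, i.e.\ exactly (\ref{mgEquation1}). For the explicit form of $Z$, since $X^c=0$ the process $L:=(e^{\theta x+\psi x^2}-1)\star(\mu-\nu)$ is purely discontinuous, so the Dol\'eans--Dade formula reads $\mathcal{E}(L)_t=\exp\!\bigl(L_t-\sum_{s\leq t}\Delta L_s\bigr)\prod_{s\leq t}(1+\Delta L_s)$. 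Using $1+\Delta L_{T_n}=e^{\theta J_n+\psi J_n^2}$ and that the compensator part of $L$ equals $-\lambda t\kappa(\theta,\psi)$, the product collapses to $\exp\!\bigl(\sum_{n=1}^{N_t}(\theta J_n+\psi J_n^2)-\lambda\kappa(\theta,\psi)t\bigr)$, which is precisely (\ref{Z4CompundJumps}).

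The main obstacle I anticipate is that $S=S_0e^X$ is not, strictly speaking, locally bounded when $\eta$ has unbounded support, so Theorem \ref{mainTHM4ExponenialEsscher} does not apply verbatim. To bypass this I would go back to Definition \ref{EsscherwithOrders}(a) directly: the candidate $Z$ in (\ref{Z4CompundJumps}) is a positive local martingale by the above stochastic-exponential computation, and the requirement that $ZS\in\mathcal{M}_{loc}$ is verified by computing $d(ZS)/(Z_-S_-)$ via It\^o's product formula and checking that its predictable drift vanishes exactly under (\ref{mgEquation1}). The moment hypothesis $\mathbb{E}[e^{2J_1}]<\infty$ is the place where one needs care: it is what ensures the $\mathcal{A}_{loc}^+$-integrability condition (\ref{IntegrabilityCondition4EE}) together with the finiteness of all integrals of the type $\int e^{\theta x+\psi x^2}(e^x-1)\eta(dx)$ appearing above, and in particular $\mathbb{E}[e^{2X_t}]=\exp(\lambda t(\mathbb{E}[e^{2J_1}]-1))<\infty$ for every $t$, which underwrites the local-martingale arguments without invoking local boundedness of $S$.
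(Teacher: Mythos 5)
Your proposal is correct and follows essentially the same route as the paper, which simply declares the corollary ``a direct consequence of Theorem \ref{mainTHM4ExponenialEsscher}'' and omits the computation: your identification of the characteristics $c\equiv 0$, $F_t(dx)=\lambda\eta(dx)$, $\widetilde{b}'=\lambda\mathbb{E}[e^{J_1}-1]$, the reduction of (\ref{mgEquation4Exponential}) to (\ref{mgEquation1}), and the Dol\'eans--Dade evaluation of $\mathcal{E}(L)$ giving (\ref{Z4CompundJumps}) are all exactly what that specialization requires. Your observation that $S$ need not be locally bounded when $J_1$ has unbounded support is a legitimate point the paper glosses over, and your fallback -- verifying the local-martingale properties of $Z$ and $ZS$ directly from Definition \ref{EsscherwithOrders}(a) via It\^o, using $\mathbb{E}[e^{2J_1}]<\infty$ for the integrability -- is a sound way to close that gap.
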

  The proof of the corollary is  a direct consequence of Theorem \ref{mainTHM4ExponenialEsscher}, and hence will be omitted. \\
 
 On the one hand, it is clear that, for the model (\ref{LevyModel4X}), our exponential-Esscher pricing measure extends the Gerber-Shiu's measure. This measure was introduced in  \cite{gerber1994a}, see also \cite{gerber1994b, gerber1996} for related works, and can be deducted from ours by putting $\psi=0$. On the other hand, the exponential-Esscher ,measure is a particular case of the Delbaen-Haenzendonck's pricing measure introduced in \cite{DelbaenEsscher},  and which we recalled its density in (\ref{DelbaenMeasure}). In fact, by choosing the quadratic form for $\beta(x)=\theta{x}+\psi{x}^2$ instead, the Delbaen-Haenzendonck's measure coincides with the exponential-Esscher. \\
 
We end this section by the proof of its main theorems.
\subsection{Proof of Theorems \ref{mainTHM4LinearEsscher}, \ref{mainTHM4LinearEsscherBIS}, \ref{Charaterization4LinearEsscher}, \ref{mainTHM4ExponenialEsscher} and  \ref{OneDimension}}\label{Subsection4Proofs}
The proof of these theorems relies essentially on the following lemmas which are interesting in themself.
\begin{lemma}\label{Integrability4ThetaProcesses}  Let  $Y$ be a $d$-dimensional semimartingale, $(b^Y, c^Y, F^Y)$ its predictable characteristics, $\mu_Y$ is the random measure of its jumps and $\nu_Y$ its compensator, and $\theta\in L(Y)$. Consider the following RCLL processes with finite variation
\begin{equation}\label{Definition4U(theta)}
\begin{split}
&U^{\theta}(Y):=\sum \Delta{Y}I_{\{\vert\theta^{tr}\Delta{Y}\vert>\epsilon\ \mbox{or}\ \vert\Delta{Y}\vert>\epsilon\}},\quad Z^{\theta}(Y):=Y-U^{\theta}(Y),\\
&U^{\theta,1}(Y):=\sum \Delta{Y}I_{\{\vert\Delta{Y}\vert>\epsilon\geq\vert\theta^{tr}\Delta{Y}\vert \}},\quad U^{\theta,2}(Y):=\sum \Delta{Y}I_{\{\vert\theta^{tr}\Delta{Y}\vert>\epsilon\}}.
\end{split}
\end{equation}
Then the following assertions hold.\\
{\rm{(a)}} We always have
\begin{equation}\label{ConditionIntegrability1}
\int \Vert{h_{\epsilon}(x)}\Vert{I}_{\{\epsilon<\vert\theta^{tr}x\vert \}}F^Y(dx)<\infty\  P\otimes{A^Y}\mbox{-a.e.},\ b^{\theta}(Y):=b^Y-\int{h_{\epsilon}(x)}{I}_{\{\epsilon<\vert\theta^{tr}x\vert \}}F^Y(dx)\in L(A^Y),\end{equation}
and $Z^{\theta}$ admits the following canonical decomposition
\begin{equation}\label{ZdecompositionCannonical}
Z^{\theta}(Y)=Y^c+\underbrace{x{I}_{\{\max(\vert{x}\vert,\vert\theta^{tr}x\vert)\leq\epsilon \}}\star(\mu^Y-\nu^Y)}_{=:M^{\theta}(Y)}+b^{\theta}\is A^Y.
\end{equation}
{\rm{(b)}} $\theta\in L(U^{\theta,1}(Y))\cap{L}(U^{\theta,2}(Y))\cap{L}(Y^c)\cap{L}(M^{\theta}(Y))$, and  $\theta\is Z^{\theta}(Y)$ is a special semimartingale having the following canonical decomposition  
\begin{equation}\label{Z(theta)Decomposition}
\theta\is{Z}^{\theta}(Y)=\theta\is{Y^c}+h_{\epsilon}(\theta^{tr}x){I}_{\{\Vert{x}\Vert\leq\epsilon \}}\star(\mu^Y-\nu^Y)+\left(\theta^{tr}b^Y-\int\theta^{tr}{x}{I}_{\{\Vert{x}\Vert\leq\epsilon<\vert\theta^{tr}x\vert \}}F^Y(dx)\right)\is{A}^Y.
\end{equation}
{\rm{(c)}}  $\theta\is{Y}$ has the following canonical decomposition
\begin{equation}\label{Decomposition4Theta(X)}
\theta\is{Y}=\theta\is{Y^c}+h_{\epsilon}(\theta^{tr}x)\star(\mu^Y-\nu^Y)+\left(\theta^{tr}b^{\theta}(Y)+\int{h}_{\epsilon}(\theta^{tr}x)I_{\{\epsilon<\Vert{x}\Vert\}}F^Y(dx)\right)\is{A}^Y+\theta\is{U}^{\theta,2}(Y).
\end{equation}
\end{lemma}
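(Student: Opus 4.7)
\textbf{Proof proposal for Lemma \ref{Integrability4ThetaProcesses}.}

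The plan is to reduce everything to careful bookkeeping starting from the canonical representation of $Y$, namely
$$Y = Y_0 + Y^c + h_\epsilon(x)\star(\mu^Y-\nu^Y) + (x-h_\epsilon(x))\star\mu^Y + b^Y\is A^Y,$$
and then to translate $\theta\in L(Y)$ into its standard integrability consequences for the characteristics (see Jacod--Shiryaev): namely $(\theta^{tr}c^Y\theta)^{1/2}\is A^Y\in\mathcal{A}^+_{loc}$ and $((\theta^{tr}x)^2\wedge|\theta^{tr}x|)\star\nu^Y\in\mathcal{A}^+_{loc}$. These yield, in particular, that $I_{\{|\theta^{tr}x|>\epsilon\}}\star\nu^Y\in\mathcal{A}^+_{loc}$, from which (\ref{ConditionIntegrability1}) follows immediately since $\|h_\epsilon(x)\|I_{\{|\theta^{tr}x|>\epsilon\}}\leq\epsilon\, I_{\{|\theta^{tr}x|>\epsilon\}}$; the claim $b^\theta(Y)\in L(A^Y)$ is then a direct consequence of $b^Y\in L(A^Y)$ and the local boundedness of the correction term.

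For assertion (a), I would rewrite $U^\theta(Y) = x\,I_{\{|x|>\epsilon\ \text{or}\ |\theta^{tr}x|>\epsilon\}}\star\mu^Y$ and subtract it from the canonical decomposition. Since $(x-h_\epsilon(x))\star\mu^Y = x\,I_{\{|x|>\epsilon\}}\star\mu^Y$, the jump terms collapse to $-x\,I_{\{|x|\leq\epsilon,\,|\theta^{tr}x|>\epsilon\}}\star\mu^Y = -h_\epsilon(x)I_{\{|\theta^{tr}x|>\epsilon\}}\star\mu^Y$. Splitting this into its compensated and predictable parts (the predictable part is well-defined thanks to (\ref{ConditionIntegrability1})) and combining with the term $h_\epsilon(x)\star(\mu^Y-\nu^Y)$ yields precisely (\ref{ZdecompositionCannonical}).

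For assertion (b), I verify the four integrabilities separately. Integrability of $\theta$ with respect to the two pure-jump FV processes $U^{\theta,1}(Y)$ and $U^{\theta,2}(Y)$ reduces to summability: for $U^{\theta,1}(Y)$ each jump $|\theta^{tr}\Delta U^{\theta,1}|\leq\epsilon$ on a process with locally finitely many jumps (since $I_{\{|x|>\epsilon\}}\star\nu^Y\in\mathcal{A}^+_{loc}$); for $U^{\theta,2}(Y)$ the big jumps of $\theta\is Y$ form the FV process $\theta^{tr}x\,I_{\{|\theta^{tr}x|>\epsilon\}}\star\mu^Y$, which is well defined because $\theta\is Y$ is itself a semimartingale. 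Integrability with respect to $Y^c$ follows from $(\theta^{tr}c^Y\theta)^{1/2}\is A^Y\in\mathcal{A}^+_{loc}$. For $M^\theta(Y)$, note this is a locally square integrable local martingale since its jumps are bounded by $\epsilon$; integrability of $\theta$ reduces to $(\theta^{tr}x)^2 I_{\{|x|\leq\epsilon,|\theta^{tr}x|\leq\epsilon\}}\star\nu^Y\in\mathcal{A}^+_{loc}$, which is controlled by $((\theta^{tr}x)^2\wedge 1)\star\nu^Y\in\mathcal{A}^+_{loc}$ (taking $\epsilon\leq 1$ without loss of generality). The decomposition (\ref{Z(theta)Decomposition}) is then obtained by applying $\theta$ to each component of (\ref{ZdecompositionCannonical}), using $\theta^{tr}h_\epsilon(x) I_{\{|\theta^{tr}x|\leq\epsilon\}} = h_\epsilon(\theta^{tr}x) I_{\{|x|\leq\epsilon\}}$ and expanding $\theta^{tr}b^\theta(Y)$.

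For assertion (c), I decompose $\theta\is Y = \theta\is Z^\theta(Y) + \theta\is U^{\theta,1}(Y) + \theta\is U^{\theta,2}(Y)$ (noting that the supports of $U^{\theta,1}$ and $U^{\theta,2}$ are disjoint and their union is that of $U^\theta$), then combine (\ref{Z(theta)Decomposition}) with the canonical decomposition $\theta\is U^{\theta,1}(Y) = h_\epsilon(\theta^{tr}x)I_{\{|x|>\epsilon\}}\star(\mu^Y-\nu^Y) + \int h_\epsilon(\theta^{tr}x) I_{\{|x|>\epsilon\}} F^Y(dx)\is A^Y$. The martingale pieces merge to $h_\epsilon(\theta^{tr}x)\star(\mu^Y-\nu^Y)$, while the drifts combine to $\theta^{tr}b^\theta(Y) + \int h_\epsilon(\theta^{tr}x)I_{\{|x|>\epsilon\}}F^Y(dx)$, giving (\ref{Decomposition4Theta(X)}).

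The main obstacle is the careful tracking of indicator sets during the truncation surgery: several different cut-off regions ($|x|>\epsilon$, $|\theta^{tr}x|>\epsilon$, and their intersections/unions) must be manipulated consistently so that each integrand is simultaneously well defined, compensable, and produces an identity matching the stated canonical form. The rest is a mechanical application of the Jacod--Shiryaev integrability criterion for stochastic integrals with respect to semimartingales.
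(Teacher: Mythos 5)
Your proposal is correct and follows essentially the same route as the paper: the same truncation surgery on the canonical decomposition of $Y$, the same collapse of the jump terms to $-h_{\epsilon}(x)I_{\{\vert\theta^{tr}x\vert>\epsilon\}}\star\mu^Y$ followed by compensation, and the same splitting $\theta\is Y=\theta\is Z^{\theta}(Y)+\theta\is U^{\theta,1}(Y)+\theta\is U^{\theta,2}(Y)$ with the compensation of $\theta\is U^{\theta,1}(Y)$ to merge the martingale and drift pieces. The only (immaterial) difference is in how the preliminary integrability facts are justified: the paper obtains $F^Y(\{\vert\theta^{tr}x\vert>\epsilon\})\in L(A^Y)$ from $I_{\{\vert\theta^{tr}x\vert>\epsilon\}}\star\mu^Y\in{\cal A}^+_{loc}$ and, for assertion (b), passes through $\theta\in L(Z^{\theta}(Y))$ and invokes Stricker's Corollaire 2.6 with the orthogonality of $Y^c$ and $M^{\theta}(Y)$, whereas you verify each membership directly from the Jacod--Shiryaev characterization of $L(Y)$ via the characteristics; both are standard and reach the same conclusions.
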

\begin{lemma}\label{TechnicalLemma2}
 Let $(\psi,\theta)\in\Theta(S)$. Then the following assertions hold.\\
 {\rm{(a)}} The condition (\ref{IntegrabilityCondition4Linear}) is equivalent to
  \begin{equation}\label{IntegrabilityCondition4Linear4Proof}
\Vert x\exp(\theta^{tr}x+x^{tr}\psi{x})-h_{\epsilon}(x)\Vert\star\widetilde{\mu} +\bigg| \exp({\theta^{tr} x + x^{tr} \psi x}) - 1-h_{\epsilon}(\theta^{tr}x) \bigg|\star\widetilde{\mu}  \in{\cal{A}}_{loc}^+.
\end{equation}
 {\rm{(b)}} The condition (\ref{IntegrabilityCondition4Linear})  implies $  \sqrt{(e^{{\theta^{tr} x + x^{tr} \psi x}} - 1)^2\star\widetilde\mu}\in{\cal{A}}_{loc}^+$.\\
  {\rm{(c)}} If (\ref{IntegrabilityCondition4Linear})  and (\ref{mgEquation4Linear}) hold and $\vert{e}^{x^{tr}\psi{x}}-1\vert\star\widetilde\mu\in{\cal{A}}_{loc}^+$, then 
 \begin{equation}\label{Conditions}
  \begin{split}
 \vert\theta^{tr}xe^{{\theta^{tr} x + x^{tr} \psi x}} -\theta^{tr}h_{\epsilon}(x)\vert\star\widetilde\mu\in{\cal{A}}_{loc}^+,\quad\vert\theta^{tr}xe^{{\theta^{tr} x + x^{tr} \psi x}}  -e^{{\theta^{tr} x + x^{tr} \psi x}} + 1\vert\star\widetilde\mu\in{\cal{A}}_{loc}^+.
 \end{split}
 \end{equation}
 {\rm{(d)}} Suppose $S$ is locally bounded.  Then the condition (\ref{IntegrabilityCondition4EE}) is equivalent to 
 \begin{equation}\label{Equa198}
  \begin{split}
&\Vert\left(\exp(\theta^{tr}x+x^{tr}\psi{x})-1\right)({\bf{e}}(x)-\mathbb{I}_d)\Vert\star\mu+\bigg| \exp({\theta^{tr} x + x^{tr} \psi x}) - 1-h_{\epsilon}(\theta^{tr}x) \bigg| \star{\mu} \in{\cal{A}}_{loc}^+\end{split}
\end{equation}
 \end{lemma}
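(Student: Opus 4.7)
All four claims are equivalences (or implications) between local integrability statements for positive integrands against $\widetilde{\mu}$ (respectively $\mu$). The underlying strategy is to invoke the standard criterion $f\star\widetilde{\mu}\in{\cal{A}}_{loc}^+\Longleftrightarrow f\star\widetilde{\nu}\in{\cal{A}}_{loc}^+$ for positive $f$, and to compare integrands pointwise after splitting the jump support into the four cells determined by whether $\|x\|>\epsilon$ or not and whether $|\theta^{tr}x|>\epsilon$ or not.

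\textbf{For (a)}, on the three cells where at least one of $\|x\|>\epsilon$ or $|\theta^{tr}x|>\epsilon$ holds, the facts that $h_{\epsilon}$ is bounded (and vanishes on $\{\|x\|>\epsilon\}$) and that $I_{\{|\theta^{tr}x|>\epsilon\}}\leq\epsilon^{-1}|\theta^{tr}x|$ identify the integrands of (\ref{IntegrabilityCondition4Linear}) and (\ref{IntegrabilityCondition4Linear4Proof}) up to locally bounded multiplicative factors. On the small-jump cell $\{\|x\|\leq\epsilon,\,|\theta^{tr}x|\leq\epsilon\}$, $h_{\epsilon}(x)=x$ and $h_{\epsilon}(\theta^{tr}x)=\theta^{tr}x$, so a second-order Taylor expansion of the exponential produces the common pointwise bound $C(|\theta^{tr}x|^{2}+\|x\|^{2})$ (with $C$ depending on the local bounds of $\|\psi\|$ and on $\epsilon$) for both integrands, which is locally $\widetilde{\mu}$-integrable thanks to $\theta\in L(\widetilde{X})$ combined with $\int(\|x\|^{2}\wedge 1)\widetilde{F}(dx)\leq 1$.

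\textbf{For (b)}, set $g:=\exp(\theta^{tr}x+x^{tr}\psi x)-1$ and split along small vs.\ large jumps. The Taylor bound from (a) yields $g^{2}\star\widetilde\mu^{\mathrm{small}}\in{\cal{A}}_{loc}^+$, hence $\sqrt{g^{2}\star\widetilde\mu^{\mathrm{small}}}\in{\cal{A}}_{loc}^+$ by concavity of the square root; on the large-jump cells, (a) together with $|g|\leq\exp(\theta^{tr}x+x^{tr}\psi x)+1$ gives $|g|\star\widetilde\mu^{\mathrm{large}}\in{\cal{A}}_{loc}^+$, hence $\sqrt{g^{2}\star\widetilde\mu^{\mathrm{large}}}\leq|g|\star\widetilde\mu^{\mathrm{large}}\in{\cal{A}}_{loc}^+$, and subadditivity of the square root combines both. \textbf{For (c)}, the first claim of (\ref{Conditions}) uses the identity $\theta^{tr}xe^{\theta^{tr}x+x^{tr}\psi x}-\theta^{tr}h_{\epsilon}(x)=\theta^{tr}(xe^{\theta^{tr}x+x^{tr}\psi x}-h_{\epsilon}(x))$ together with the cell-by-cell bounds from (a); the martingale equation (\ref{mgEquation4Linear}) ensures that the $\widetilde\nu$-integral of this quantity equals the predictable density $-\theta^{tr}\widetilde{b}-\theta^{tr}c\theta$, which is of locally finite variation. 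The second claim follows from the algebraic decomposition $\theta^{tr}xe^{\theta^{tr}x+x^{tr}\psi x}-e^{\theta^{tr}x+x^{tr}\psi x}+1 = (e^{\theta^{tr}x+x^{tr}\psi x}-1)(\theta^{tr}x-1)+\theta^{tr}x$ combined with the factorization $e^{\theta^{tr}x+x^{tr}\psi x}-1=e^{x^{tr}\psi x}(e^{\theta^{tr}x}-1)+(e^{x^{tr}\psi x}-1)$, so that the $x^{tr}\psi x$-perturbation is absorbed by the hypothesis $|e^{x^{tr}\psi x}-1|\star\widetilde{\mu}\in{\cal{A}}_{loc}^+$ while the remainder reduces cell-wise to (a) and the first claim.

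\textbf{Part (d) and main obstacle.} Under local boundedness of $S$, the map $x\mapsto{\bf{e}}(x)-\mathbb{I}_{d}$ is locally bounded on the support of $\mu$, so the region-splitting argument of (a) transfers verbatim to $\mu$ and $\nu$, with the new term $\|(e^{\theta^{tr}x+x^{tr}\psi x}-1)({\bf{e}}(x)-\mathbb{I}_{d})\|\star\mu$ in (\ref{Equa198}) playing the role of the large-jump term. The delicate step is the decomposition inside part (c): the mixed quadratic-exponential structure is subtle because $|\theta^{tr}x|$ may be unbounded on the large-jump cells and naive Taylor or exponential bounds fail to give integrability; it is precisely the martingale equation (\ref{mgEquation4Linear}) together with the supplementary hypothesis on $e^{x^{tr}\psi x}-1$ that provide the required cancellations of the principal terms.
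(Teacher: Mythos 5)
Your treatments of (a), (b) and (d) match the paper's approach: split the jump region by the size of $\|x\|$ and of $|\theta^{tr}x|$, use Taylor bounds on the small-jump cell, and rely on the local boundedness of $\psi$ (for (a),(b)) and of ${\bf{e}}(x)-\mathbb{I}_d$ on the jump support (for (d)). These parts are fine.

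Part (c), however, has a genuine gap. For the first claim of (\ref{Conditions}) you invoke the trivial identity $\theta^{tr}x\,e^{\theta^{tr}x+x^{tr}\psi x}-\theta^{tr}h_\epsilon(x)=\theta^{tr}\bigl(x\,e^{\theta^{tr}x+x^{tr}\psi x}-h_\epsilon(x)\bigr)$ and appeal to "cell-by-cell bounds from (a)". But (a) only gives $\|x e^{\theta^{tr}x+x^{tr}\psi x}-h_\epsilon(x)\|\star\widetilde\mu\in{\cal A}^+_{loc}$, and $\theta\in L(S)$ does not make $\|\theta\|$ locally bounded, so premultiplying by $\theta^{tr}$ does not preserve local integrability of the absolute value. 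You then argue that (\ref{mgEquation4Linear}) "ensures that the $\widetilde\nu$-integral equals $-\theta^{tr}\widetilde b-\theta^{tr}c\theta$", but this only controls the \emph{signed} integral $\int(\theta^{tr}x e^{\theta^{tr}x+x^{tr}\psi x}-\theta^{tr}h_\epsilon(x))\widetilde F(dx)$, not $\int|\theta^{tr}x e^{\theta^{tr}x+x^{tr}\psi x}-\theta^{tr}h_\epsilon(x)|\widetilde F(dx)$, which is what membership in ${\cal A}^+_{loc}$ requires. The paper resolves this by the decomposition
\begin{equation*}
\theta^{tr}x\,e^{\theta^{tr}x+x^{tr}\psi x}-\theta^{tr}h_\epsilon(x)
= e^{x^{tr}\psi x}\bigl((\theta^{tr}x)e^{\theta^{tr}x}-e^{\theta^{tr}x}+1\bigr)
 +\bigl(e^{\theta^{tr}x+x^{tr}\psi x}-1-\theta^{tr}h_\epsilon(x)\bigr)
 +\bigl(1-e^{x^{tr}\psi x}\bigr),
\end{equation*}
whose first summand is of the form $e^{x^{tr}\psi x}f(\theta^{tr}x)$ with $f(y)=y e^{y}-e^{y}+1\geq 0$, and whose other two summands are absolutely $\widetilde\mu$-integrable (by (a) and the hypothesis on $e^{x^{tr}\psi x}-1$). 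Only by isolating this \emph{sign-definite} piece can one pass from the signed identity (\ref{mgEquation4Linear}) — after checking separately that $\theta^{tr}\widetilde b$ and $\theta^{tr}c\theta$ lie in $L(A)$ — to absolute integrability. Your own pair of algebraic identities for the second claim, once you actually collapse them, gives $\theta^{tr}x e^{u}-e^{u}+1=e^{x^{tr}\psi x}f(\theta^{tr}x)-(e^{x^{tr}\psi x}-1)$ with $u=\theta^{tr}x+x^{tr}\psi x$, which is precisely the structure the paper exploits; but as written you never record the positivity of $f$ nor explain why the presence of a nonnegative piece together with (\ref{mgEquation4Linear}) closes the argument, and you left the first claim (on which you say the second "reduces") unsubstantiated. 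Also, the observation that $-\theta^{tr}\widetilde b-\theta^{tr}c\theta\in L(A)$ itself requires a separate argument (the paper uses the canonical decomposition of $\theta\is\widetilde X$ and the bound $\theta^{tr}c\theta\is A\leq I_{\{|\theta^{tr}\Delta\widetilde X|\leq\epsilon\}}\is[\theta\is\widetilde X,\theta\is\widetilde X]$), which you state without justification.
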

\subsubsection{Proof of Theorems  \ref{mainTHM4LinearEsscher}, \ref{mainTHM4LinearEsscherBIS} and \ref{Charaterization4LinearEsscher}}
Besides Lemmas \ref{Integrability4ThetaProcesses}  and \ref{TechnicalLemma2}, the two theorems require the following third lemma. 
\begin{lemma}\label{Minimization2Root} Suppose $S$ is locally bounded, and let $\widehat\theta$ be a predictable process and $\psi$ a predictable and locally bounded process. Then the following assertions hold.\\
{\rm{(a)}} Let $Q$ be a probability and $(\widetilde{b}^Q,c^Q,\widetilde{F}^Q,A^Q)$ be the predictable characteristics of $\widetilde{X}$ under $Q$. Then $Q\otimes{dA^Q}$-a.e. $\widehat\theta$ is a pointwise solution to
\begin{equation}\label{minimization(Q)}
\min_{\theta}\left(\theta^{tr}\widetilde{b}^Q+{1\over{2}}\theta^{tr}c^Q\theta+\int\left(\exp(\theta^{tr}x)-1-\theta^{tr}x\right)\widetilde{F}^Q(dx)\right)
\end{equation}
 if and only if  $\widehat\theta$ satisfies
\begin{equation}\label{Equation(3.7)}
\widetilde{b}^Q+c^Q\widehat\theta+\int\left(x{e}^{\widehat\theta^{tr}x}-x\right)\widetilde{F}^Q(dx)\equiv 0,\quad Q\otimes{dA^Q}-a.e..
\end{equation}
{\rm{(b)}} Suppose that ${\widehat\theta}^{tr}c\widehat\theta\is{A}+\sqrt{(\exp(\widehat\theta^{tr}x+x^{tr}\psi{x})-1)^2\star\widetilde\mu}\in{\cal{A}}_{loc}^+$. Then\\
$Z:={\cal{E}}\left(\widehat\theta\is{X}^c+(\exp(\widehat\theta^{tr}x+x^{tr}\psi{x})-1)\star(\widetilde\mu-\widetilde\nu)\right)\in{\cal{Z}}_{loc}(S)$ if and only if $\widehat\theta$ solves (\ref{mgEquation4Linear}).
\end{lemma}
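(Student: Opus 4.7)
The plan is to treat the two parts separately, since they have different flavors: (a) is a pointwise deterministic convex-optimization problem, whereas (b) is a stochastic-calculus drift computation based on Yor's product formula.

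For part (a), fix $(\omega,t)$ and view
\[
g(\theta):=\theta^{tr}\widetilde b^Q+\tfrac{1}{2}\theta^{tr}c^Q\theta+\int\bigl(e^{\theta^{tr}x}-1-\theta^{tr}x\bigr)\widetilde F^Q(dx)
\]
as a function on $\mathbb{R}^d$ with values in $[0,+\infty]$. Because $c^Q$ is symmetric positive semidefinite and $u\mapsto e^u-1-u\geq 0$ is convex, $g$ is convex and lower semicontinuous. Whenever $g(\widehat\theta)<\infty$, dominated convergence justifies differentiation under the integral in a neighborhood of $\widehat\theta$: the difference quotients for $x_i(e^{\theta^{tr}x}-1)$ are dominated for large $|x|$ via the finiteness of $g(\widehat\theta\pm re_i)$, and for small $|x|$ via a second-order Taylor estimate. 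The resulting gradient is $\widetilde b^Q+c^Q\theta+\int x(e^{\theta^{tr}x}-1)\widetilde F^Q(dx)$, and by convexity $\widehat\theta$ is a pointwise minimizer if and only if this gradient vanishes, which is exactly \eqref{Equation(3.7)}.

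For part (b), the hypothesis $\widehat\theta^{tr}c\widehat\theta\is A+\sqrt{(e^{\widehat\theta^{tr}x+x^{tr}\psi x}-1)^2\star\widetilde\mu}\in\mathcal{A}_{loc}^+$ guarantees that
\[
N:=\widehat\theta\is X^c+\bigl(e^{\widehat\theta^{tr}x+x^{tr}\psi x}-1\bigr)\star(\widetilde\mu-\widetilde\nu)
\]
is a well-defined local martingale with $\Delta N>-1$, so $Z=\mathcal{E}(N)>0$ is a positive local martingale. Using $S^i=S_0^i\mathcal{E}(\widetilde X^i)$ from \eqref{fromXtoXtilde1}, one has $Z\in\mathcal{Z}_{loc}(S)$ if and only if $Z\mathcal{E}(\widetilde X^i)$ is a local martingale componentwise, and Yor's product formula gives $Z\mathcal{E}(\widetilde X^i)=\mathcal{E}(N+\widetilde X^i+[N,\widetilde X^i])$. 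A direct computation of $[N,\widetilde X]$, using $\widetilde X^c=X^c$, yields $\widehat\theta^{tr}c\is A$ for the continuous covariation and $(e^{\widehat\theta^{tr}x+x^{tr}\psi x}-1)\,x\star\widetilde\mu$ for the purely discontinuous part. Substituting the canonical decomposition \eqref{Canonical4Xtilde} of $\widetilde X$ and combining the jump integrals, Lemma \ref{TechnicalLemma2}(a) lets me split $[xe^{\widehat\theta^{tr}x+x^{tr}\psi x}-h_\epsilon(x)]\star\widetilde\mu$ into its $(\widetilde\mu-\widetilde\nu)$-martingale piece and its predictable compensator $\int[\cdots]\widetilde F(dx)\is A$. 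Gathering all local-martingale terms leaves the predictable finite-variation remainder
\[
\Bigl(\widetilde b+c\widehat\theta+\int\bigl[xe^{\widehat\theta^{tr}x+x^{tr}\psi x}-h_\epsilon(x)\bigr]\widetilde F(dx)\Bigr)\is A,
\]
which is a local martingale if and only if it vanishes $P\otimes A$-a.e., i.e.\ if and only if $\widehat\theta$ solves \eqref{mgEquation4Linear}.

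The main obstacle in both parts is controlling integrability of the jump integrands. In (a), this is the standard convex-analytic issue of differentiating an exponential integral under the integral sign and is settled by the finiteness of $g$ at the candidate minimizer. In (b), the delicate step is the splitting of $[xe^{\widehat\theta^{tr}x+x^{tr}\psi x}-h_\epsilon(x)]\star\widetilde\mu$ into a local-martingale part and a predictable compensator, which requires $\widetilde\nu$-integrability of the integrand; this is exactly what Lemma \ref{TechnicalLemma2}(a) delivers through its equivalent formulations of the integrability condition, and it is what makes the drift equation \eqref{mgEquation4Linear} meaningful in the first place.
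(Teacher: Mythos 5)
Your argument is correct and, for part (a), is essentially the paper's: the paper perturbs along $\overline\theta=\widehat\theta+\epsilon\Delta$, divides by $\epsilon$ of either sign and passes to the limit, which is exactly your directional-derivative/gradient computation; you additionally make explicit the converse implication (stationarity implies minimality), which the paper leaves implicit in the convexity of the objective. For part (b) the paper's written proof of Lemma \ref{Minimization2Root} contains no argument at all --- the Yor-formula computation of $\widetilde X+[\widetilde X,N]$ and the identification of its predictable finite-variation part with the left-hand side of (\ref{mgEquation4Linear}) are carried out inside the proof of Theorem \ref{mainTHM4LinearEsscher} --- so your proof usefully supplies what the paper defers. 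One small correction: you justify compensating $\left(xe^{\widehat\theta^{tr}x+x^{tr}\psi x}-h_\epsilon(x)\right)\star\widetilde\mu$ by Lemma \ref{TechnicalLemma2}-(a), but that lemma operates under (\ref{IntegrabilityCondition4Linear}), which is not assumed in part (b); the hypothesis there is only the weaker square-root condition. The local integrability you need follows instead from the local boundedness of $S$ and $\psi$ combined with that square-root condition: write $xe^{u}-h_\epsilon(x)=x(e^{u}-1)+(x-h_\epsilon(x))$ with $u:=\widehat\theta^{tr}x+x^{tr}\psi x$, bound $\Vert x\Vert\,\vert e^{u}-1\vert$ by $\tfrac12\Vert x\Vert^2+\tfrac12(e^{u}-1)^2$ on $\{\vert e^{u}-1\vert\leq1\}$ and by a constant times $\vert e^{u}-1\vert$ on the complement, in the spirit of Remark \ref{Remark4SlocallyBoundedCase}-(b). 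With that repair the equivalence in (b) goes through exactly as you describe.
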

\begin{proof}[Proof of Theorem \ref{mainTHM4LinearEsscher}] 1) suppose that $Z\in{\cal{Z}}^{LE}(S)$, and hence there exists a triplet $(\theta,\psi,K)$ of predictable processes such that 
\begin{equation}
(\theta,\psi)\in\Theta(S),\quad K\in {\cal{A}}_{loc},\quad Z=\exp\left(\theta\is\widetilde{X}+\sum(\Delta\widetilde{X})^{tr}\psi\Delta\widetilde{X}-K\right)\in{\cal{Z}}_{loc}(S).
\end{equation}
Then we put $\widetilde{Y}:=\theta\is{\widetilde{X}}+\sum(\Delta{\widetilde{X}})^{tr}\psi\Delta{\widetilde{X}}-\widetilde{K},$ and  we use It\^o to derive the following.
\begin{equation*}
\begin{split}Z_{-}^{-1}\is{Z}&=\widetilde{Y}+{1\over{2}}\langle{\widetilde{Y}}^c\rangle+\sum\left(e^{\Delta{\widetilde{Y}}}-1-\Delta{\widetilde{Y}}\right)\\
&=\theta\is{\widetilde{X}}+\sum(\Delta{\widetilde{X}})^{tr}\psi\Delta{\widetilde{X}}-\widetilde{K}+{{{\theta^{tr}}c\theta}\over{2}}\is A+\left(\exp(\theta^{tr}{x}+{x}^{tr}\psi{{x}})-1-\theta^{tr}{x}-{x}^{tr}\psi{{x}}\right)\star\widetilde{\mu}\\
&=\theta\is{\widetilde{X}}+\left(\exp(\theta^{tr}{x}+{x}^{tr}\psi{{x}})-1-\theta^{tr}{x}\right)\star\widetilde{\mu}+{{{\theta^{tr}}c\theta}\over{2}}\is{A}-\widetilde{K}.
\end{split}
\end{equation*}
By applying Lemma \ref{Integrability4ThetaProcesses} to $Y=\widetilde{X}$ (i.e. (\ref{Decomposition4Theta(X)}), we put $\widetilde{b}^{\theta}:=\widetilde{b}-\int h_{\epsilon}(x)I_{\{\vert\theta^{tr}x\vert>\epsilon\}}$ and we get
\begin{equation}
\begin{split}
Z_{-}^{-1}\is{Z}&=\theta\is{X^c}+\theta^{tr}x{I}_{\{\vert\theta^{tr}x\vert\leq\epsilon \}}\star(\widetilde\mu-\widetilde\nu)+\left(\theta^{tr}b^{\theta}+\int \theta^{tr}xI_{\{\vert\theta^{tr}x\vert\leq\epsilon<\vert{x}\vert\}}\widetilde{F}(dx)\right)\is{A}\\
&+\theta^{tr}x{I}_{\{\vert\theta^{tr}x\vert>\epsilon\}}\star\widetilde\mu+\left(\exp(\theta^{tr}{x}+{x}^{tr}\psi{{x}})-1-\theta^{tr}{x}\right)\star\widetilde{\mu}+{{{\theta^{tr}}c\theta}\over{2}}\is{A}-\widetilde{K}\\
&=\theta\is{X^c}+\theta^{tr}x{I}_{\{\vert\theta^{tr}x\vert\leq\epsilon \}}\star(\widetilde\mu-\widetilde\nu)+\left(\exp(\theta^{tr}{x}+{x}^{tr}\psi{{x}})-1-\theta^{tr}{x}{I}_{\{\vert\theta^{tr}x\vert\leq\epsilon\}}\right)\star\widetilde{\mu}\\
&+\left({{{\theta^{tr}}c\theta}\over{2}}+\theta^{tr}\widetilde{b}^{\theta}+\int \theta^{tr}xI_{\{\vert\theta^{tr}x\vert\leq\epsilon<\vert{x}\vert\}}\widetilde{F}(dx)\right)\is{A}-\widetilde{K}.
\end{split}
\end{equation}
Then, on the one hand, $Z$ is a local martingale if and only if 
\begin{equation}\label{Integrability1Linear}
 \left(\exp(\theta^{tr}{x}+{x}^{tr}\psi{{x}})-1-\theta^{tr}x{I}_{\{\vert\theta^{tr}x\vert\leq\epsilon\}}\right)\star\widetilde{\mu}\in{\cal{A}},\end{equation}
 and 
 \begin{equation}\label{MaringaleFormLinear}
 \begin{split}
 \widetilde{K}&=\left({{{\theta^{tr}}c\theta}\over{2}}+\theta^{tr}\widetilde{b}^{\theta}+\int \theta^{tr}xI_{\{\vert\theta^{tr}x\vert\leq\epsilon<\vert{x}\vert\}}\widetilde{F}(dx)\right)\is{A} +\left(\exp(\theta^{tr}{x}+{x}^{tr}\psi{{x}})-1-\theta^{tr}x{I}_{\{\vert\theta^{tr}x\vert\leq\epsilon\}}\right)\star\widetilde{\nu}\\
 &=\left({{{\theta^{tr}}c\theta}\over{2}}+\theta^{tr}\widetilde{b}\right)\is{A}+\left(\exp(\theta^{tr}{x}+{x}^{tr}\psi{{x}})-1-\theta^{tr}h_{\epsilon}(x)\right)\star\widetilde{\nu},\\
 &Z={\cal{E}}\left(\theta\is{X}^c+\left(\exp(\theta^{tr}{x}+{x}^{tr}\psi{{x}})-1\right)\star(\widetilde{\mu}-\widetilde{\nu})\right)=:{\cal{E}}(\widetilde{N}).
 \end{split}
 \end{equation}
 This proves (\ref{Zequation4Linear}). On the other hand, thanks to (\ref{Canonical4Xtilde}) and for $\delta\in(0,1)$, we derive 
\begin{equation*}
\begin{split}
\widetilde{X}+[\widetilde{X},\widetilde{N}]=&X^c+h_{\delta}(x)\star(\widetilde{\mu}-\widetilde{\nu})+\widetilde{b}\is{A}+(x-h_{\delta}(x))\star\widetilde{\mu}+c\theta\is{A}\\
&+\left(\exp(\theta^{tr}{x}+{x}^{tr}\psi{{x}})-1\right)x\star\widetilde{\mu}\\
=&X^c+h_{\delta}(x)\star(\widetilde{\mu}-\widetilde{\nu})+(\widetilde{b}+c\theta)\is{A}+\left(\exp(\theta^{tr}{x}+{x}^{tr}\psi{{x}})x-h_{\delta}(x)\right)\star\widetilde{\mu}.\\
\end{split}
\end{equation*}
Thus, $ZS$ is a local martingale if and only if $\widetilde{X}+[\widetilde{X},N]$ is a local martingale if and only if 
\begin{equation}\label{Integrability2Linear}
\left(\exp(\theta^{tr}{x}+{x}^{tr}\psi{{x}})x-h_{\delta}(x)\right)\star\widetilde{\mu}\in{\cal{A}},
\end{equation}
 and (\ref{mgEquation4Linear}) holds. Furthermore, by combining (\ref{Integrability2Linear}), (\ref{Integrability1Linear}) and Lemma \ref{TechnicalLemma2}-(a), we obtain the condition (\ref{IntegrabilityCondition4Linear}). This proves that $Z\in{\cal{Z}}^{LE}(S)$ implies the existence of $(\theta,\psi)\in\Theta(S)$ satisfying (\ref{IntegrabilityCondition4Linear})-(\ref{Zequation4Linear})- (\ref{mgEquation4Linear}). To prove the reverse, we assume the existence of such a pair $(\theta,\psi)$, and thanks to It\^o, we derive $Z={\cal{E}}(\widetilde{N})=\exp(L)$, where
 \begin{equation}\label{Equa299}
 \begin{split}
 L=&\widetilde{N}+{1\over{2}}\langle\widetilde{N}^c\rangle+\sum\left(\ln(1+\Delta\widetilde{N})-\Delta\widetilde{N}\right)\\
 =&\theta\is{X}^c+\left(e^{\theta^{tr}{x}+{x}^{tr}\psi{{x}}}-1\right)\star(\widetilde{\mu}-\widetilde{\nu})+{1\over{2}}\theta^{tr}c\theta\is{A}+\left(\theta^{tr}{x}+{x}^{tr}\psi{{x}}-e^{\theta^{tr}{x}+{x}^{tr}\psi{{x}}}+1\right)\widetilde\mu\\
 =&\theta\is{X}^c+\left(e^{\theta^{tr}{x}+{x}^{tr}\psi{{x}}}-1\right)\star(\widetilde{\mu}-\widetilde{\nu})+\left(\theta^{tr}h_{\epsilon}(x)-e^{\theta^{tr}{x}+{x}^{tr}\psi{{x}}}+1\right)\star\widetilde\mu\\
 &+\left(\theta^{tr}{x}I_{\{\Vert{x}\Vert>\epsilon\}}+{x}^{tr}\psi{{x}}\right)\widetilde\mu.
 \end{split}
 \end{equation}
 Thanks to Lemma \ref{TechnicalLemma2}-(b), the condition (\ref{IntegrabilityCondition4Linear}) implies that 
 $\left(\theta^{tr}h_{\epsilon}(x)-e^{\theta^{tr}{x}+{x}^{tr}\psi{{x}}}+1\right)\star\widetilde\mu\in{\cal{A}}_{loc}.$
 Thus, by inserting the compensator of this process in (\ref{Equa299}), arranging terms, and inserting (\ref{mgEquation4Linear}) in the resulting equation afterwards, we obtain 
  \begin{equation*}
 \begin{split}
 L&=\theta\is{X}^c+\theta^{tr}h_{\epsilon}(x)\star(\widetilde{\mu}-\widetilde{\nu})+\left(\theta^{tr}h_{\epsilon}(x)-e^{\theta^{tr}{x}+{x}^{tr}\psi{{x}}}+1\right)\star\widetilde\nu+\left(\theta^{tr}{x}I_{\{\Vert{x}\Vert>\epsilon\}}+{x}^{tr}\psi{{x}}\right)\star\widetilde\mu\\
 &=\theta\is{X}^c+\theta^{tr}h_{\epsilon}(x)\star(\widetilde{\mu}-\widetilde{\nu})+\theta^{tr}\widetilde{b}\is{A}+\theta^{tr}{x}I_{\{\Vert{x}\Vert>\epsilon\}}\star\widetilde\mu\\
 &\quad+\theta^{tr}c\theta\is{A}+\left(\theta^{tr}x\exp(\theta^{tr}x+x^{tr}\psi{x})-\exp(\theta^{tr}x+x^{tr}\psi{x})+1\right)\star\widetilde\nu+\sum(\Delta\widetilde{X})^{tr}\psi\Delta\widetilde{X}\\
 &=\theta\is\widetilde{X}+\sum(\Delta\widetilde{X})^{tr}\psi\Delta\widetilde{X}-\widetilde{K},
  \end{split}
 \end{equation*}
 where $-\widetilde{K}:=\theta^{tr}c\theta\is{A}+\left(\theta^{tr}x\exp(\theta^{tr}x+x^{tr}\psi{x})-\exp(\theta^{tr}x+x^{tr}\psi{x})+1\right)\star\widetilde\nu$ is predictable and belongs to ${\cal{A}}_{loc}$. This ends the proof of the theorem.\end{proof}
\begin{proof}[Proof of Theorem  \ref{mainTHM4LinearEsscherBIS}]  Remark that (c) $\Longrightarrow$ (d) is obvious. Hence the rest of this proof if divided into three parts, where we prove (a) $\Longrightarrow$ (b), (b) $\Longrightarrow$ (c) and (d) $\Longrightarrow$ (a) respectively.\\
{\bf Part 1.} Here we prove the implication (a) $\Longrightarrow$ (b). Thus, we suppose that assertion (a) holds, and in virtue of Theorem  \ref{mainTHM4LinearEsscher} we obtain the existence of $(\theta,\psi)\in\Theta(S)$ such that (\ref{IntegrabilityCondition4Linear}) holds and 
\begin{equation*}
Z=\mathcal{E}\Bigl(\theta \is X^c + \left(\exp(\theta^{tr} x+x^{tr} \psi x)-1\right)\star (\widetilde{\mu}-\widetilde{\nu}) \Bigr)\in{\cal{Z}}_{loc}(S).
\end{equation*}
Thus, assertion (b) follows as soon as we prove that $Z\ln(Z)$ is a special semimartingale. Thanks to \cite[Lemma 3.2 and Proposition 3.5]{Choulli2005}, we have
\begin{equation*}
\begin{split}
&Z\ln(Z)=(\ln(Z_{-})+1)\is\widetilde{N}+Z_{-}\is{H}^E(Z,P),\quad\mbox{and}\\
&{H}^E(Z,P)={1\over{2}}\theta^{tr}c\theta\is{A}+\left((\theta^{tr}x+x^{tr}\psi{x})e^{\theta^{tr}x+x^{tr}\psi{x}}-e^{\theta^{tr}x+x^{tr}\psi{x}}+1 \right)\star\widetilde\mu.
\end{split}
\end{equation*}
Then $Z\ln(Z)$ is a special semimartingale if and only if 
\begin{equation}\label{Condition4Zlog(Z)}
\left((\theta^{tr}x+x^{tr}\psi{x})e^{\theta^{tr}x+x^{tr}\psi{x}}-e^{\theta^{tr}x+x^{tr}\psi{x}}+1 \right)\star\widetilde\mu\in{\cal{A}}_{loc}^+.
\end{equation}
On the one hand, thanks to Lemma \ref{TechnicalLemma2}-(c), we deduce that 
\begin{equation*}
\vert\theta^{tr}x{e}^{\theta^{tr}x+x^{tr}\psi{x}}-e^{\theta^{tr}x+x^{tr}\psi{x}}+1\vert\star\widetilde\mu\in{\cal{A}}_{loc}^+
\end{equation*}
holds due to (\ref{IntegrabilityCondition4Linear}),  (\ref{mgEquation4Linear})  and the local boundedness of $\widetilde{X}$ which implies that $\vert{1}-e^{x^{tr}\psi{x}}\vert\star\widetilde\mu\in{\cal{A}}^+_{loc}$. On the other hand, in virtue of Remark \ref{Condition(3.6)forBoundedS} and the local boundedness of $\widetilde{X}$ again, we derive 
\begin{equation*}
\begin{split}
\vert{x}^{tr}\psi{x}\vert{e}^{\theta^{tr}x+x^{tr}\psi{x}}\star\widetilde\mu&=\vert{x}^{tr}\psi{x}\vert{e}^{\theta^{tr}x+x^{tr}\psi{x}}I_{\{\vert\theta^{tr}x\vert\leq\epsilon\}}\star\widetilde\mu+\vert{x}^{tr}\psi{x}\vert{e}^{\theta^{tr}x+x^{tr}\psi{x}}I_{\{\vert\theta^{tr}x\vert>\epsilon\}}\star\widetilde\mu\\
&\leq e^{\epsilon}\sum\vert(\Delta\widetilde{X})^{tr}\psi\Delta\widetilde{X}\vert{e}^{(\Delta\widetilde{X})^{tr}\psi\Delta\widetilde{X}}+\vert(\Delta\widetilde{X})^{tr}\psi\Delta\widetilde{X}\vert{e}^{(\Delta\widetilde{X})^{tr}\psi\Delta\widetilde{X}}\is{U}_{\epsilon}\in{\cal{A}}_{loc}^+,
\end{split}
\end{equation*}
where $U_{\epsilon}$ is defined in (\ref{SimplifiedIntreCondi}). This proves that $Z\in{\cal{Z}}^{L\log{L}}_{loc}(S)$, and assertion (b) follows immediately. \\
{\bf Part 2.} Hereto we prove (b) $\Longrightarrow$ (c). To this end, we suppose that assertion (b) holds, and consider  a predictable and locally bounded process $\psi$. Then define 
\begin{equation*}
\begin{split}
f(\theta)&:=\theta^{tr}\widetilde{b}'+{1\over{2}}\theta^{tr}c\theta+\int\left( e^{\theta^{tr}x+x^{tr}\psi{x}}-1-\theta^{tr}x\right)\widetilde{F}(dx)\\
f_{\psi}(\theta)&:=\theta^{tr}b^{\psi}+{1\over{2}}\theta^{tr}c\theta+\int\left( e^{\theta^{tr}x}-1-\theta^{tr}x\right)\widetilde{F}^{\psi}(dx)\\
\widetilde{F}^{\psi}(dx)&:=e^{x^{tr}\psi{x}}\widetilde{F}(dx),\quad b^{\psi}:=\widetilde{b}'+\int{x}\left(e^{x^{tr}\psi{x}}-1\right)\widetilde{F}(dx),\quad \widetilde\Gamma^{\psi}:=\int\left(e^{x^{tr}\psi{x}}-1\right)\widetilde{F}(dx)
\end{split}
\end{equation*}
Then, direct calculation shows that $ f(\theta)=f_{\psi}(\theta)+\widetilde\Gamma^{\psi}$, and hence
\begin{equation}\label{equivalence4Optimisation}
\min_{\theta}f(\theta)\quad \mbox{has a solution}\quad\mbox{if and only if}\quad \min_{\theta}f_{\psi}(\theta)\quad \mbox{has a solution}.
\end{equation}
By stopping, we assume without loss of generality that 
\begin{equation}
Z^{\psi}\in{\cal{M}}\quad \mbox{and}\quad Q^{\psi}:=Z^{\psi}_T\cdot P\quad\mbox{is a well defined probability measure}.
\end{equation}
Hence, under $Q^{\psi}$, the process $\widetilde{X}$ admits the following canonical decomposition
 \begin{equation}
 \widetilde{X}=b^{\psi}\is A+x\star(\widetilde\mu-\widetilde\nu^{\psi})+X^c,\quad \widetilde\nu^{\psi}(dt,dx):=e^{x^{tr}\psi{x}}\widetilde\nu(dt,dx)=\widetilde{F}^{\psi}_t(dx)dA_t.
 \end{equation}
 Then, in virtue of the local boundedness of $Z^{\psi}$ and $1/Z^{\psi}$, it is clear that
 \begin{equation}\label{Equivalence4Z(LlogL)}
 {\cal{Z}}^{L\log{L}}_{loc}(S)\not=\emptyset\ \Longleftrightarrow\  {\cal{Z}}^{L\log{L}}_{loc}(S,Q^{\psi})\not=\emptyset,\ \forall\ \psi\ \mbox{predictable and locally bounded}.
  \end{equation}
  Therefore, in virtue of (\ref{Equivalence4Z(LlogL)}) and (\ref{equivalence4Optimisation}), our minimization problem of (\ref{MinimizationProblem}) has a solution if and only if  the minimization problem 
  \begin{equation}\label{minimisationwithpsi}
   \min_{\theta}f_{\psi}(\theta),
   \end{equation}has a solution under the condition ${\cal{Z}}^{L\log{L}}_{loc}(S,Q^{\psi})\not=\emptyset$. Thus, the resulting problem is exactly the minimization problem considered in \cite[Lemma 4.4]{Choulli2005} for the model $(\widetilde{X},Q^{\psi})$ instead, and hence the existence of a unique solution $\widehat\theta^{\psi}=:\widehat\theta$ is guaranteed. Thus, assertion (c) will follow as soon as we prove that this solution $\widehat\theta$ satisfies 
   \begin{equation}\label{Claim2prove}
   {\widehat\theta}^{tr}c{\widehat\theta}\is{A}+f_{L\log{L}}(\exp({\widehat\theta}^{tr}x)-1)\star\widetilde\nu^{\psi}\in{\cal{A}}_{loc}(Q^{\psi}).
   \end{equation}
To prove this latter fact, we use Lemma \ref{Minimization2Root}-(a) for $Q=Q^{\psi}$ and conclude that $\widehat\theta$, the solution  to (\ref{minimisationwithpsi}), satisfies
   \begin{equation}\label{MgEquation4psi}
   b^{\psi}+c\widehat\theta+\int\left(xe^{\widehat\theta^{tr}x}-x\right)\widetilde{F}^{\psi}(dx)=0.
   \end{equation}
   Besides this, a combination of $ {\cal{Z}}^{L\log{L}}_{loc}(S,Q^{\psi})\not=\emptyset$ and Theorem \ref{Characgteristics4Deflator}  yields the existence of a pair $(\beta, f)$ such that
   \begin{equation*}
     b^{\psi}+c\beta+\int{x}\left(f(x)-1)\right)\widetilde{F}^{\psi}(dx)=0,\ \mbox{and}\ \beta^{tr}c\beta\is{A}+\left(f\ln(f)-f+1\right)\star\widetilde\nu^{\psi}\in{\cal{A}}^+_{loc}(Q^{\psi}).
   \end{equation*}
   Thus, by combining the first equality above with (\ref{MgEquation4psi}), we get 
   \begin{equation}\label{MgEquationDifference}
   c(\beta-\widehat\theta)+\int{x}\left(f(x)-e^{\widehat\theta^{tr}x}\right)\widetilde{F}^{\psi}(dx)=0.
   \end{equation}
   Thanks to the convexity of $\theta^{tr}c\theta$ and $g(y):=y\ln(y)-y+1$ for $y>0$, we get 
    \begin{equation*}
    \begin{split}
   \beta^{tr}c\beta\geq    {\widehat\theta}^{tr}c{\widehat\theta}+(\beta-\widehat\theta)^{tr}c{\widehat\theta}\ \mbox{and}\quad
   g(f(x))- g(e^{{\widehat\theta}^{tr}x})\geq {\widehat\theta}^{tr}x(f(x)-e^{{\widehat\theta}^{tr}x}).
   \end{split}
   \end{equation*}
    By integrating the two inequalities above with respect to $A$ and $\widetilde\nu^{\psi}$ respectively, adding them and using (\ref{MgEquationDifference}) afterwards, we get 
        \begin{equation*}
        \beta^{tr}c\beta\is{A}+\left(f\ln(f)-f+1\right)\star\widetilde\nu^{\psi}\geq  {\widehat\theta}^{tr}c{\widehat\theta}\is{A}+f_{L\log{L}}(\exp({\widehat\theta}^{tr}x)-1)\star\widetilde\nu^{\psi},
           \end{equation*}
           and the claim (\ref{Claim2prove}) follows immediately. This proves assertion (c).\\
{\bf Part 3.} This part proves (c) $\Longrightarrow$ (a). Thus, we suppose that assertion (c) holds, and remark that we always have $f_{L\log{L}}(y):=(y+1)\ln(1+y)-y\geq \vert{y}\vert{I}_{\{y>2e-1\}}+{{y^2}\over{4e}}I_{\{\vert{y}\vert\leq 2e-1\}}$ for any $y>-1$. Thus, by combining this with the condition (\ref{Integrability4LinearExponential}) and  \cite[Proposition B.2-(a)]{Choulli2018}, which states that  $\sqrt{(e^{\widehat\theta^{tr}x+x^{tr}\psi{x}}-1)^2\star\widetilde\mu}\in{\cal{A}}_{loc}^+$ if and only if for any $\alpha>0$
$$(e^{\widehat\theta^{tr}x+x^{tr}\psi{x}}-1)^2I_{\{ \vert{e}^{\widehat\theta^{tr}x+x^{tr}\psi{x}}-1\vert\leq \alpha\}}\star\widetilde\mu+ \vert{e}^{\widehat\theta^{tr}x+x^{tr}\psi{x}}-1\vert{I}_{\{ \vert{e}^{\widehat\theta^{tr}x+x^{tr}\psi{x}}-1\vert>\alpha\}}\star\widetilde\mu\in{\cal{A}}_{loc}^+,$$
 we deduce that 
$$
\sqrt{(e^{\widehat\theta^{tr}x+x^{tr}\psi{x}}-1)^2\star\widetilde\mu}\in{\cal{A}}_{loc}^+\quad\mbox{and}\quad \widehat\theta\in L(X^c).$$
Thus, as a result, by combining this with Lemma \ref{Minimization2Root}, we deduce that following process
\begin{equation}\label{Zexpression}
Z:={\cal{E}}\left(\widehat\theta\is{X}^c+(e^{\widehat\theta^{tr}x+x^{tr}\psi{x}}-1)\star(\widetilde\mu-\widetilde\nu)\right)\in {\cal{Z}}_{loc}^{L\log{L}}(S).\end{equation}
Now, thanks to It\^o, we calculate the semimartingale $\ln(Z)$ as follows
\begin{equation}\label{Ito100}
\begin{split}
\ln(Z)=&\widehat\theta\is{X}^c+(e^{\widehat\theta^{tr}x+x^{tr}\psi{x}}-1)\star(\widetilde\mu-\widetilde\nu)-{1\over{2}}\widehat\theta^{tr}c\widehat\theta\is A\\
&+\left(\widehat\theta^{tr}x+x^{tr}\psi{x}-e^{\widehat\theta^{tr}x+x^{tr}\psi{x}}+1\right)\star\widetilde\mu\\
=&\widehat\theta\is{X}^c+(e^{\widehat\theta^{tr}x+x^{tr}\psi{x}}-1)\star(\widetilde\mu-\widetilde\nu)-{1\over{2}}\widehat\theta^{tr}c\widehat\theta\is A\\
&+\left(\widehat\theta^{tr}x-e^{\widehat\theta^{tr}x+x^{tr}\psi{x}}+1\right)\star\widetilde\mu+\sum (\Delta\widetilde{X})^{tr}\psi\Delta\widetilde{X}.
\end{split}
\end{equation}
Remark that for any $n$, it is clear that $I_{\{\Vert\widehat\theta\Vert\leq n\}}\vert\widehat\theta^{tr}x-e^{\widehat\theta^{tr}x+x^{tr}\psi{x}}+1\vert\star\widetilde\mu\in{\cal{A}}_{loc}^+$. Thus, by compensating this process in (\ref{Ito100}) and using (\ref{MgEquation4psi}) afterwards, we derive
\begin{equation}\label{log(Z)}
\begin{split}
&I_{\{\Vert\widehat\theta\Vert\leq n\}}\is\ln(Z)\\
&=I_{\{\Vert\widehat\theta\Vert\leq n\}}\widehat\theta\is{X}^c+I_{\{\Vert\widehat\theta\Vert\leq n\}}\widehat\theta^{tr}x\star(\widetilde\mu-\widetilde\nu)-{1\over{2}}I_{\{\Vert\widehat\theta\Vert\leq n\}}\widehat\theta^{tr}c\widehat\theta\is A\\
&\quad+I_{\{\Vert\widehat\theta\Vert\leq n\}}\left(\widehat\theta^{tr}x-e^{\widehat\theta^{tr}x+x^{tr}\psi{x}}+1\right)\star\widetilde\nu+\sum I_{\{\Vert\widehat\theta\Vert\leq n\}}(\Delta\widetilde{X})^{tr}\psi\Delta\widetilde{X}\\
&=I_{\{\Vert\widehat\theta\Vert\leq n\}}\widehat\theta\is{X}^c+I_{\{\Vert\widehat\theta\Vert\leq n\}}\widehat\theta^{tr}x\star(\widetilde\mu-\widetilde\nu)-{1\over{2}}I_{\{\Vert\widehat\theta\Vert\leq n\}}\widehat\theta^{tr}c\widehat\theta\is A+\sum I_{\{\Vert\widehat\theta\Vert\leq n\}}(\Delta\widetilde{X})^{tr}\psi\Delta\widetilde{X}\\
&\quad+I_{\{\Vert\widehat\theta\Vert\leq n\}}\left(\widehat\theta^{tr}x(1-e^{\widehat\theta^{tr}x+x^{tr}\psi{x}})+ (\widehat\theta^{tr}x)e^{\widehat\theta^{tr}x+x^{tr}\psi{x}}-e^{\widehat\theta^{tr}x+x^{tr}\psi{x}}+1\right)\star\widetilde\nu\\
&=I_{\{\Vert\widehat\theta\Vert\leq n\}}\widehat\theta\is{X}^c+I_{\{\Vert\widehat\theta\Vert\leq n\}}\widehat\theta^{tr}x\star(\widetilde\mu-\widetilde\nu)+I_{\{\Vert\widehat\theta\Vert\leq n\}}\widehat\theta^{tr}\widetilde{b}'\is{A}+\sum I_{\{\Vert\widehat\theta\Vert\leq n\}}(\Delta\widetilde{X})^{tr}\psi\Delta\widetilde{X}\\
&\quad+I_{\{\Vert\widehat\theta\Vert\leq n\}}\left((\widehat\theta^{tr}x)e^{\widehat\theta^{tr}x+x^{tr}\psi{x}}-e^{\widehat\theta^{tr}x+x^{tr}\psi{x}}+1\right)\star\widetilde\nu+{1\over{2}}I_{\{\Vert\widehat\theta\Vert\leq n\}}\widehat\theta^{tr}c\widehat\theta\is A\\
&=I_{\{\Vert\widehat\theta\Vert\leq n\}}\widehat\theta\is\widetilde{X}+\sum I_{\{\Vert\widehat\theta\Vert\leq n\}}(\Delta\widetilde{X})^{tr}\psi\Delta\widetilde{X}+I_{\{\Vert\widehat\theta\Vert\leq n\}}\is\widetilde{K},
\end{split}
\end{equation}
where $\widetilde{K}$ is given by 
$$
\widetilde{K}:=\left((\widehat\theta^{tr}x)e^{\widehat\theta^{tr}x+x^{tr}\psi{x}}-e^{\widehat\theta^{tr}x+x^{tr}\psi{x}}+1\right)\star\widetilde\nu+{1\over{2}}\widehat\theta^{tr}c\widehat\theta\is A,
$$
and clearly is a well defined predictable with finite variation process due to (\ref{Integrability4LinearExponential}) and Remark \ref{Remark4SlocallyBoundedCase}-(b). \\
Now, as the three processes $\ln(Z)$, $\widetilde{K}$ and  $V:=\sum(\Delta\widetilde{X})^{tr}\psi\Delta\widetilde{X}$ are semimartingales, then thanks to \cite[Proposition 1.7]{Stricker}, it is clear that the three processes $I_{\{\Vert\widehat\theta\Vert\leq n\}}\is\ln(Z)$, $I_{\{\Vert\widehat\theta\Vert\leq n\}}\is\widetilde{K}$ and $I_{\{\Vert\widehat\theta\Vert\leq n\}}\is{V}$ converge in the semimartingale topology to $\ln(Z)$, $\widetilde{K}$ and $V$ respectively.  This implies that $I_{\{\Vert\widehat\theta\Vert\leq n\}}\widehat\theta\is\widetilde{X}$ converges in the semimartingale topology. Thus,  in virtue of \cite[Definition 2.1]{Stricker}, we deduce that $\widehat\theta\in{L}(\widetilde{X})$, and its limits is the semimartingale $\widehat\theta\is\widetilde{X}$. Furthermore, by combining these latter remarks and (\ref {log(Z)}), we obtain 
$$
\ln(Z)=\widehat\theta\is\widetilde{X}+\sum(\Delta\widetilde{X})^{tr}\psi\Delta\widetilde{X}+\widetilde{K}.$$
Therefore, assertion (a) follows immediately (i.e.  $Z\in{\cal{Z}}^{LE}(S)$) from combining the above equality with (\ref{Zexpression}). This ends the proof of the theorem. 
\end{proof}
\begin{remark}\label{ImportantRemark4proof}
The last part of the proof above contains an important statement. This confirms the following claim: If $S$ is locally bounded and there exists a pair $(\theta,\psi)$ of predictable processes such that
 \begin{equation}\label{SetofConditions}
\psi\quad\mbox{is locally bounded, (\ref{Integrability4LinearExponential}) holds, and $\theta$ is the solution to (\ref{mgEquation4Linear})},
 \end{equation}
 then $(\theta,\psi)\in\Theta(S)$.
\end{remark}
\begin{proof}[Proof of Theorem \ref{Charaterization4LinearEsscher}] Thanks to Theorem \ref{mainTHM4LinearEsscherBIS} (assertions (a) and (d)), we deduce that $Z\in{\cal{Z}}^{LE}(S)$ if and only if there exists a pair $(\theta,\psi)$ of predictable processes such that 
\begin{equation}\label{1001}
\begin{split}
&\psi\quad\mbox{is locally bounded, (\ref{Integrability4LinearExponential}) holds, and $\theta$ is the solution to (\ref{MinimizationProblem})}\\
&Z={\cal{E}}\left(\theta\is{X}^c+(\exp(\theta^{x}+x^{tr}\psi{x})-1)\star(\widetilde\mu-\widetilde\nu)\right).
\end{split}
\end{equation}
In virtue of Lemma \ref{Minimization2Root}, we have $\theta$ is the solution to (\ref{MinimizationProblem}) if and only if $\theta$ is the solution to (\ref{mgEquation4Linear}) if and only if $Z\in{\cal{Z}}_{loc}(S)$.  Furthermore, direct calculations shows easily that $Z\in{\cal{Z}}_{loc}(S)$ if and only if $Z/Z^{\psi}\in{\cal{Z}}_{loc}(S,Z^{\psi})$. Thus, by combining all these remarks, we conclude that $Z\in{\cal{Z}}^{LE}(S)$ if and only if there exists a pair $(\theta,\psi)$ of predictable processes satisfying 
\begin{equation*}
\begin{split}
\psi\quad\mbox{ is locally bounded, (\ref{Integrability4LinearExponential})  holds and}\ Z/Z^{\psi}\in{\cal{Z}}_{loc}(S,Z^{\psi}).
\end{split}
\end{equation*}
This proves assertion (a).  this proof addresses assertion (b). On the one hand, as $\psi$ and $S$ are locally bounded, we remark that $(\exp(x^{tr}\psi{x})-1)\star(\widetilde\mu-\widetilde\nu)$ (or equivalently $Z^{(\psi)}$) is a well define local martingale. Hence,  using Yor's formula  (i.e. ${\cal{E}}(Y_1){\cal{E}}(Y_2)={\cal{E}}(Y_1+Y_2+[Y_1,Y_2])$ for any pair of semimartingales $Y_1$ and $Y_2$), we derive
\begin{equation}\label{Equa300}
\begin{split}
&Z^{(\psi)}Z^{(\theta,\psi)}:=Z^{(\psi)}{\cal{E}}\left(\theta\is{X}^c+(e^{\theta^{tr}x}-1)\star(\widetilde{\mu}-\widetilde{\nu}^{(\psi)})\right)\\
&={\cal{E}}\left((e^{x^{tr}\psi{x}}-1)\star(\widetilde{\mu}-\widetilde{\nu})+\theta\is{X}^c+(e^{\theta^{tr}x}-1)\star(\widetilde{\mu}-\widetilde{\nu}^{(\psi)})+(e^{x^{tr}\psi{x}}-1)(e^{\theta^{tr}x}-1)\star\widetilde{\mu}\right)\\
&={\cal{E}}\left(\theta\is{X}^c+ (e^{x^{tr}\psi{x}}-1)\star(\widetilde{\mu}-\widetilde{\nu})+(e^{\theta^{tr}x}-1)\star(\widetilde{\mu}-\widetilde{\nu}^{(\psi)})+(e^{x^{tr}\psi{x}}-1)(e^{\theta^{tr}x}-1)\star\widetilde{\mu}\right)\\
&={\cal{E}}\left(\theta\is{X}^c+ (\exp(\theta^{tr}x+x^{tr}\psi{x})-1)\star(\widetilde{\mu}-\widetilde{\nu})\right)=Z.
\end{split}
\end{equation}
The last equality follows from the fact that $$(e^{\theta^{tr}x}-1)\star(\widetilde{\mu}-\widetilde{\nu}^{(\psi)})=(e^{\theta^{tr}x}-1)\star(\widetilde{\mu}-\widetilde{\nu})-(e^{x^{tr}\psi{x}}-1)(e^{\theta^{tr}x}-1)\star\widetilde{\nu}.$$ 
On the other hand, by combining Theorem  \ref{mainTHM4LinearEsscherBIS}, applied to $S$ under $Z^{\psi'}$ ($\psi'$ predictable locally bounded) and the first-order-Esscher (i.e. $\psi=0$), and assertion (a), we deduce that ${\cal{Z}}^{LE}(S)\not=\emptyset$ if and only if $S$ admits the first-order-Esscher density under $Z^{\psi'}$, and this is equivalent to  if and only if  ${\cal{Z}}_{loc}^{L\log{L}}(S,Z^{\psi'})\not=\emptyset$. Thus, by combining this latter fact with (\ref{Equa300}), assertion (b) follows immediately, and the proof of the theorem is complete.
\end{proof}
\subsubsection{Proof of Theorems   \ref{mainTHM4ExponenialEsscher} and \ref{OneDimension}}
We start with the proof of Theorem \ref{mainTHM4ExponenialEsscher}.
\begin{proof}[Proof of Theorem \ref{mainTHM4ExponenialEsscher}] The proof of the theorem is given in three parts, where we prove assertions (a), (b) and (c) respectively.\\
{\bf Part 1.} In this part, we prove assertion (a). To this end, we start by  putting
$$\overline{X}:=\theta\is{X}+\sum(\Delta{X})^{tr}\psi\Delta{X}-K,\quad Z:=\exp(Y).$$ Then, due to It\^o,  we derive 
\begin{equation*}
\begin{split}Z_{-}^{-1}\is{Z}&=\overline{X}+{1\over{2}}\langle{\overline{X}}^c\rangle+\sum\left(e^{\Delta\overline{X}}-1-\Delta\overline{X}\right)\\
&=\theta\is{X}+\sum(\Delta{X})^{tr}\psi\Delta{X}-K+{{{\theta^{tr}}c\theta}\over{2}}\is A+\left(\exp(\theta^{tr}x+x^{tr}\psi{x})-1-\theta^{tr}x-x^{tr}\psi{x}\right)\star\mu\\
&=\theta\is{X}-K+{{{\theta^{tr}}c\theta}\over{2}}\is A+\left(\exp(\theta^{tr}x+x^{tr}\psi{x})-1-\theta^{tr}x\right)\star\mu.
\end{split}
\end{equation*}
Thus, by using (\ref{Decomposition4Theta(X)}) to $Y=X$ and inserting it in the above equality, see Lemma \ref{Integrability4ThetaProcesses}-(c), we derive
\begin{equation*}
\begin{split}
Z_{-}^{-1}\is{Z}=&\theta\is{X^c}+\theta^{tr}x{I}_{\{\vert\theta^{tr}x\vert\leq\epsilon \}}\star(\mu-\nu)+\left(\theta^{tr}b^{\theta}+\int \theta^{tr}xI_{\{\vert\theta^{tr}x\vert\leq\epsilon<\vert{x}\vert\}}F(dx)\right)\is A\\
&+\theta\is{U}_2^{\theta}-K+{{{\theta^{tr}}c\theta}\over{2}}\is A+\left(\exp(\theta^{tr}x+x^{tr}\psi{x})-1-\theta^{tr}x\right)\star\mu\\
=&\theta\is{X^c}+\theta^{tr}x{I}_{\{\vert\theta^{tr}x\vert\leq\epsilon \}}\star(\mu-\nu)-K+\left({{{\theta^{tr}}c\theta}\over{2}}+\theta^{tr}b^{\theta}+\int \theta^{tr}xI_{\{\vert\theta^{tr}x\vert\leq\epsilon<\vert{x}\vert\}}F(dx)\right)\is A\\
&+\left(\exp(\theta^{tr}x+x^{tr}\psi{x})-1-\theta^{tr}x{I}_{\{\vert\theta^{tr}x\vert\leq\epsilon \}}\right)\star\mu.
\end{split}
\end{equation*}
Therefore, on the one hand, $Z$ is a local martingale if and only if 
\begin{equation}\label{Integrability1}
 \left(\exp(\theta^{tr}x+x^{tr}\psi{x})-1-h_{\epsilon}(\theta^{tr}x)\right)\star\mu\in{\cal{A}}_{loc},\end{equation}
 and 
 \begin{equation}\label{MaringaleForm}
 \begin{split}
 &K=\left({{{\theta^{tr}}c\theta}\over{2}}+\theta^{tr}b^{\theta}+\int \theta^{tr}xI_{\{\vert\theta^{tr}x\vert\leq\epsilon<\vert{x}\vert\}}F(dx)\right)\is{A} +\left(\exp(\theta^{tr}x+x^{tr}\psi{x})-1-h_{\epsilon}(\theta^{tr}x)\right)\star\nu\\
 &Z={\cal{E}}\left(\theta\is{X}^c+\left(\exp(\theta^{tr}x+x^{tr}\psi{x})-1\right)\star(\mu-\nu)\right):={\cal{E}}(N).
 \end{split}
 \end{equation}
 This proves (\ref{ZFormEquation4Exponential}). On the other hand, thanks to (\ref{Canonical4Xtilde}), we derive 
\begin{equation*}
\begin{split}
\widetilde{X}+[\widetilde{X},N]=&X^c+({\bf{e}}(x)-\mathbb{I}_d)I_{\{\vert{x}\vert\leq\epsilon\}}\star(\mu-\nu)+\widetilde{b}\is{A}+({\bf{e}}(x)-\mathbb{I}_d)I_{\{\vert{x}\vert>\epsilon\}}\star\mu+c\theta\is{A}\\
&+\left(\exp(\theta^{tr}x+x^{tr}\psi{x})-1\right)({\bf{e}}(x)-\mathbb{I}_d)\star\mu\\
=&X^c+({\bf{e}}(x)-\mathbb{I}_d)I_{\{\vert{x}\vert\leq\epsilon\}}\star(\mu-\nu)+(\widetilde{b}+c\theta)\is{A}\\
&+\left(\exp(\theta^{tr}x+x^{tr}\psi{x})-I_{\{\vert{x}\vert\leq\epsilon\}}\right)({\bf{e}}(x)-\mathbb{I}_d)\star\mu\\
\end{split}
\end{equation*}
Thus, $ZS$ is a local martingale if and only if $\widetilde{X}+[\widetilde{X},N]$ is a local martingale if and only if 
\begin{equation}\label{Integrability2}
\left(\exp(\theta^{tr}x+x^{tr}\psi{x})-I_{\{\vert{x}\vert\leq\epsilon\}}\right)({\bf{e}}(x)-\mathbb{I}_d)\star\mu\in{\cal{A}}_{loc},
\end{equation}
 and (\ref{mgEquation4Exponential}) holds. Furthermore, in virtue of Lemma \ref{TechnicalLemma2}-(d), the conditions  (\ref{Integrability1}) and (\ref{Integrability2}) are equivalent to (\ref{IntegrabilityCondition4EE}). This ends the proof of assertion (a).\\
{\bf Part 2.} In this part we prove assertion (c). To this end, we consider $(\theta,\psi)\in\Theta(S)$. Then we associated to $\psi$, the process $\overline{Z}^{\psi}$ defined in (\ref{Zbar(psi)}), which a a well defined and locally bounded local martingale. Thus, by stopping it, there is no loss of generality in assuming that $\overline{Z}^{\psi}$ is a uniformly integrable martingale and $Q^{\psi}:=\overline{Z}^{\psi}_T\cdot{P}$ is a well defined probability. Furthermore, the $Q^{\psi}$-compensator of $\mu$ and $\widetilde\mu$, denoted by $\nu^{\psi}$ and $\widetilde\nu^{\psi}$ respectively, and the canonical decomposition of $\widetilde{X}$ under $Q^{\psi}$ are given by 
\begin{equation*}
\begin{split}
&\nu^{\psi}(dt,dx)=\exp(x^{tr}\psi{x})\nu(dt,dx)=:F^{\psi}_t(dx)dA_t,\quad F^{\psi}_t(dx):=\exp(x^{tr}\psi{x})F_t(dx),\\
& W\star\widetilde\nu^{\psi}=W(.,\Phi(x))\star\nu^{\psi},\quad\Phi(x):=\mbox{\bf{e}}(x)-\mathbb{I}_d,\quad\forall\ W,\\
&\widetilde{X}=X_0+X^c+x\star(\widetilde\mu-\widetilde\nu^{\psi})+\left(\widetilde{b}'+\int(e^{x^{tr}\psi{x}}-1)({\bf{e}}(x)-\mathbb{I}_d)F(dx)\right)\is{A}.
\end{split}
\end{equation*}
Then, due to direct calculations, it is easy to check that $(\theta,\psi)\in\Theta(S)$ satisfying (\ref{IntegrabilityCondition4EE}) and 
\begin{equation}
Z:={\cal{E}}\left(\theta\is{X}^c+(e^{\theta^{tr}x+x^{tr}\psi{x}}-1)\star(\mu-\nu)\right)\in{\cal{Z}}_{loc}(S,P),
\end{equation}
if and only if 
\begin{equation}
\begin{split}
&\exp(\theta^{tr}x) I_{\{\theta^{tr}x>\alpha\}}\star\mu \in{\cal{A}}_{loc}^+(Q^{\psi}),\quad \mbox{for some $\alpha>0$},\\
&{{Z}\over{\overline{Z}^{\psi}}}={\cal{E}}\left(\theta\is{X}^c+(e^{\theta^{tr}x}-1)\star(\mu-\nu^{\psi})\right)\in{\cal{Z}}_{loc}(S,Q^{\psi}).
\end{split}
\end{equation}
This proves assertion (b), and the proof of theorem is complete.
\end{proof}
The proof of Theorem \ref{OneDimension} relies on the following lemma that is true in general. 
\begin{lemma}\label{Deflator4S2Delfators4Shat} Suppose that assumptions of Theorem \ref{OneDimension} hold.\\
Let $\beta\in L^1_{loc}(X^c)$, $g(t,\omega, x)$ be a positive and ${\cal{P}}\otimes{\cal{B}}(\mathbb{R}^d)$-measurable such that $\sqrt{(g-1)^2\star\mu}\in{\cal{A}}^+_{loc}$, and $(\widehat{S},\widehat\nu,\widehat{Z})$ be given by (\ref{ShatZhatNuhat}). Then $Z:={\cal{E}}(\beta\is{X}^c+(g-1)\star(\mu-\nu))\in {\cal{Z}}_{loc}^{L\log{L}}(S)$ if and only if $Z':={\cal{E}}(\beta\is{X}^c+(g-1)\star(\mu-\widehat\nu))\in{\cal{Z}}_{loc}^{L\log{L}}(\widehat{S},\widehat{Z})$. 
\end{lemma}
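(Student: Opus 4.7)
The plan is to reduce each of the two memberships to a pointwise martingale equation on the characteristics, and to identify these two equations by means of the change of compensator induced by $\widehat Z$ together with the algebraic identity $xf(x)=e^x-1$. The main obstacle will be the drift alignment in the key step: the drift integrands produced on the two sides involve different combinations of $g$, $f$, $x$ and $e^x-1$, and only this identity (combined with the explicit form of $\widetilde b'$ from (\ref{Xbounded})) will bring them into exact coincidence.

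First, I would unfold the definition of $\mathcal{Z}^{L\log{L}}_{loc}(\widehat{S},\widehat{Z})$ through Definition \ref{DefatorIntergrable}-(b): localising with a sequence $(T_n)$ for which $\widehat Z^{T_n}\in\mathcal{M}$ and $\widehat Q_n:=\widehat Z_{T\wedge T_n}\cdot P$, the membership amounts to $Z'\widehat Z$ and $Z'\widehat Z\widehat S$ being $P$-local martingales together with the $L\log L$ integrability. Applying Yor's formula and using $\widehat\nu=f\nu$ yields
\begin{equation*}
Z'\widehat Z=\mathcal{E}\Bigl(\beta\is X^c+(fg-1)\star(\mu-\nu)\Bigr).
\end{equation*}
Since $X$ is locally bounded (hypothesis of Theorem \ref{OneDimension}), the jumps of $X$ are locally bounded, hence both $f$ and $1/f$ are locally bounded. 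Combined with $f(x)-1=O(x)$ near $x=0$, this renders the assumed $\sqrt{(g-1)^2\star\mu}\in\mathcal{A}^+_{loc}$ equivalent to $\sqrt{(fg-1)^2\star\mu}\in\mathcal{A}^+_{loc}$, so that $Z'\widehat Z$ is automatically a well-defined positive $P$-local martingale.

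Next I would apply Yor's formula to $ZS=S_0\mathcal{E}(\beta\is X^c+(g-1)\star(\mu-\nu))\mathcal{E}(\widetilde X)$ and to $Z'\widehat Z\widehat S=\mathcal{E}(\beta\is X^c+(fg-1)\star(\mu-\nu))\mathcal{E}(\widehat X)$, using the canonical decomposition of $\widetilde X$ from Lemma 2.1 and $\widehat X=X+\tfrac12 c\is A$. Extracting the drift with respect to $A$, a direct bookkeeping produces, on each side, the very same pointwise martingale equation
\begin{equation*}
b'+\tfrac{c}{2}+\beta c+\int\bigl(g(x)(e^x-1)-x\bigr)F(dx)=0,\quad P\otimes dA\text{-a.e.}
\end{equation*}
The cancellation rests on the identity $xf(x)=e^x-1$, which transforms the cross-variation term $(fg-1)x\star\nu$ from the drift of $Z'\widehat Z\widehat S$ into the term $(g-1)(e^x-1)\star\nu$ arising in the drift of $ZS$, once $\widetilde b'=b'+\tfrac{c}{2}+\int(e^x-1-x)F(dx)$ has been unfolded via (\ref{Xbounded}). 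This yields the equivalence $ZS\in\mathcal{M}_{loc}(P)\Longleftrightarrow Z'\widehat S\in\mathcal{M}_{loc}(\widehat Q_n)$ on each $[0,T_n]$.

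For the $L\log L$ refinement I would invoke the entropy-Hellinger representation $Z\log Z=(\log Z_-+1)\is Z+Z_-\is H^E(Z,P)$ from \cite[Lemma 3.2]{Choulli2005}, whose closed form for a stochastic exponential gives
\begin{equation*}
H^E(Z,P)=\tfrac12\beta^2 c\is A+\bigl(g\log g-g+1\bigr)\star\nu,
\end{equation*}
and the analogous expression for $Z'$ under $\widehat Q_n$ with $\widehat\nu=f\nu$ in place of $\nu$. Hence $Z\log Z$ is a $P$-local submartingale iff $(g\log g-g+1)\star\nu\in\mathcal{A}^+_{loc}(P)$, while $Z'\log Z'$ is a $\widehat Q_n$-local submartingale iff $f(g\log g-g+1)\star\nu\in\mathcal{A}^+_{loc}(\widehat Q_n)$. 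After further stopping to ensure $\widehat Z^{T_n}$ and $1/\widehat Z^{T_n}$ are bounded, $\mathcal{A}^+_{loc}(\widehat Q_n)=\mathcal{A}^+_{loc}(P)$, and the local boundedness of $f$ and $1/f$ forces the equivalence of these two integrability conditions, which concludes the proof.
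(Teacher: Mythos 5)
Your proposal is correct and follows essentially the same route as the paper's proof: both parts hinge on reducing each membership to the pointwise drift equation via the identity $xf(x)=e^x-1$ together with $\widetilde{b}'=b'+\tfrac{c}{2}+\int(e^x-1-x)F(dx)$, and on the entropy--Hellinger representation of $Z\ln Z$ from Choulli--Stricker combined with the local boundedness of $f$ and $1/f$. The only (presentational) difference is that you form the product $Z'\widehat{Z}$ by Yor's formula and extract drifts under $P$, whereas the paper applies the deflator characterization of \cite[Lemma 2.4-(1)]{Choulli2007} separately to $(S,P)$ and to $(\widehat{S},\widehat{Q})$ and then matches the two equations.
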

The proof of the lemma is relegated to Appendix \ref{Appendix4Section3}, while herein we prove Theorem \ref{OneDimension}.
\begin{proof}[Proof of Theorem \ref{OneDimension}] Due to assertion (a), it is clear that   ${\cal{Z}}^{EE}(S)\not=\emptyset$  if and only if ${\cal{Z}}^{LE}(\widehat{S},\widehat{Z})\not=\emptyset$, and thanks to Theorem \ref{mainTHM4LinearEsscherBIS} this is equivalent to ${\cal{Z}}_{loc}^{L\log{L}}(\widehat{S},\widehat{Z})\not=\emptyset$. Thus, in virtue of Lemma \ref{Deflator4S2Delfators4Shat}, the latter claim is equivalent to ${\cal{Z}}_{loc}^{L\log{L}}({S})\not=\emptyset$. This proves assertion (b), while the rest of the proof focuses on assertion (a). To this end, we consider a RCLL and positive process $\overline{Z}>0$, and remark that by stopping we can assume that $\widehat{Z}$ is a uniformly integrable martingale and hence $\widehat{Q}:=\widehat{Z}_{\infty}\cdot P$ is a well defined probability measure. Furthermore, 
\begin{equation}\label{Xhat}
\begin{split}
\widehat{X}&=\left(b'+{{c}\over{2}}\right)\is{A}+x\star(\mu-\nu)+X^c=\widehat{b}\is{A}+x\star(\mu-\widehat\nu)+X^c\\
\widehat{b}&:= b'+{{c}\over{2}}+\int{x}(f(x)-1)F(dx),\quad\widehat\nu(dt,dx):=\widehat{F}(dx)A_t,\quad \widehat{F}(dx):=f(x)F(dx).
\end{split}
\end{equation}
In virtue of Theorem  \ref{mainTHM4LinearEsscher} (applies to $\widehat{S}$ under $\widehat{Q}$ instead), we deduce that $\overline{Z}\in{\cal{Z}}^{LE}(\widehat{S},\widehat{Z})\not=\emptyset$ if and only if there exists a pair $(\theta,\psi)\in\Theta(\widehat{S})$ such that 
\begin{equation}\label{Equa1000}
\begin{split}
&\theta^2c\is A+e^{x\theta}I_{\{\vert\theta{x}\vert>\epsilon\}}\star\mu\in{\cal{A}}^+_{loc}(\widehat{Q}),\quad\widehat{b}+c\theta+\int{x}(e^{x\theta+x^2\psi}-1)\widehat{F}(dx)=0,\\
&\overline{Z}={\cal{E}}\left(\theta\is{X}^c+(e^{x\theta+x^2\psi}-1)\star(\mu-\widehat\nu)\right).
\end{split}
\end{equation}
Remark that, under the first condition in (\ref{Equa1000}), it is clear that $(\theta,\psi)\in\Theta(\widehat{S})$ iff $(\theta,\psi)\in\Theta({S})$. On the one hand,  due to the boundedness of $S$ (and hence of that of $\widehat{Z}$) 
\begin{equation}\label{Equa1001}
\mbox{the first condition in (\ref{Equa1000}) is equivalent to}\ \theta^2c\is A+e^{x\theta}I_{\{\vert\theta{x}\vert>\epsilon\}}\star\mu\in{\cal{A}}^+_{loc}.
\end{equation}
On the other hand, by using the notation in (\ref{Xhat}), we deduce that the second condition in (\ref{Equa1000}) is equivalent to
\begin{equation}\label{Equa1002}
 b'+{{c}\over{2}}+c\theta+\int(e^{x}-1)(e^{x\theta+x^2\psi}-1){F}(dx)=0.
\end{equation}
Furthermore, using Yor's formula and direct calculations, we conclude that the third condition in (\ref{Equa1000}) holds if and only if 
\begin{equation}\label{Equa1003}
 Z:=\overline{Z}\widehat{Z}={\cal{E}}\left(\theta\is{X}^c+(e^{x\theta+x^2\psi}-1)\star(\mu-\nu)\right).
\end{equation}
Thus, by combining (\ref{Equa1003}), (\ref{Equa1002}),(\ref{Equa1001}) and (\ref{Equa1000}), assertion (a) follows immediately. This also proves (\ref{RelationshipZ2Zhat}), and the proof of the theorem is complete.
\end{proof}
\section{Esscher pricing bounds and linear constraints BSDEs}\label{Section4pricing}
Throughout this section, we suppose $d=1$ and $\mathbb{F}$ is the filtration generated by a Brownian motion $W$ and a non-homogeneous Poisson process $N$, where $W$ and $N$ are independent. Our market model consist of one risky asset $S$ and non-risky asset $S^{(0)}:=\exp(B)$ having the following dynamics
\begin{equation}\label{Model4S}
\begin{split}
&B:=\int_0^{\cdot}r_s ds,\ r\geq 0,\quad  S := S_0 e^X=S_0\mathcal{E} (\widetilde{X}),\ dX:= {b} dt + \sigma{d}W + \gamma{d}\widetilde{N},\ X_0=\widetilde{X}_0=0,\\
   &\widetilde{N} := N -\int_0^\cdot \lambda_s ds,\  d\widetilde{X}:=\widetilde{b}dt + \sigma{d}W + \widetilde\gamma{d}{\widetilde N},\ \widetilde{b} :=b+ \frac{\sigma^2}{2} +\lambda (e^{\gamma} - 1-\gamma)\ \mbox{and}\  \widetilde{\gamma} := e^\gamma - 1.
   \end{split}
\end{equation}
Throughout the rest of the paper, we assume that
\begin{equation}\label{mainassumtpion4S}
 \sigma>0,\ \lambda>0,\ \sigma+\lambda+\sigma^{-1}+\lambda^{-1}+ \vert{r}\vert+\vert{b}\vert+\vert\gamma\vert+\vert\gamma\vert^{-1}\leq C,\ {P}\otimes{dt}-a.e.,\ \mbox{for some}\ C\in(0,\infty),
\end{equation}
and we consider the following sets
\begin{equation}\label{Psi-Psi(n)}
\Psi:=\left\{\psi:\ \psi\ \mbox{is predictable and bounded}\right\},\quad\Psi_n:=\left\{\psi\in\Psi:\ \vert\psi\vert\leq n\right\},\quad n\in\mathbb{N}.
\end{equation}
\begin{remark} For the rest of the paper, we will work for this simple model, (\ref{Model4S}),  and under the assumptions (\ref{mainassumtpion4S}). Our unique leitmotiv for these restrictions on $(S,\mathbb{F})$ lies in avoiding technicalities that might overshadow the key ideas, and we want to present our novel ideas on the second-order Esscher concept and their usefulness in the simplest model possible. It is important to know that all those restrictions on $(S,\mathbb{F})$ can be extended to the most general case possible, as the theory of BSDEs is well developed nowadays.  
\end{remark}
\begin{lemma}\label{Lemma4Z(psi)Densities} For any $\psi\in\Psi$ and $\zeta\in\{\gamma,\widetilde\gamma\}$, we denote by $\eta(\psi)$ the unique root for 
\begin{equation}\label{root4eta(psi)}
 \widetilde{b}-r + \eta{\sigma}^2+\lambda\widetilde{\gamma}\bigg(e^{\eta\zeta + \psi\zeta^2}-1\bigg)=0,
\end{equation}
and $\overline{D}^{\psi}$ is given by  
\begin{equation}\label{Zbar(psi)}
\overline{D}^{\psi}:= {\cal{E}}\left(\eta(\psi)\sigma\is{W}+(e^{\eta(\psi)\zeta+\zeta^2\psi}-1)\is\widetilde{N}\right).
\end{equation}
{\rm{(a)}} If (\ref{mainassumtpion4S}) holds,  then for any $\psi\in\Psi$ we have $\overline{D}^{\psi}\in{\cal{M}}(P)$, and 
\begin{equation}\label{R(psi)}
R_{\psi}:=\overline{D}^{\psi}_T\cdot P\quad\mbox{is a well defined probability measure}.
\end{equation}
 In particular, for $\psi=0\in\Psi$, we get   
\begin{equation}\label{Rzero}
\overline{D}^{0}\in{\cal{M}}\quad\mbox{and}\quad R_0:=\overline{D}^{0}_T\cdot P\quad\mbox{is a well defined probability measure}.
\end{equation}
{\rm{(b)}} Suppose (\ref{mainassumtpion4S}) holds. Then for any $\psi\in\Psi$, we have  
\begin{equation}\label{W(0)Ntilde(o)}
    W^{\psi}:=W- \int_0^\cdot \eta_s(\psi) \sigma_s ds\in{\cal{M}}_{loc}(R_{\psi}) \ \mbox{and} \ \widetilde{N}^{\psi} :=  \widetilde{N} -\int_0^\cdot \frac{r_s - \widetilde{b}_s - \eta_s(\psi)\sigma^2_s}{\widetilde\gamma_s} ds\in{\cal{M}}_{loc}(R_{\psi}).
\end{equation} 
{\rm{(c)}} Suppose (\ref{mainassumtpion4S}) holds. Then for $p\geq 1$ and $n\in\mathbb{N}$, there exists $C_n\in(0,\infty)$ such that  
\begin{equation}
\begin{split}
&\vert\eta(\psi)\vert+\mathbb{E}_0\left[\left({{D^{\psi}_T}\over{D^{\psi}_t}}\right)^p\Big|{\cal{F}}_t\right]\leq C_n,\quad P-a.s.,\quad\mbox{for any}\quad \psi\in\Psi_n,\quad\quad\mbox{where}\\
& {D}^{\psi}:={{\overline{D}^{\psi}}\over{\overline{D}^0}}={\cal{E}}\left(M^{\psi}\right),\ M^{\psi}:=(\eta(\psi)-\eta(0))\sigma\is{W}^0+\left(\exp((\eta(\psi)-\eta(0))\zeta+\zeta^2\psi)-1\right)\is\widetilde{N}^0.
\end{split}
\end{equation}
\end{lemma}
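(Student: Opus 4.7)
The lemma hinges on the pathwise boundedness of $\eta(\psi)$, so I would establish this first. For each $(\omega,t)$, set $F(\eta):=\widetilde{b}-r+\eta\sigma^2+\lambda\widetilde\gamma(e^{\eta\zeta+\psi\zeta^2}-1)$. Its derivative $F'(\eta)=\sigma^2+\lambda\widetilde\gamma\zeta\, e^{\eta\zeta+\psi\zeta^2}$ is strictly positive, because $\widetilde\gamma\zeta=(e^\gamma-1)\zeta\geq 0$: both $\gamma$ and $\widetilde\gamma=e^\gamma-1$ share the same sign, and $\zeta\in\{\gamma,\widetilde\gamma\}$. Combined with $F(\pm\infty)=\pm\infty$, this gives a unique root $\eta(\psi)$. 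For the quantitative bound, multiply (\ref{root4eta(psi)}) by $\zeta$ to obtain
\begin{equation*}
\eta(\psi)\zeta\sigma^2+\lambda\widetilde\gamma\zeta\bigl(e^{\eta(\psi)\zeta+\psi\zeta^2}-1\bigr)=\zeta(r-\widetilde{b}),
\end{equation*}
in which $\sigma^2$ and $\lambda\widetilde\gamma\zeta$ are bounded above and away from zero by (\ref{mainassumtpion4S}); a case analysis on the sign of $\eta(\psi)\zeta$, together with $|\psi\zeta^2|\leq nC^2$ on $\Psi_n$, yields $|\eta(\psi)|\leq C_n$ for a constant depending only on $C$ and $n$. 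For general $\psi\in\Psi$ one takes $n=\|\psi\|_\infty$.

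\textbf{Parts (a) and (b).} With $\eta(\psi)$ pathwise bounded, the local martingale $\eta(\psi)\sigma\is W + (e^{\eta(\psi)\zeta+\psi\zeta^2}-1)\is\widetilde{N}$ has a bounded Brownian integrand and bounded jumps driven by a Poisson process with bounded intensity $\lambda$, so $\overline{D}^\psi$ is a strictly positive uniformly integrable $P$-martingale (Novikov for the continuous part combined with direct integrability for the purely discontinuous part). Hence $R_\psi$ is a probability equivalent to $P$, which proves (a). Part (b) is then Girsanov applied to $\overline{D}^\psi$: the Brownian part produces the drift $\eta(\psi)\sigma$, so $W^\psi$ is an $R_\psi$-local martingale; the $R_\psi$-compensator of $N$ becomes $\int\lambda e^{\eta(\psi)\zeta+\psi\zeta^2}\,ds$, and the root equation (\ref{root4eta(psi)}) rearranged as $\lambda(e^{\eta(\psi)\zeta+\psi\zeta^2}-1)=(r-\widetilde{b}-\eta(\psi)\sigma^2)/\widetilde\gamma$ identifies this with the stated drift of $\widetilde{N}^\psi$.

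\textbf{Moment bound in (c).} To identify $D^\psi=\mathcal{E}(M^\psi)$, I would rewrite $\overline{D}^\psi$ in the $R_0$-dynamics from part (b) with $\psi=0$ (i.e.\ substitute $dW=dW^0+\eta(0)\sigma\,ds$ and $d\widetilde{N}=d\widetilde{N}^0+\lambda(e^{\eta(0)\zeta}-1)\,ds$), and then observe that $D^\psi$, being the density between the equivalent measures $R_\psi$ and $R_0$, is a positive local $R_0$-martingale whose continuous and jump characteristics match the stated $M^\psi$ (equivalently, via Yor's formula $\mathcal{E}(A)\mathcal{E}(M)=\mathcal{E}(A+M+[A,M])$). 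Setting $\alpha^\psi:=(\eta(\psi)-\eta(0))\sigma$ and $h^\psi:=e^{(\eta(\psi)-\eta(0))\zeta+\psi\zeta^2}-1$, both are uniformly bounded on $\Psi_n$ with $1+h^\psi$ bounded away from $0$. Applying It\^o to $(D^\psi)^p$ under $R_0$ gives
\begin{equation*}
(D^\psi_t)^p = 1 + N^\psi_t + \int_0^t (D^\psi_{s-})^p\, K^\psi_s\,ds,\qquad K^\psi:=\tfrac{p(p-1)}{2}(\alpha^\psi)^2+\lambda\, e^{\eta(0)\zeta}\bigl[(1+h^\psi)^p-1-ph^\psi\bigr],
\end{equation*}
where $N^\psi$ is a local $R_0$-martingale and $|K^\psi|\leq C'_n$ uniformly on $\Psi_n$. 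A standard localization to kill $N^\psi$, followed by taking conditional $R_0$-expectations and applying Gronwall, yields $\mathbb{E}_0\bigl[(D^\psi_T/D^\psi_t)^p\mid\mathcal{F}_t\bigr]\leq e^{C'_n(T-t)}\leq C_n$, uniformly over $\psi\in\Psi_n$.

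\textbf{Main obstacle.} The principal point of care is the quantitative bound on $\eta(\psi)$: since $\psi\zeta^2$ sits inside an exponential, one must track carefully how $C_n$ blows up with $n$ through the transcendental equation (\ref{root4eta(psi)}). Once this is in hand, the remaining steps are routine applications of Girsanov, Yor and an It\^o--Gronwall argument with bounded coefficients.
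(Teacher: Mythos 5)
Your proposal is correct and follows essentially the same route as the paper: boundedness of $\eta(\psi)$ (the paper delegates this to its Lemma \ref{EtaProcess}-(a), which is the same monotonicity-plus-case-analysis argument you sketch), then bounded predictable characteristics to get the uniformly integrable martingale property in (a), Girsanov plus the root equation for (b), and a uniform control on the drift of $(D^\psi)^p$ under $R_0$ for (c). The only cosmetic difference is in (c): the paper packages your It\^o--Gronwall step by invoking the Hellinger process of order $p$ and citing \cite[Proposition 3.5 and Theorem 3.9]{Choulli2007}, whereas you unfold the same computation directly — your drift $K^\psi$ is exactly $p(p-1)$ times their $dh^{(p)}(D^\psi,R_0)/dt$, so the two arguments coincide in substance.
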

The proof of this lemma is relegated to Appendix \ref{Appendix4Section4}, while herein we define the Esscher price processes bounds for any claim. To this end, we consider ${\cal{Q}}(\zeta)$ and ${\cal{Z}}(\zeta)$, for $\zeta\in\{\gamma,\widetilde\gamma\}$, given by
\begin{equation}\label{Psi(martingales)}
 {\cal{Q}}(\zeta):=\left\{R^{\psi}:=D^{\psi}_T\cdot{R}_0:\quad D^{\psi}\in {\cal{Z}}(\zeta)\right\}\quad\mbox{and}\quad {\cal{Z}}(\zeta):=\left\{D^{\psi}:={{\overline{D}^{\psi}}\over{ \overline{D}^0}}:\quad \psi\in\Psi\right\}.
\end{equation}
\begin{definition}\label{optprice} Let $\xi$ be the payoff of an arbitrary claim satisfying
\begin{equation}\label{ClaimAssumption}
\mathbb{E}_{R_0}\left[\vert\xi\vert\right]\leq \sup_{R^{\psi}\in{\cal{Q}}(\zeta)}\mathbb{E}^{R^{\psi}}\left[\vert\xi\vert\right]=\sup_{Z^{\psi}\in{\cal{Z}}(\zeta)}\mathbb{E}_0\left[Z^{\psi}_T\vert\xi\vert\right]<\infty, \quad \zeta\in \{\gamma,\widetilde\gamma\}.
\end{equation}
We denote by $\Rbrack{Y}^{\rm{inf,exp}},Y^{\rm{up,exp}} \Lbrack$ (respectively $\Rbrack{Y}^{\rm{inf,lin}},Y^{\rm{up,lin}} \Lbrack$) the exponential-Esscher  (respectively the linear-Esscher) pricing stochastic interval for the claim $\xi$, where 
\begin{equation}\label{EsscherInterval}
\begin{split}
&Y^{\rm{inf,exp}}:= \underset{\psi\in\Psi:\ R^{\psi}\in{\cal{Q}}(\gamma)}{\essinf}(Y^\psi),\quad  Y^{\rm{up,exp}}:= \underset{\psi\in\Psi:\ R^{\psi}\in{\cal{Q}}(\gamma)}{\esssup} (Y^\psi),\\
&Y^{\rm{inf,lin}}:= \underset{\psi\in\Psi:\ R^{\psi}\in{\cal{Q}}(\widetilde\gamma)}{\essinf}(Y^\psi),\quad  Y^{\rm{up,lin}}:=  \underset{\psi\in\Psi:\ R^{\psi}\in{\cal{Q}}(\widetilde\gamma)}{\esssup} (Y^\psi),\\
&Y^{\psi}_t:= \mathbb{E}_0 \left[{{D^{\psi}_T}\over{D^{\psi}_t}}e^{-\int_t^Tr_sds}\xi \Big| \mathcal{F}_t \right],\quad{D}^{\psi}\in{\cal{Z}}(\zeta), \quad\zeta\in\{\gamma,\widetilde\gamma\}.
\end{split}
\end{equation}
\end{definition}
Our main goal of this section resides in describing as explicit as possible the four processes, $Y^{\rm{inf,exp}}$, $Y^{\rm{up,exp}}$, $Y^{\rm{inf,lin}}$ and $Y^{\rm{up,lin}}$, and singling out afterwards their precise relationship as well. This can be achieved through BSDEs, which a ``{\it natural}'' stochastic control tool for non Markovian models having more complex dynamics. The rest of this section is divided into three subsections. The first subsection gives some definitions and notation regarding BSDEs with constraints that we {\it naturally} encounter in our analysis. The second subsection is devoted to our main results of this section, while the last subsection proves these main results. 
\subsection{Constrained BSDEs: Definitions  and notation}
Throughout this subsection, we consider a probability measure $Q$ on $(\Omega,{\cal{F}})$,  and we denote by $W^Q$ and $\widetilde{N}^Q$ the $Q$-Brownian motion and the compensated Poisson process under $Q$ respectively. For an y $p\in(1,\infty)$, throughout the rest of the paper we consider spaces, $\mathbb{S}^p(Q)$, $\mathbb{L}^p(W,Q)$, $\mathbb{L}^p(N,Q)$ and ${\cal{A}}^2(Q)$ and their norms given below.  For any unexplained notion, we refer to Section \ref{Section4ModelNotations}.
\begin{equation}\label{Spaces4BSDEs}
\begin{split}
&\mathbb{S}^p(Q):=\Biggl\{X\ \mbox{RCLL $\&$ $\mathbb{F}$-adapted process}:\ \Vert{X}\Vert_{\mathbb{S}^p(Q)}<\infty\Biggr\},\ \Vert{X}\Vert_{\mathbb{S}^p(Q)}^p:=\mathbb{E}^{Q}\Bigl [\sup_{0\leq t\leq T} \vert{X}_t\vert^p\Bigr], \\
&\mathbb{L}^p(N,Q):= \left\{\varphi\in L(\widetilde{N}^Q,Q):\ \Vert\varphi\Vert_{\mathbb{L}^p(N,Q)}<\infty \right\},\quad \Vert\varphi\Vert_{\mathbb{L}^p(N,Q)}^p:=\mathbb{E}^Q \Bigl[(|\varphi|^2\is{N})_T^{p/2}\Bigr],\\
  &\mathbb{L}^p(W,Q):=\Bigl \{\varphi\in L(W^Q,Q):\  \Vert\varphi\Vert_{\mathbb{L}^p(W,Q)}<\infty \Bigr\},\quad \Vert\varphi\Vert_{\mathbb{L}^p(W,Q)}^p:=\mathbb{E}^Q \Bigl[\left(\int_0^T |\varphi_t|^2{dt}\right)^{p/2}\Bigr],\\
 &{\cal{A}}^p(Q):=\Bigl\{V\in{\cal{A}}_{loc}(Q):\ \Vert{V}\Vert_{{\cal{A}}^p(Q)}<\infty\Bigr\},\quad \Vert{V}\Vert_{{\cal{A}}^p(Q)}^p:=\mathbb{E}^Q\left[(\mbox{Var}_T(V))^p\right].
 \end{split}
\end{equation}
Now we can introduce the constrained BSDEs for models with jumps. 
\begin{definition} Let  $\xi\in L^p(Q)$, and $f(t,\omega,y,z,u)$and $\Phi(t,\omega, y,z,u)$ be $\mathbb{F}$-optional functionals and Lipschitz in $(y,z,u)\in\mathbb{R}^3$ such that $\Phi(t,\omega,y,z,u)\geq 0$ $P\otimes dt$-almost every $(\omega,t)$ and for any $(y,z,u)$. \\
{\rm{(a)}} We call $(Y,Z,U,K)$ an $L^p(Q)$-solution to the constrained BSDE $(f,\xi,\Phi)$ (CBSDE$(f,\xi,\Phi)$ hereafter for short), if $(Y,Z,U,K)\in\mathbb{S}^p(Q)\times\mathbb{L}^p(W,Q)\times\mathbb{L}^p(N,Q)\times {\cal{A}}^p(Q)$, $K\in {\cal{V}}^+$ and is predictable, and 
 \begin{equation}\label{BSDEwithCobstraints}
 \begin{split}
 &Y_t=\xi+\int_t^Tf(s,Y_s,Z_s,U_s)ds-\int_t^T Z_s dW^Q_s-\int_t^TU_sd\widetilde{N}^Q_s,\quad{Q}\mbox{-a.s.,}\\
 &\mbox{and}\quad\Phi(t,Y_t,Z_t,U_t)=0,\quad Q\otimes{dt}\mbox{-a.e.}.
 \end{split}
 \end{equation}
 {\rm{(b)}}  $(\widehat{Y},\widehat{Z},\widehat{U},\widehat{K})$ is called the smallest $L^p(Q)$-solution to the CBSDE$(f,\xi,\Phi)$, i.e. (\ref{BSDEwithCobstraints}), if it is a solution to this CBSDE$(f,\xi,\Phi)$, and $\widehat{Y}\leq Y$ for any other solution $(Y,Z,U,K)$ to (\ref{BSDEwithCobstraints}). 
 \end{definition}
It is clear that from this definition (see (a) above), a reflected BSDE is a particular case of constrained BSDE, where the constraint factor $\Phi$ depends on the value process $Y$ only (i.e. $\Phi(t,y,z,u)=\Phi(t,y)$). In this case, the constraint generated naturally the Skorokhod condition $\int_0^T \Phi(s,Y_s)dK_s=0$ $Q$-a.s..\\
Up to our knowledge, CBSDEs appeared for the first time in \cite{cvitanic1998backward}, for the Brownian setting only. The obtained CBSDE, in these papers, was motivated by the problem of supper-replication when there are constrained on portfolio. Hence, naturally the constrained in the original problem of super-replication translate into constrains on the solution of the resulting BSDE.  The main challenge, that was noticed by the early days of the CBSDEs, lies in the existence of a solution to those CBSDEs. In fact, in \cite{cvitanic1998backward}, the authors gave a counter-example showing that the existence might fail in general. Thus, still in the Brownian setting and {\it under the assumption that the CBSDE --under consideration-- has indeed a solution}, Peng proved in \cite{peng1999monotonic} that the smallest solution exists. Many extensions were attempted afterwards, for these we refer the reader to Very recently \cite{peng2010reflected,Xu,elie2014adding,kharroubi2010backward} and the references therein to cite a few. However, in all these extensions, the assumption that the constrained BSDE should admit a solution persists and was not overcome at all, and hence in our setting these are not applicable.
\subsection{Main results on Esscher prices' bounds}
This subsection states our main results on the bounds for the Esscher-pricing intervals. In particular, we prove that the upper and lower bounds of the Esscher-pricing interval are solutions to constrained BSDEs, with different constrains, but having the same driver which is linear in the solution's variables $(y,z,u)$. Below we state our main theorem on the upper bound for the Esscher price processes.
\begin{theorem}\label{BSDE4Y(up)}  Suppose (\ref{mainassumtpion4S}) holds, and consider $p\in(1,\infty)$ and $\xi\in{L}^{0}(P)$ satisfying 
\begin{equation}\label{MainAssumption4Xi}
\mathbb{E}_0\left[(\xi^-)^p\right]+\sup_{\tau\in{\cal{T}},\psi\in\Psi}\mathbb{E}_0\left[{{D^{\psi}_T}\over{D^{\psi}_{\tau}}}(\xi^+)^p\right]<\infty,\quad\mbox{where ${\cal{T}}$ is the set of all stopping times $\tau\leq{T}$}.
\end{equation}
Let $(Y^{(up,exp)}, Y^{(up,lin)},Y^{\psi})$, for any $\psi\in\Psi$, be defined in (\ref{EsscherInterval}), and $Y^{(n)}$ be given by  
\begin{equation}\label{Y(n)4up}
Y^{(n)}:=\underset{\psi\in\Psi_n}{\esssup}(Y^{\psi}),\quad \mbox{for any}\quad n\in\mathbb{N}.\end{equation}
 If $Y^{(up)}\in\{Y^{(up,exp)},Y^{(up,lin)}\}$, then the following assertions hold.\\
{\rm{(a)}} There exists $(Z^{(up)},U^{(up)},K^{(up)})$ such that  the quadruplet $(Y^{(up)},Z^{(up)},U^{(up)},K^{(up)})$ belongs to $ \mathbb{S}^p(R_0) \times \mathbb{L}^p(W,R_0) \times \mathbb{L}^p(N,R_0)\times{\cal{A}}^{+,p}(R_0)$ and is the smallest solution to the constrained BSDE
\begin{equation}\label{ConstrainedBSDE1}
\begin{split}
dY_s&=\left({{\widetilde{b}_s-r_s-\widetilde\gamma_s\lambda_s }\over{ \sigma_s}}Z_s+\lambda_s{U}_s+r_sY_s\right)ds+Z_s{d}W_s+{U}_sd\widetilde{N}_s-dK_s,\quad Y_T=\xi,\quad P-a.s.,\\
&\widetilde\gamma_s{Z}_s-\sigma_s{U}_s\geq 0,\quad P\otimes{ds}\mbox{-a.e.},\quad \int_0^T(\widetilde\gamma_s{Z}_s-\sigma_s{U}_s)dK_s=0,\ P-a.s..
\end{split}
\end{equation}
{\rm{(b)}} There exists a unique pair $(Z^{(n)},U^{(n)})\in \mathbb{L}^p(W,R_0) \times \mathbb{L}^p(N,R_0)$ such that the triplet $(Y^{(n)}, Z^{(n)},U^{(n)})$ belongs to $ \mathbb{S}^p(R_0) \times \mathbb{L}^p(W,R_0) \times \mathbb{L}^p(N,R_0)$ and is the unique solution to the following BSDE
\begin{equation}\label{BSDE(n)}
\begin{split}
Y_t=&\xi+\int_t^T \left({{\eta_s(-n)-\eta_s(0)}\over{\widetilde\gamma_s}}\sigma_s(\widetilde\gamma_s{Z}_s-\sigma_s{U}_s)^+ +  {{\eta_s(0)-\eta_s(n)}\over{\widetilde\gamma_s}}\sigma_s(\widetilde\gamma_s{Z}_s-\sigma_s{U}_s)^--r_s Y_s\right)ds\\
&\hskip 1cm-\int_t^T Z_s{d}W^0_s-\int_t^T{U}_sd\widetilde{N}^0_s.
\end{split}\end{equation}
{\rm{(c)}} For any $n,m\in\mathbb{N}$, ${Y}^{(n+m)}-{Y}^{(n)}$ is a nonnegative $R_0$-supermartingale, 
\begin{equation}\label{Increasiness4Y(n)}
Y^{(n)}\leq{Y}^{(n+1)}\leq{Y}^{up},\quad\mbox{and}\quad Y^{(n)}\ \mbox{converges pointwise to}\ Y^{up}.
\end{equation}
{\rm{(d)}} $(Y^{(n)}, Z^{(n)},U^{(n)})$  converges to $(Y^{(up)},Z^{(up)},U^{(up)})$ in $ \mathbb{S}^p(R_0) \times \mathbb{L}^p(W^0,R_0) \times \mathbb{L}^p(\widetilde{N}^0,R_0)$, and 
$K^{(n)}:=\int_0^{\cdot}(\eta_s(0)-\eta_s(n))\widetilde\gamma_s^{-1}\sigma_s(\widetilde\gamma_s{Z}_s^{(n)}-\sigma_s{U}_s^{(n)})^-ds$ converges to $K^{(up)}$ in the space ${\cal{A}}^{+,p}(R_0)$. As a result, there exists a nonnegative and predictable process $k^{(up)}$ such that $K^{(up)}=\int_0^{\cdot}k^{(up)}_s ds$.
\end{theorem}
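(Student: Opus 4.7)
The strategy is to realize $Y^{(up)}$ as the monotone limit of $Y^{(n)}$, each solving a classical linear jump-BSDE with Lipschitz driver, and then identify the limit with the smallest solution of (\ref{ConstrainedBSDE1}) via a Peng-type monotone-convergence scheme. The same argument handles both $Y^{(up,exp)}$ and $Y^{(up,lin)}$: only the value of $\zeta\in\{\gamma,\widetilde\gamma\}$ entering (\ref{root4eta(psi)}) changes, whereas the structure of the resulting BSDE is identical.

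First, for every $\psi\in\Psi$ I would derive the BSDE satisfied by $Y^\psi$ under $R_0$. Since $e^{-\int_0^{\cdot} r_s ds}Y^\psi$ is an $R_\psi$-martingale with terminal $e^{-\int_0^T r_s ds}\xi$, the martingale representation theorem applied relative to the processes $(W^\psi,\widetilde N^\psi)$ of Lemma \ref{Lemma4Z(psi)Densities}-(b), followed by the Girsanov translation back to $(W^0,\widetilde N^0)$, produces
\begin{equation*}
dY^\psi_t=\Bigl(r_tY^\psi_t-\tfrac{(\eta_t(\psi)-\eta_t(0))\sigma_t}{\widetilde\gamma_t}(\widetilde\gamma_t Z^\psi_t-\sigma_t U^\psi_t)\Bigr)dt+Z^\psi_t\,dW^0_t+U^\psi_t\,d\widetilde N^0_t,\quad Y^\psi_T=\xi.
\end{equation*}
Implicit differentiation of (\ref{root4eta(psi)}) yields strict monotonicity of $\psi\mapsto(\eta(\psi)-\eta(0))/\widetilde\gamma$, so the pointwise supremum $f^{(n)}:=\sup_{\psi\in\Psi_n}[-r_ty+\tfrac{(\eta_t(\psi)-\eta_t(0))\sigma_t}{\widetilde\gamma_t}(\widetilde\gamma_t z-\sigma_t u)]$ is attained at $\psi=\pm n$ according to the sign of $\widetilde\gamma_t z-\sigma_t u$, and coincides with the driver in (\ref{BSDE(n)}). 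This driver is globally Lipschitz (with constant depending on $n$, controlled through Lemma \ref{Lemma4Z(psi)Densities}-(c)), and Lemma \ref{Lemma4Z(psi)Densities}-(c) combined with (\ref{MainAssumption4Xi}) gives $\xi\in L^p(R_0)$; Tang--Li-type existence and uniqueness for jump-BSDEs then yields assertion (b). The jump-BSDE comparison theorem gives $Y^\psi\leq Y^{(n)}$ for every $\psi\in\Psi_n$, and for the reverse inequality I would construct a $\{-n,n\}$-valued predictable selector $\hat\psi^{(n)}$ that attains $f^{(n)}(\cdot,Y^{(n)},Z^{(n)},U^{(n)})$ pointwise; then $(Y^{(n)},Z^{(n)},U^{(n)})$ itself solves the BSDE with driver $f^{\hat\psi^{(n)}}$, whose uniqueness forces $Y^{\hat\psi^{(n)}}=Y^{(n)}$. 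The nesting $\Psi_n\subset\Psi_{n+1}$, the comparison theorem, and $\Psi=\bigcup_n\Psi_n$ now give (c) in full: monotonicity, supermartingale property (from subtracting the two linear BSDEs and exposing a nonpositive drift), and pointwise convergence.

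For (a) and (d) I would apply Peng's monotonic approximation for constrained BSDEs, in the jump version (\emph{cf.}\ Kharroubi--Pham, Essaky--Hassani). The structural key is the asymmetric behavior of the two coefficients in (\ref{BSDE(n)}) as $n\to\infty$: the asymptotics of (\ref{root4eta(psi)}) as $\psi\to+\infty$ give $\tfrac{\eta(0)-\eta(n)}{\widetilde\gamma}\sigma\to+\infty$, while those as $\psi\to-\infty$ give a finite limit for $\tfrac{\eta(-n)-\eta(0)}{\widetilde\gamma}\sigma$. The explosion of the first coefficient forces $(\widetilde\gamma Z^{(up)}-\sigma U^{(up)})^-=0$ in the limit, which is the stated constraint, and the remaining finite coefficient produces the limit driver; translated from $(W^0,\widetilde N^0)$ under $R_0$ back to $(W,\widetilde N)$ under $P$ via Girsanov (using (\ref{root4eta(psi)}) at $\psi=0$), this limit BSDE is exactly (\ref{ConstrainedBSDE1}). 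Uniform a priori $L^p(R_0)$-estimates for linear BSDEs, combined with the global bound $\sup_n Y^{(n)}\leq Y^{(up)}\in L^p(R_0)$ provided by (\ref{MainAssumption4Xi}), yield boundedness of $(Z^{(n)},U^{(n)},K^{(n)})$, where $K^{(n)}:=\int_0^{\cdot}\tfrac{\eta_s(0)-\eta_s(n)}{\widetilde\gamma_s}\sigma_s(\widetilde\gamma_sZ^{(n)}_s-\sigma_sU^{(n)}_s)^-ds$. Applying the BSDE-difference identity to $Y^{(n+m)}-Y^{(n)}$, which is a nonnegative $R_0$-supermartingale by (c), upgrades these bounds to a Cauchy property and delivers the strong convergences in (d). The Skorokhod relation $\int_0^T(\widetilde\gamma Z^{(up)}-\sigma U^{(up)})dK^{(up)}=0$ is obtained by passing to the limit in the identity that the integrand defining $K^{(n)}$ is supported on $\{\widetilde\gamma Z^{(n)}-\sigma U^{(n)}<0\}$; absolute continuity $dK^{(up)}=k^{(up)}_sds$ is inherited from that of each $K^{(n)}$. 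Minimality of $Y^{(up)}$ follows by comparison: any other solution $(Y',Z',U',K')$ to (\ref{ConstrainedBSDE1}) satisfies a linear BSDE whose driver dominates $f^{(n)}$ on $\{\widetilde\gamma Z'-\sigma U'\geq 0\}$, yielding $Y'\geq Y^{(n)}$ for every $n$ and hence $Y'\geq Y^{(up)}$.

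The main obstacle is the strong convergence portion of (d): Peng's classical scheme produces only a weak limit for the penalization terms $K^{(n)}$, whereas the theorem asserts $\mathcal{A}^p(R_0)$-convergence and absolute continuity of $K^{(up)}$. The required uniform $L^p$-bound on the exploding-coefficient integrand $\tfrac{\eta(0)-\eta(n)}{\widetilde\gamma}\sigma(\widetilde\gamma Z^{(n)}-\sigma U^{(n)})^-$ cannot be read off the driver: it must be extracted from the BSDE identity itself by trading the blow-up of the coefficient against the smallness of $(\widetilde\gamma Z^{(n)}-\sigma U^{(n)})^-$ as $n\to\infty$, exploiting the explicit blow-up rate of $\eta(n)$ together with the $Y^{(n)}$-Doob-Meyer supermartingale decomposition. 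The density $k^{(up)}$ is then obtained as the Lebesgue limit of the densities of the $K^{(n)}$, once their uniform integrability is established.
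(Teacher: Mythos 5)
Your overall architecture coincides with the paper's: identify each $Y^{(n)}$ with $Y^{\psi_n}$ for a bang--bang selector $\psi_n\in\{-n,n\}$ built from the sign of $\widetilde\gamma Z^{(n)}-\sigma U^{(n)}$, exploit the monotonicity and asymptotics of $\eta(\psi)/\widetilde\gamma$ from Lemma \ref{EtaProcess} to read off the constraint and the limiting driver, and obtain minimality by showing that $Y-Y^{\psi}$ is an $R_{\psi}$-supermartingale for every competing solution of (\ref{ConstrainedBSDE1}). Parts (b) and (c) are fine as sketched.

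The genuine gap is exactly where you locate it, but your proposed repair does not close it. The strong convergences in (d) --- and hence the very existence of the limit quadruplet in (a) --- hinge on the estimate $\mathbb{E}_0\bigl[(\int_0^T\widetilde{g}(u)\,du)^p\bigr]<\infty$ for the increasing limit $\widetilde{g}=\sup_n g_n$ of the realized drivers; only then does dominated convergence give $\Vert\int_0^T(\widetilde g-g_n)\,du\Vert_{L^p}\to0$, Doob the $\mathbb{S}^p$-Cauchy property of $Y^{(n)}$, Stricker's inequality for the supermartingale differences the $\mathbb{L}^p$-Cauchy property of $(Z^{(n)},U^{(n)})$, and $K^{(n)}=\int_0^{\cdot}g_n\,du-\int_0^{\cdot}g_n^{(1)}\,du$ converge in ${\cal{A}}^p(R_0)$ as a difference of convergent sequences. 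The paper obtains this bound not by ``trading the blow-up of $\eta(0)-\eta(n)$ against the smallness of $(\widetilde\gamma Z^{(n)}-\sigma U^{(n)})^-$'' --- that trade-off is what yields the constraint $\widetilde\gamma Z^{(up)}-\sigma U^{(up)}\geq0$ via Fatou, and it already presupposes $\int_0^T\widetilde g\,du\in L^1$ --- but from the potential representation $Y^{(n)}_t=\mathbb{E}_0[\xi+\int_t^Tg_n(u)\,du\,|\,{\cal{F}}_t]$, which gives the conditional estimate $\mathbb{E}_0[\int_\tau^T\widetilde g\,du\,|\,{\cal{F}}_\tau]\leq\mathbb{E}_0[\xi^-+\Delta\,|\,{\cal{F}}_\tau]$ with $\Delta:=\esssup_{\psi}\sup_t\mathbb{E}_0[D^\psi_T(D^\psi_t)^{-1}\xi^+\,|\,{\cal{F}}_t]$, upgraded to the $p$-th moment by Garsia's lemma under assumption (\ref{MainAssumption4Xi}) (Lemmas \ref{lemma4Gtilde} and \ref{LemmaforgTildeControl}). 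A second, smaller omission feeding into this: the monotonicity $g_n\leq g_{n+m}$ of the \emph{realized} drivers (needed for your supermartingale claim in (c) and for monotone convergence of $\int g_n\,du$) does not follow from $\kappa_n\leq\kappa_{n+m}$ as functions of $(y,z,u)$ alone; it is proved by identifying $g_n$ with $\esssup_{\psi\in\Psi_n}g_\psi$, which requires the upward-directedness of $\{g_\psi\}$ from Lemma \ref{Lemma1}-(d). Both steps must be supplied; the remainder of your argument then runs as in the paper.
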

The proof of the theorem is relegated to Subsection \ref{Proofs4Section3}. Assertion (a) completely characterizes the upper bound $Y^{up}$ as the smallest solution a {\it constrained} BSDE with  Skorokhod condition. 
\begin{remark}
The discounted upper Esscher-price process $\widetilde{Y}^{up}:=e^{-B}Y^{up}$ is a nonnegative $R_0$-supermartingale, and satisfies 
\begin{equation}
\begin{split}
&\widetilde{Y}^{up}_t=\mathbb{E}_0\left[\xi{e}^{-B_T}+\int_t^T{{\eta^*_s-\eta_s(0)}\over{\widetilde\gamma_s}}\sigma_s(\widetilde{\gamma_s}\widetilde{Z}^{up}_s-\sigma_s\widetilde{U}^{up}_s)ds+\widetilde{K}^{up}_T-\widetilde{K}^{up}_t\Big|{\cal{F}}_t\right]\\
&\widetilde{Y}^{up}_t=\xi{e}^{-B_T}+\int_t^T {{\eta^*_s-\eta_s(0)}\over{\widetilde\gamma_s}}\sigma_s(\widetilde{\gamma_s}\widetilde{Z}^{up}_s-\sigma_s\widetilde{U}^{up}_s)ds-\int_t^T\widetilde{Z}^{up}_sdW^0_s-\int_t^T\widetilde{U}^{up}_sd\widetilde{N}^0_s+\widetilde{K}^{up}_T-\widetilde{K}^{up}_t,
\end{split}
\end{equation}
where $\widetilde{Z}^{up}:=e^{-B}Z^{up}$, $\widetilde{U}^{up}={e}^{-B_T}{U}^{up}$ and $\widetilde{K}^{up}:=e^{-B}\is{K}^{up}$.
\end{remark}
Below, we address the lower Esscher bound process, and we prove that is also fully characterized by a constrained BSDE. Both CBSDEs (i.e. the CBSDEs for upper bound and lower bound) have the same driver, but different constraints and different martingale structures naturally.
\begin{theorem}\label{BSDE4Y(inf)} Suppose (\ref{mainassumtpion4S}) holds, and consider $p\in(1,\infty)$ and $\xi\in{L}^{0}(P)$ satisfying 
\begin{equation}\label{MainAssumption4Xi(inf)}
\mathbb{E}_0\left[(\xi^+)^p\right]+\sup_{\tau\in{\cal{T}},\psi\in\Psi}\mathbb{E}_0\left[{{D^{\psi}_T}\over{D^{\psi}_{\tau}}}(\xi^-)^p\right]<\infty.
\end{equation}
Let $(Y^{(inf,exp)}, Y^{(inf,lin)},Y^{\psi})$, for any $\psi\in\Psi$, be defined in (\ref{EsscherInterval}), and $\overline{Y}^{(n)}$ be given by  
\begin{equation}\label{Ybar(n)}
 \overline{Y}^{(n)}:=\underset{\psi\in\Psi_n}{\essinf}(Y^{\psi}),\quad \mbox{for any}\quad n\in\mathbb{N}.\end{equation}
If $Y^{(inf)}\in\{Y^{(inf,lin)},Y^{(inf,exp)}\}$, then the following assertions hold.\\
{\rm{(a)}}  There exists $(Z^{(inf)},U^{(inf)},K^{(inf)})$ such that the quadruplet $(Y^{(inf)},Z^{(inf)},U^{(inf)},K^{(inf)})$ belongs to $ \mathbb{S}^p(R_0) \times \mathbb{L}^p(W^0,R_0) \times \mathbb{L}^p(\widetilde{N}^0,R_0)\times{\cal{A}}^{+,p}(R_0)$ and is the smallest solution to the constrained BSDE
\begin{equation}
\begin{split}
&dY_s=\left({{\widetilde{b}_s-r_s-\widetilde\gamma_s\lambda_s }\over{ \sigma_s}}Z_s+\lambda_s{U}_s+r_sY_s\right)ds+Z_s{d}W_s+{U}_sd\widetilde{N}_s+dK_s,\quad Y_T=\xi,\quad P-a.s.,\\
&\widetilde\gamma_s{Z}_s-\sigma_s{U}_s\leq 0,\quad P\otimes{ds}\mbox{-a.e.},\quad \int_0^T(\widetilde\gamma_s{Z}_s-\sigma_s{U}_s)dK_s=0,\ P-a.s..
\end{split}
\end{equation}
{\rm{(b)}} There exists a unique pair $(\overline{Z}^{(n)},\overline{U}^{(n)})\in \mathbb{L}^p(W^0,R_0) \times \mathbb{L}^p(\widetilde{N}^0,R_0)$ such that $(\overline{Y}^{(n)}, \overline{Z}^{(n)},\overline{U}^{(n)})$ is the unique solution to the following BSDE
\begin{equation}\label{BSDE(n)}
\begin{split}
Y_t=&\xi-\int_t^T\left({{\eta_s(0)-\eta_s(n)}\over{\widetilde\gamma_s}}\sigma_s(\widetilde\gamma_s{Z}_s-\sigma_s{U}_s)^+ +  {{\eta_s(-n)-\eta_s(0)}\over{\widetilde\gamma_s}}\sigma_s(\widetilde\gamma_s{Z}_s-\sigma_s{U}_s)^--r_sY_s\right)ds\\
&\hskip 1cm-\int_t^TZ_s{d}W^0_s-\int_t^T{U}_sd\widetilde{N}^0_s.
\end{split}\end{equation}
{\rm{(c)}}  For any $n,m\in\mathbb{N}$, $\overline{Y}^{(n)}-\overline{Y}^{(n+m)}$ is a nonnegative $R_0$-submartingale, 
\begin{equation}\label{Decreasiness4Ybar(n)}
\overline{Y}^{(n)}\geq\overline{Y}^{(n+1)}\geq{Y}^{(inf)}\quad\mbox{and}\quad \overline{Y}^{(n)}\ \mbox{converges pointwise to}\ Y^{(inf)}.
\end{equation}
{\rm{(d)}}  $(\overline{Y}^{(n)}, \overline{Z}^{(n)},\overline{U}^{(n)})$ converges to  $(Y^{(inf)},Z^{(inf)},U^{(inf)})$ in $ \mathbb{S}^p(R_0) \times \mathbb{L}^p(W^0,R_0) \times \mathbb{L}^p(\widetilde{N}^0,R_0)$. As a result, there exists a predictable process $k^{(inf)}$ such that $K^{(inf)}:=\int_0^{\cdot}k^{(inf)}_sds\in{\cal{A}}^{+,p}(R_0)$ and  $\overline{K}^{(n)}:=\int_0^{\cdot}\left(\eta_s(0)-\eta_s(n)\right){\widetilde\gamma_s}^{-1}\sigma_s(\widetilde\gamma_s\overline{Z}_s^{(n)}-\sigma_s\overline{U}_s^{(n)})^+ds$ converges to $K^{(inf)}$ in ${\cal{A}}^{+,p}(R_0)$-norm.
\end{theorem}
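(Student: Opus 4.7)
The plan is to deduce Theorem \ref{BSDE4Y(inf)} from Theorem \ref{BSDE4Y(up)} by a duality argument. The key observation is the linearity $Y_{t}^{\psi}(\xi)=-Y_{t}^{\psi}(-\xi)$ for every $\psi\in\Psi$, which propagates through the essential infimum and supremum to yield
\[
Y^{(inf)}(\xi)=-Y^{(up)}(-\xi),\qquad \overline{Y}^{(n)}(\xi)=-Y^{(n)}(-\xi).
\]
Before invoking Theorem \ref{BSDE4Y(up)} I would check that hypothesis (\ref{MainAssumption4Xi(inf)}) imposed on $\xi$ is nothing but hypothesis (\ref{MainAssumption4Xi}) applied to $-\xi$; this is immediate from $(-\xi)^{+}=\xi^{-}$ and $(-\xi)^{-}=\xi^{+}$.

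Applying Theorem \ref{BSDE4Y(up)} to the claim $-\xi$, I obtain the extremal solution $(\widetilde Y,\widetilde Z,\widetilde U,\widetilde K)\in\mathbb{S}^{p}(R_{0})\times\mathbb{L}^{p}(W,R_{0})\times\mathbb{L}^{p}(N,R_{0})\times\mathcal{A}^{+,p}(R_{0})$ of the CBSDE (\ref{ConstrainedBSDE1}) with terminal datum $\widetilde Y_{T}=-\xi$, together with the approximating triplets $(\widetilde Y^{(n)},\widetilde Z^{(n)},\widetilde U^{(n)})$ solving the BSDE (\ref{BSDE(n)}) with the same terminal datum. I then set
\[
Y^{(inf)}:=-\widetilde Y,\quad Z^{(inf)}:=-\widetilde Z,\quad U^{(inf)}:=-\widetilde U,\quad K^{(inf)}:=\widetilde K,
\]
and $\overline Y^{(n)}:=-\widetilde Y^{(n)}$, $\overline Z^{(n)}:=-\widetilde Z^{(n)}$, $\overline U^{(n)}:=-\widetilde U^{(n)}$, thereby producing the announced processes of Theorem \ref{BSDE4Y(inf)}.

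The remaining work is a line-by-line transfer of each assertion through this sign change. For (a): substituting $\widetilde Z=-Z^{(inf)}$ and $\widetilde U=-U^{(inf)}$ into the upper-bound CBSDE flips the constraint $\widetilde\gamma\widetilde Z-\sigma\widetilde U\ge 0$ into $\widetilde\gamma Z^{(inf)}-\sigma U^{(inf)}\le 0$, turns $-dK$ into $+dK^{(inf)}$, and leaves the Skorokhod identity invariant because both factors of its integrand change sign. For (b): the elementary identities $(-a)^{+}=a^{-}$ and $(-a)^{-}=a^{+}$ exchange the coefficients of $(\widetilde\gamma\overline Z^{(n)}-\sigma\overline U^{(n)})^{\pm}$, producing exactly the driver in the statement; correspondingly the penalizing process becomes $\overline K^{(n)}=\int_{0}^{\cdot}(\eta_{s}(0)-\eta_{s}(n))\widetilde\gamma_{s}^{-1}\sigma_{s}(\widetilde\gamma_{s}\overline Z_{s}^{(n)}-\sigma_{s}\overline U_{s}^{(n)})^{+}ds$. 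For (c)--(d): the monotone increase $\widetilde Y^{(n)}\uparrow\widetilde Y$ and the $R_{0}$-supermartingale property of $\widetilde Y^{(n+m)}-\widetilde Y^{(n)}$ become the monotone decrease $\overline Y^{(n)}\downarrow Y^{(inf)}$ and the $R_{0}$-submartingale property of $\overline Y^{(n)}-\overline Y^{(n+m)}$; convergence in the spaces $\mathbb{S}^{p}(R_{0})$, $\mathbb{L}^{p}(W,R_{0})$, $\mathbb{L}^{p}(N,R_{0})$, $\mathcal{A}^{+,p}(R_{0})$ is obviously preserved under the change of sign; and the absolute continuity $K^{(up)}=\int_{0}^{\cdot}k_{s}^{(up)}ds$ carries over verbatim to $K^{(inf)}$.

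The main technical point, and the only real piece of work, is the bookkeeping of signs in (b), where the direction of the constraint inequality, the sign in front of $dK$, the swap of the $(\cdot)^{\pm}$ coefficients and the terminal condition must all be tracked simultaneously; once this is dispatched, no further probabilistic argument beyond Theorem \ref{BSDE4Y(up)} is required, since existence, regularity, approximation and convergence are inherited by linearity from the upper-bound case.
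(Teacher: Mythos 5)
Your proposal is correct and is essentially identical to the paper's own proof: the paper also deduces Theorem \ref{BSDE4Y(inf)} from Theorem \ref{BSDE4Y(up)} applied to the claim $-\xi$, using the linearity $(Y^{\psi}(-\xi),Z^{\psi}(-\xi),U^{\psi}(-\xi))=-(Y^{\psi}(\xi),Z^{\psi}(\xi),U^{\psi}(\xi))$ together with $\essinf_{i}(X_i)=-\esssup_{i}(-X_i)$. Your sign bookkeeping for the constraint, the Skorokhod condition and the swapped $(\cdot)^{\pm}$ coefficients is in fact more explicit than the paper's, which simply asserts that all assertions follow from these remarks.
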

\begin{proof}[Proof of Theorem \ref{BSDE4Y(inf)}] Let $\xi$ be a claim, and $\psi\in\Psi$, we denote by $Y^{\psi}(\xi)$ and $Y^{\psi}(-\xi)$ the process defined in (\ref{EsscherInterval}) associated to the claim $\xi$ and $-\xi$ respectively. On the one hand, it is clear that the solution to (\ref{YpsiBSDE}) for $\xi$ and $-\xi$, denoted by $( Y^{\psi}(\xi),Z^{\psi}(\xi),U^{\psi}(\xi))$ and $(Y^{\psi}(-\xi),Z^{\psi}(-\xi),U^{\psi}(-\xi))$ respectively, satisfy 
$$
(Y^{\psi}(-\xi),Z^{\psi}(-\xi),U^{\psi}(-\xi))=-( Y^{\psi}(\xi),Z^{\psi}(\xi),U^{\psi}(\xi)).
$$
On the other hand, due to the easy fact that $\underset{i\in{I}}{\essinf}(X_i)=-\underset{i\in{I}}{\esssup}(-X_i)$ for any family of random variable $(X_i)_{i\in{I}}$, we deduce that 
\begin{equation*}
Y^{(inf)}(\xi)=\underset{\psi\in\Psi}{\essinf}(Y^{\psi}(\xi))=-\underset{\psi\in\Psi}{\esssup}(Y^{\psi}(-\xi))=-Y^{up}(-\xi).
\end{equation*}
Therefore, by applying Theorem \ref{BSDE4Y(up)} to the claim $-\xi$ instead, and combining the above remarks afterwards, all assertions of the theorem follow immediately. 
\end{proof}
We end this subsection with the following remark. 
\begin{remark} {\rm{(a)}} For any claim with payoff $\xi$ and any $p\in (1,\infty)$, the condition that we require for both Esscher-pricing bounds is 
\begin{equation}\label{Assumption4Xi}
{\mathbb{E}}_0\left[\vert\xi\vert^p\right]\leq \sup_{\tau\in{\cal{T}},\psi\in\Psi}{\mathbb{E}}_0\left[{{D^{\psi}_T}\over{D^{\psi}_{\tau}}}\vert\xi\vert^p\right]<\infty.
\end{equation}
It is clear that any bounded claim $\xi$ satisfies this assumption. \\
 {\rm{(b)}} For any $\zeta\in\{\gamma,\widetilde\gamma\}$ and any $\xi$ fulfilling (\ref{Assumption4Xi}), the processes $Y^{up}$ and $Y^{inf}$ are $R_0$-supermartingale and $R_0$-submartingale such that 
 \begin{equation*}
Y^{inf}<Y^{\psi}<Y^{up},\quad \forall\quad \psi\in\Psi.
 \end{equation*}
 Furthermore, there exist $(\theta^{up},\theta^{inf})\in\Theta(S)\times\Theta(S)$ and  a unique pair $(L^{up},-L^{inf})$ of strict $R_0$-supermartingales such that the following hold.
 \begin{equation*}
 \begin{split}
 &Y^{inf}=Y^{inf}_0+\theta^{inf}\is{S}-K^{inf}+L^{inf},\quad Y^{up}=Y^{up}_0+\theta^{up}\is{S}-K^{up}-L^{up},\\
 &\left\{Q\ \mbox{equivalent probability}:\ L^{up}\in{\cal{M}}_{loc}(Q)\ \mbox{or}\quad L^{inf}\in{\cal{M}}_{loc}(Q)\right\}=\emptyset.
 \end{split}
 \end{equation*}

\end{remark}
\subsection{Proof of Theorem \ref{BSDE4Y(up)} 
}\label{Proofs4Section3}
This subsection proves the two  main theorems of this section. To this end, we start with some intermediate results that we summarize in four lemmas. These lemmas are interesting in themselves besides conveying the main ideas behind our main results. 
\begin{lemma}\label{EtaProcess} For any $\psi\in\Psi$ and any $\zeta\in\{\gamma,\widetilde\gamma\}$, let $\eta(\psi)$ be the unique root to (\ref{root4eta(psi)}), and consider the functional $\Phi$ given by 
\begin{equation}\label{PhiFunction}
\Phi(x):=\Phi(t,\omega,x):=\sigma_t(\omega)^2x+\lambda_t(\omega)\widetilde\gamma_t(\omega)\zeta_t(\omega)e^x,\quad t\in[0,T],\quad \omega\in\Omega,\quad x\in\mathbb{R}.
\end{equation}
Then the following assertions hold.\\
 {\rm{(a)}}  The functional $\Phi$ is continuous, strictly increasing, $\Phi(\infty)=\infty$, $\Phi(-\infty)=-\infty$, and 
 \begin{equation}\label{Phi2eta}
 \eta(\psi)={1\over{\zeta}}\Phi^{-1}\Bigl(\zeta(r-\widetilde{b}+\lambda\widetilde\gamma)+\sigma^2\zeta^2\psi\Bigr)-\zeta\psi.
 \end{equation}
 As a result, for any $\psi\in\Psi$, there exists $C_{\psi}\in(0,\infty)$ such that $\vert\eta(\psi)\vert\leq C_{\psi}$ $P\otimes{dt}$-a.e..\\
 {\rm{(b)}}  $\eta(\psi) \widetilde\gamma^{-1}$ is decreasing ($P\otimes{dt}$-a.e.) with respect to $\psi$.\\
  {\rm{(c)}}   For any $\zeta\in\{\gamma,\widetilde\gamma\}$, we have 
        \begin{equation}\label{etaLimitsetaStar}
        \lim_{\psi \downarrow -\infty} {{\eta(\psi)}\over{\widetilde\gamma}} = {{\eta(-\infty)}\over{\widetilde\gamma}}= {{\lambda\widetilde\gamma+r-\widetilde{b}}\over{\widetilde\gamma \sigma^2}}=:{{\eta^*}\over{\widetilde\gamma}},\quad \mbox{and}\quad  \lim_{\psi \uparrow \infty} {{\eta(\psi)}\over{\widetilde\gamma}} = -\infty,\quad P\otimes{dt}\mbox{-a.e.}.\end{equation}
        {\rm{(d)}} For any $\zeta\in\{\gamma,\widetilde\gamma\}$ and any $\psi\in\Psi$ such that $\psi\geq0$, we have 
\begin{equation}\label{limite4eta(n)/n}\lim_{n\longrightarrow\infty}{{-\eta(n)}\over{n\widetilde\gamma}}={{\zeta}\over{\widetilde\gamma}},\quad\mbox{and}\quad {{\eta(0)-\eta(\psi)}\over{\widetilde\gamma}}\leq{{\zeta}\over{\widetilde\gamma}}\psi,\quad P\otimes{dt}\mbox{-a.e.}.
\end{equation}
\end{lemma}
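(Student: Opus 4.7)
The plan is to derive all four assertions from the implicit characterization (\ref{root4eta(psi)}) together with the elementary sign observation that both candidates for $\zeta$ -- namely $\gamma$ and $\widetilde\gamma = e^\gamma - 1$ -- always share the same sign (since $x\mapsto e^x - 1$ is strictly increasing with value $0$ at $0$), so that $\zeta\widetilde\gamma > 0$ throughout. Under this observation, $\Phi'(x) = \sigma^2 + \lambda\widetilde\gamma\zeta e^x > 0$, giving continuity and strict monotonicity of $\Phi$, while the limits $\Phi(-\infty) = -\infty$ and $\Phi(+\infty) = +\infty$ follow from the dominance of $\sigma^2 x$ as $x \to -\infty$ and of $\lambda\widetilde\gamma\zeta e^x$ as $x\to +\infty$. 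To obtain (\ref{Phi2eta}) I would substitute $y := \eta\zeta + \psi\zeta^2$ into (\ref{root4eta(psi)}), multiply through by $\zeta$ and regroup to get $\Phi(y) = \zeta(r-\widetilde b + \lambda\widetilde\gamma) + \sigma^2\zeta^2\psi$; then $\eta(\psi) = y/\zeta - \zeta\psi$ is exactly (\ref{Phi2eta}). Uniform boundedness of $\eta(\psi)$ for fixed $\psi\in\Psi$ then follows from continuity of $\Phi^{-1}$ together with the uniform estimates in (\ref{mainassumtpion4S}).

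For assertion (b), implicit differentiation of (\ref{root4eta(psi)}) in $\psi$ yields
$$\eta'(\psi)\bigl(\sigma^2 + \lambda\widetilde\gamma\zeta e^{\eta\zeta+\psi\zeta^2}\bigr) = -\lambda\widetilde\gamma\zeta^2 e^{\eta\zeta+\psi\zeta^2}.$$
Dividing by $\widetilde\gamma$ gives
$$\left(\frac{\eta(\psi)}{\widetilde\gamma}\right)' = -\frac{\lambda\zeta^2 e^{\eta\zeta+\psi\zeta^2}}{\sigma^2 + \lambda\widetilde\gamma\zeta e^{\eta\zeta+\psi\zeta^2}} < 0,$$
where numerator and denominator are both positive thanks to $\sigma,\lambda>0$ and the sign observation $\zeta\widetilde\gamma>0$.

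For assertion (c) I would extract the asymptotic behaviour of $\Phi^{-1}$ from its defining equation $\sigma^2 y + \lambda\widetilde\gamma\zeta e^y = a$: as $a\to-\infty$ the exponential term vanishes and $y \sim a/\sigma^2$, whereas as $a\to+\infty$ the exponential dominates and $y \sim \ln\bigl(a/(\lambda\widetilde\gamma\zeta)\bigr)$. Plugging these into (\ref{Phi2eta}), as $\psi\to-\infty$ the leading linear part of $\zeta^{-1}\Phi^{-1}$ cancels the $-\zeta\psi$ term and leaves $\eta^* = (r-\widetilde b+\lambda\widetilde\gamma)/\sigma^2$; whereas as $\psi\to+\infty$ the $\ln$-growth of $\zeta^{-1}\Phi^{-1}$ is dominated by the linear term $-\zeta\psi$, and since $\zeta/\widetilde\gamma>0$ this drives $\eta(\psi)/\widetilde\gamma\to-\infty$.

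Finally, the same $\ln$-asymptotic shows $\Phi^{-1}(\zeta(r-\widetilde b+\lambda\widetilde\gamma)+\sigma^2\zeta^2 n)=o(n)$, so (\ref{Phi2eta}) gives $-\eta(n)/(n\widetilde\gamma) = \zeta/\widetilde\gamma - \Phi^{-1}(\cdot)/(n\zeta\widetilde\gamma)\to\zeta/\widetilde\gamma$, yielding the limit in (\ref{limite4eta(n)/n}). The inequality there is immediate from (\ref{Phi2eta}): writing $A_\psi := \zeta(r-\widetilde b+\lambda\widetilde\gamma)+\sigma^2\zeta^2\psi$, monotonicity of $\Phi^{-1}$ forces $\Phi^{-1}(A_0)\leq \Phi^{-1}(A_\psi)$ for $\psi\geq 0$, and dividing $\eta(0)-\eta(\psi)=\zeta^{-1}(\Phi^{-1}(A_0)-\Phi^{-1}(A_\psi))+\zeta\psi$ by $\widetilde\gamma$ and using $\zeta\widetilde\gamma>0$ delivers the inequality. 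The entire argument is essentially computational; the only real subtlety is tracking signs when $\gamma$ may be of either sign, and this is handled uniformly by the observation $\zeta\widetilde\gamma>0$.
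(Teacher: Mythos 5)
Your argument is correct and follows essentially the same route as the paper: everything is reduced to the explicit representation (\ref{Phi2eta}) and elementary calculus, with the sign fact $\zeta\widetilde\gamma>0$ doing all the work (the paper differentiates (\ref{Phi2eta}) rather than implicitly differentiating (\ref{root4eta(psi)}) for (b), obtains the limits in (c) from monotonicity plus passage to the limit in the defining equation, and uses the mean value theorem for the inequality in (d), but these are cosmetic differences). The only step to tighten is the $\psi\downarrow-\infty$ limit: the asymptotic equivalence $\Phi^{-1}(a)\sim a/\sigma^2$ alone does not control the difference $\Phi^{-1}(a)-a/\sigma^2$, so you should invoke the exact identity $\Phi^{-1}(a)=a/\sigma^2-\lambda\widetilde\gamma\zeta e^{\Phi^{-1}(a)}/\sigma^2$ (equivalently, pass to the limit directly in (\ref{root4eta(psi)}), as the paper does) to see that the remainder vanishes and the cancellation leaves exactly $\eta^*$.
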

The second lemma characterizes the process $Y^{\psi}$, defined in (\ref{EsscherInterval}), in a unique manner by a BSDE, and elaborate some its properties that will be useful throughout the rest of the paper.
\begin{lemma}\label{Lemma1} Suppose that (\ref{mainassumtpion4S}) holds, and let $p\in(1,\infty)$ and $\xi\in{L}^{0}(P)$ satisfying (\ref{MainAssumption4Xi}). For any $\psi\in\Psi$, $\eta(\psi)$ denote the unique root to (\ref{root4eta(psi)}), and  $Y^\psi$ is defined in (\ref{EsscherInterval}). Then  the following hold.\\
{\rm{(a)}} For any $\psi\in\Psi$,there exists a unique pair $( Z^{\psi}, U^{\psi}) \in  \mathbb{L}^p(W,R_0) \times \mathbb{L}^p(N,R_0)$, such that the triple $(Y^{\psi}, Z^{\psi}, U^{\psi})$ belongs to the space $ \mathbb{S}^p(R_0) \times \mathbb{L}^p(W,R_0) \times \mathbb{L}^p(N,R_0)$  and is the unique solution to 
    \begin{equation}\label{YpsiBSDE}
    \begin{split}
        Y_t^\psi &= \xi +\int_t^T f_{\psi}(s,Y^\psi_s, Z^\psi_s, U^\psi_s) ds - \int_t^T Z^\psi_s dW_s - \int_t^T U^\psi_s d\widetilde{N}_s,\\
        &= \xi +\int_t^T  h_{\psi}(s,Y^\psi_s,Z^\psi_s,U^\psi_s) ds- \int_t^T Z^\psi_s dW^0_s - \int_t^T U^\psi_s d\widetilde{N}^0_s.
        \end{split}
    \end{equation}
    Here  the functionals $ f_{\psi}$ and $ h_{\psi}$ are given, for any $ t \in[0,T]$ and $ (y,z,u)\in\mathbb{R}^3$, by
    \begin{equation}\label{generatorf}
    \begin{split}
        f_{\psi}(t, y, z, u) &:= \frac{\eta_t(\psi) \sigma_t}{\widetilde\gamma_t} (\widetilde\gamma_t z - \sigma_t u)- r_t y - \big(\frac{\widetilde{b}_t - r_t}{\widetilde \gamma_t} \big)u,\\
        h_{\psi}(t,y,z,u)&:= \frac{ \sigma_t}{\widetilde\gamma_t} \bigl(\eta_t(\psi) - \eta_s(0)\bigr)\left(\widetilde\gamma_t z - \sigma_t u\right) - r_t y.
        \end{split}
    \end{equation}
 Furthermore, for any $n\geq 1$, there exists $C_n\in(0,\infty)$ such that for any $\psi\in\Psi_n$,
 \begin{equation}\label{Control4(Y,Z,U,A)(psi)}
\Vert {Y}^{\psi}\Vert_{\mathbb{S}^p(R_0)}+\Vert{Z}^{\psi}\Vert_{\mathbb{L}^p(W,R_0) }+\Vert{U}^{\psi}\Vert_{\mathbb{L}^p(N,R_0)}+\Vert\int_0^T\vert{h}_{\psi}(s,Y^\psi_s,Z^\psi_s,U^\psi_s)\vert ds\Vert_{L^p(R_0)}\leq C_n\Vert\xi\Vert_{L^p(R_0)}.
 \end{equation}
 {\rm{(b)}} Let $B$ be defined in (\ref{Model4S}), $\psi\in\Psi$, and $(Y^{\psi},Z^{\psi},U^{\psi})$ be the solution to (\ref{YpsiBSDE}). If we denote
 \begin{equation}\label{g(psi)} 
{g}_{\psi}:={{\eta(\psi)-\eta(0)}\over{\widetilde\gamma}}\sigma(\widetilde\gamma{Z}^{\psi}-\sigma{U}^{\psi}),
\end{equation}
then we have 
 \begin{equation}\label{Y(psi)potentials}
\begin{split}
e^{-B_t}Y^{\psi}_t=\mathbb{E}_0\left[e^{-B_T}\xi+\int_t^T e^{-B_s}g_{\psi}(s)ds\big|{\cal{F}}_t\right],\quad{t}\geq 0.
\end{split}
\end{equation}
  {\rm{(c)}}  Let $\psi_i\in\Psi$, $i=1,2$ and $\Gamma$ be a predictable set. If $\psi_3:=\psi_1{I}_{\Gamma}+\psi_2{I}_{\Gamma^c}$, then
 \begin{equation}\label{psi1psi2}
 \begin{split}
 & d(e^{-B}Y^{\psi_3})={I}_{\Gamma}d(e^{-B}Y^{\psi_1})+{I}_{\Gamma^c}d(e^{-B}Y^{\psi_2}),\\
& Z^{\psi_3}= Z^{\psi_1}{I}_{\Gamma}+Z^{\psi_2}{I}_{\Gamma^c},   \quad\mbox{and}\quad U^{\psi_3}=U^{\psi_1}{I}_{\Gamma}+U^{\psi_2}{I}_{\Gamma^c}.
 \end{split}
 \end{equation}
  {\rm{(d)}} for any $\psi_i\in\Psi$, $i=1,2$, there exists $\psi_3\in\Psi$ such that 
  \begin{equation}\label{Domination4psi}
  \begin{split}
  &\vert\psi_3\vert\leq \max\left(\vert\psi_1\vert,\vert\psi_2\vert\right),\quad Y^{\psi_3}\geq \max\left(Y^{\psi_1},Y^{\psi_2}\right),\quad g^{\psi}=\max\left(g^{\psi_1},g^{\psi_2}\right),\\
    \end{split}
  \end{equation}
  As a results $\{Y^{\psi}:\ \psi\in\Psi\}$, $\{g^{\psi}:\ \psi\in\Psi\}$, $\{Y^{\psi}:\ \psi\in\Psi_n\}$ and $\{g^{\psi}:\ \psi\in\Psi_n\}$ all upward directed.

     \end{lemma}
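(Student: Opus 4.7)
The plan is to treat (a) via Girsanov plus the predictable representation property (PRP), then read off (b) by It\^o, and deduce (c)-(d) from a single algebraic identity satisfied by the root $\eta(\psi)$.

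For (a), I would realize $Y^\psi$ as the $R_\psi$-conditional expectation $\widehat Y^\psi_t:=e^{-B_t}Y^\psi_t=\mathbb{E}^{R_\psi}[e^{-B_T}\xi\mid\mathcal F_t]$. Lemma \ref{Lemma4Z(psi)Densities}(b) gives the $R_\psi$-local martingales $W^\psi$ and $\widetilde N^\psi$; since $\mathbb F$ is generated by $W$ and $N$, PRP under $R_\psi$ yields predictable $(\widetilde Z^\psi,\widetilde U^\psi)$ with $d\widehat Y^\psi=\widetilde Z^\psi dW^\psi+\widetilde U^\psi d\widetilde N^\psi$. Applying It\^o to $Y^\psi=e^{B}\widehat Y^\psi$, setting $Z^\psi:=e^B\widetilde Z^\psi$ and $U^\psi:=e^B\widetilde U^\psi$, and substituting $dW^\psi=dW-\eta(\psi)\sigma dt$, $d\widetilde N^\psi=d\widetilde N-(r-\widetilde b-\eta(\psi)\sigma^2)\widetilde\gamma^{-1}dt$, the drift reorganises exactly into $-f_\psi(t,Y^\psi,Z^\psi,U^\psi)$; a second substitution using $dW=dW^0+\eta(0)\sigma dt$, $d\widetilde N=d\widetilde N^0+(r-\widetilde b-\eta(0)\sigma^2)\widetilde\gamma^{-1}dt$ produces the driver $h_\psi$. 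The driver $h_\psi$ is Lipschitz in $(y,z,u)$ with Lipschitz constant controlled by $\sigma,\widetilde\gamma^{-1},r$ and $|\eta(\psi)-\eta(0)|$, and the latter is $n$-uniformly bounded on $\Psi_n$ by Lemma \ref{EtaProcess}(a). Standard $L^p$-BSDE theory for Lipschitz generators with jumps (Tang--Li / Barles--Buckdahn--Pardoux) then delivers both the existence-uniqueness of $(Y^\psi,Z^\psi,U^\psi)\in\mathbb S^p(R_0)\times\mathbb L^p(W,R_0)\times\mathbb L^p(N,R_0)$ and the $\Psi_n$-uniform a priori estimate (\ref{Control4(Y,Z,U,A)(psi)}), with $\|\xi\|_{L^p(R_0)}$ finite by Lemma \ref{Lemma4Z(psi)Densities}(c) and (\ref{MainAssumption4Xi}).

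For (b), I would apply It\^o to $e^{-B_t}Y^\psi_t$ using the second BSDE in (a); the drift collapses to $-r_tY^\psi_t-h_\psi(t,Y^\psi,Z^\psi,U^\psi)=-g_\psi(t)$ by a direct computation with the definition (\ref{g(psi)}). Hence $e^{-B_t}Y^\psi_t+\int_0^t e^{-B_s}g_\psi(s)ds$ is an $R_0$-local martingale; the $L^p$ bounds from (a) promote it to a true martingale, and conditioning yields (\ref{Y(psi)potentials}). For (c), the key algebraic fact is $\eta(\psi_3)=I_\Gamma\eta(\psi_1)+I_{\Gamma^c}\eta(\psi_2)$, which follows because substituting $\psi_3=\psi_1 I_\Gamma+\psi_2 I_{\Gamma^c}$ into (\ref{root4eta(psi)}) splits into the equations for $\eta(\psi_1),\eta(\psi_2)$ on $\Gamma,\Gamma^c$ respectively, and the root is unique by Lemma \ref{EtaProcess}(a). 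Consequently, as differentials, $I_\Gamma dW^{\psi_1}=I_\Gamma dW^{\psi_3}$ and $I_\Gamma d\widetilde N^{\psi_1}=I_\Gamma d\widetilde N^{\psi_3}$ (the Girsanov drift differences are proportional to $\eta(\psi_3)-\eta(\psi_1)=I_{\Gamma^c}(\eta(\psi_2)-\eta(\psi_1))$, which is killed by $I_\Gamma$), and symmetrically on $\Gamma^c$. Combining with the $R_{\psi_i}$-PRP representations from (a) gives the pathwise identity
\begin{equation*}
I_\Gamma d\widehat Y^1+I_{\Gamma^c}d\widehat Y^2=(I_\Gamma\widetilde Z^1+I_{\Gamma^c}\widetilde Z^2)dW^{\psi_3}+(I_\Gamma\widetilde U^1+I_{\Gamma^c}\widetilde U^2)d\widetilde N^{\psi_3},
\end{equation*}
which is an $R_{\psi_3}$-local martingale differential. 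Matching it against the unique $R_{\psi_3}$-PRP decomposition $d\widehat Y^3=\widetilde Z^3dW^{\psi_3}+\widetilde U^3d\widetilde N^{\psi_3}$ of $\widehat Y^3$, I obtain $Z^{\psi_3}=I_\Gamma Z^{\psi_1}+I_{\Gamma^c}Z^{\psi_2}$ and $U^{\psi_3}=I_\Gamma U^{\psi_1}+I_{\Gamma^c}U^{\psi_2}$; the definition of $g_\psi$ in (\ref{g(psi)}) together with the $\eta$ identity then forces $g_{\psi_3}=I_\Gamma g_{\psi_1}+I_{\Gamma^c}g_{\psi_2}$, and the $R_0$-semimartingale decomposition of $e^{-B}Y^{\psi_i}$ from (b) yields the first differential identity in (\ref{psi1psi2}).

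For (d), given $\psi_1,\psi_2\in\Psi$ I set $\Gamma:=\{g^{\psi_1}\geq g^{\psi_2}\}$, a predictable set because $g^{\psi_i}$ are predictable through $Z^{\psi_i},U^{\psi_i}$; defining $\psi_3:=\psi_1 I_\Gamma+\psi_2 I_{\Gamma^c}$ gives $|\psi_3|\leq\max(|\psi_1|,|\psi_2|)$ and, by (c), $g^{\psi_3}=\max(g^{\psi_1},g^{\psi_2})$. Monotonicity of the representation (\ref{Y(psi)potentials}) in $g_\psi$ delivers $Y^{\psi_3}\geq\max(Y^{\psi_1},Y^{\psi_2})$, and upward directedness of the four families (including the $\Psi_n$-restricted versions, since $\psi_3\in\Psi_n$ whenever $\psi_1,\psi_2\in\Psi_n$) follows at once. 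The main obstacle lies in the PRP identification inside (c): one must be careful that the $I_\Gamma$--$I_{\Gamma^c}$ indicator algebra does exactly cancel the Girsanov drift differences $(\eta(\psi_i)-\eta(\psi_3))\sigma$ and $(\eta(\psi_3)-\eta(\psi_i))\sigma^2/\widetilde\gamma$, and that the stochastic integrals against $dW^{\psi_3},d\widetilde N^{\psi_3}$ thereby produced coincide with the ansatz integrands; everything else is a routine manipulation of Lipschitz BSDEs and Girsanov changes of measure.
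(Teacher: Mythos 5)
Your proposal is correct and follows essentially the same route as the paper: realize $e^{-B}Y^{\psi}$ as an $R_{\psi}$-martingale, use predictable representation and Girsanov to produce the two BSDE forms with drivers $f_{\psi}$ and $h_{\psi}$, and obtain (c)--(d) from the splitting identity $\eta(\psi_1 I_{\Gamma}+\psi_2 I_{\Gamma^c})=\eta(\psi_1)I_{\Gamma}+\eta(\psi_2)I_{\Gamma^c}$ combined with uniqueness. The only divergence is in how the uniform estimate (\ref{Control4(Y,Z,U,A)(psi)}) is justified: the paper derives it from the reverse-H\"older property of $D^{\psi}$ (Lemma \ref{Lemma4Z(psi)Densities}-(c)) together with the weighted-norm inequalities of \cite{Choulli99}, whereas you invoke standard $L^p$ a priori estimates for Lipschitz BSDEs with jumps --- both work, since the Lipschitz constant of $h_{\psi}$ is uniform over $\Psi_n$ by Lemma \ref{EtaProcess}-(a).
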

The proof of the lemma is relegated to Appendix \ref{Appendix4Section4}, while below we address the driver $g_{\psi}$ of the BSDE (\ref{YpsiBSDE}) and show it can be controlled uniformly when $\psi$ spans $\Psi_n$.  
\begin{lemma}\label{lemma4Gtilde} Suppose (\ref{mainassumtpion4S})  holds, and for any $(n,\psi)\in\mathbb{N}\times\Psi$ we consider ${g}_{\psi}$ defined in (\ref{Domination4psi}) and $(\widetilde{g}_n,g_n,\widetilde{g})$ given by 
\begin{equation}\label{Processes(g(n)gTilde(n)gTilde}
\begin{split}
&\widetilde{g}_n(t):=\underset{\psi\in\Psi_n}{\esssup}({g}_{\psi}(t))\geq 0,\quad\widetilde{g}(t):=\sup_{n\geq 0}\widetilde{g}_n(t)=\underset{\psi\in\Psi}{\esssup}({g}_{\psi}(t)),\quad t\geq0,\\
&g_n(t):={{\eta_t(-n)-\eta_t(0)}\over{\widetilde\gamma_t}}\sigma_t(\widetilde\gamma_t{Z}_t^{(n)}-\sigma_t{U}_t^{(n)})^+ +  {{\eta_t(0)-\eta_t(n)}\over{\widetilde\gamma_t}}\sigma_t(\widetilde\gamma_t{Z}_t^{(n)}-\sigma_t{U}_t^{(n)})^-,
\end{split}
\end{equation}
where $(Z^{(n)},U^{(n)})$ with $Y^{(n)}$ constitute the solution to (\ref{BSDE(n)}). Then the following assertions hold.\\
{\rm{(a)}} Let $\psi\in\Psi$, $n\in\mathbb{N}$,  and $\psi_n$  be given by (\ref{Psi(n)}). Then we have 
\begin{equation}\label{Y(psi)Y(n)potentials}
\begin{split}
Y^{\psi}_t=\mathbb{E}_0\left[\xi+\int_t^T g_{\psi}(s)ds\big|{\cal{F}}_t\right],\quad{Y}^{(n)}_t=\mathbb{E}_0\left[\xi+\int_t^T g_n(s)ds\big|{\cal{F}}_t\right]=Y^{\psi_n}_t.
\end{split}
\end{equation}
{\rm{(b)}} For any $n\in\mathbb{N}$, we have 
\begin{equation}\label{Assumption}
  \mathbb{E}_0\left[\int_0^T\widetilde{g}_n(t)dt\right]\leq\mathbb{E}_0\left[\int_0^T\underset{\psi\in\Psi_n}{\esssup}\vert{g}_{\psi}(t)\vert{d}t\right]=\sup_{\psi\in\Psi_n}\mathbb{E}_0\left[\int_0^T\vert{g}_{\psi}(t)\vert{d}t\right]=:C_n<\infty.
\end{equation}
{\rm{(c)}} For $n\in\mathbb{N}$, $\widetilde{g}_n={g}_n,$, $P\otimes{dt}$-a.e., and hence $(g_n)_n$ increases to $\widetilde{g}$, $P\otimes{dt}$-a.e.. \end{lemma}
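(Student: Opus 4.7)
For the first identity I would invoke the potential representation (\ref{Y(psi)potentials}) from Lemma~\ref{Lemma1}(b): since the stochastic integrals in BSDE~(\ref{YpsiBSDE}) are true $R_0$-martingales by the $\mathbb{L}^p$-regularity (\ref{Control4(Y,Z,U,A)(psi)}), taking conditional expectations (absorbing the $r_sY^\psi_s$ term into the discount factor $e^{-B}$) yields the stated representation of $Y^\psi$. For the second identity I would introduce the explicit predictable process
\begin{equation*}
\psi_n(t) := -n\,\mathbf{1}_{\{\widetilde\gamma_t Z^{(n)}_t - \sigma_t U^{(n)}_t > 0\}} + n\,\mathbf{1}_{\{\widetilde\gamma_t Z^{(n)}_t - \sigma_t U^{(n)}_t \le 0\}} \in \Psi_n,
\end{equation*}
and verify, using $a = a^+ - a^-$ together with $\eta(\psi_n) = \eta(-n)\mathbf{1}_{\{\cdot>0\}} + \eta(n)\mathbf{1}_{\{\cdot\le 0\}}$, that the driver $h_{\psi_n}$ evaluated at $(Y^{(n)},Z^{(n)},U^{(n)})$ coincides with the driver of BSDE~(\ref{BSDE(n)}) at the same point. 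Uniqueness in Lemma~\ref{Lemma1}(a) then forces $(Y^{\psi_n},Z^{\psi_n},U^{\psi_n}) = (Y^{(n)},Z^{(n)},U^{(n)})$, and the second identity follows by applying the first to $\psi=\psi_n$.

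\textbf{Assertion (b).} The first inequality is immediate from $0 \le \widetilde g_n \le \esssup_{\psi\in\Psi_n}|g_\psi|$. For the middle equality I would establish that the family $\{|g_\psi|:\psi\in\Psi_n\}$ is upward directed: for $\psi_1,\psi_2\in\Psi_n$, the pasting $\psi_3 := \psi_1\mathbf{1}_\Gamma + \psi_2\mathbf{1}_{\Gamma^c}$ of Lemma~\ref{Lemma1}(c)--(d) stays in $\Psi_n$ since $|\psi_3|\le\max(|\psi_1|,|\psi_2|)\le n$, and for an appropriate $\Gamma$ it produces $|g_{\psi_3}|\ge\max(|g_{\psi_1}|,|g_{\psi_2}|)$. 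Monotone convergence along a countable chain then swaps $\esssup$ and expectation. Finiteness of $C_n$ follows from the uniform $L^p(R_0)$-bound on $\int_0^T|h_\psi(s,Y^\psi_s,Z^\psi_s,U^\psi_s)|ds$ in (\ref{Control4(Y,Z,U,A)(psi)}): since $g_\psi = h_\psi(\cdot,Y^\psi,Z^\psi,U^\psi) + rY^\psi$, boundedness of $r$, the $\mathbb{S}^p(R_0)$-estimate on $Y^\psi$, and H\"older's inequality give a uniform bound on $\mathbb{E}_0[\int_0^T|g_\psi|dt]$ over $\psi\in\Psi_n$.

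\textbf{Assertion (c) and main obstacle.} The idea is to compare two potential representations of $Y^{(n)}$. On one hand, $Y^{(n)} = \esssup_{\psi\in\Psi_n}Y^\psi$ by definition; combining the first identity of (a) with the upward directedness of $\{g_\psi\}_{\psi\in\Psi_n}$ and monotone convergence yields
\begin{equation*}
Y^{(n)}_t = \mathbb{E}_0\Bigl[\xi + \int_t^T \widetilde g_n(s)\,ds \,\Big|\,\mathcal{F}_t\Bigr].
\end{equation*}
On the other hand, the second identity of (a) gives $Y^{(n)}_t = \mathbb{E}_0[\xi + \int_t^T g_n(s)\,ds\,|\,\mathcal{F}_t]$. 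Equating and using uniqueness of the predictable finite-variation part in the Doob--Meyer decomposition of a martingale forces $\widetilde g_n = g_n$ $P\otimes dt$-a.e. The monotone convergence $g_n\uparrow\widetilde g$ then follows from $\Psi_n\subset\Psi_{n+1}$, $\Psi = \bigcup_n\Psi_n$, and standard properties of the essential supremum. The main obstacle I expect is the $\esssup$--expectation interchange used here and in (b): it requires extracting an increasing countable chain from the uncountable family $\{g_\psi\}_{\psi\in\Psi_n}$, and the pasting construction of Lemma~\ref{Lemma1}(c)--(d) is indispensable because it is the only mechanism guaranteeing that the chain remains inside $\Psi_n$ while staying upward directed.
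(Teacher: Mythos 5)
Your proposal is correct and follows essentially the same route as the paper: assertion (a) via the potential representation of Lemma \ref{Lemma1}-(b) together with the explicit pasting $\psi_n$ and uniqueness of the BSDE (\ref{YpsiBSDE}), assertion (b) via the upward directedness from Lemma \ref{Lemma1}-(d) and the uniform bounds (\ref{Control4(Y,Z,U,A)(psi)}), and assertion (c) by comparing the two potential representations of $Y^{(n)}$. The only cosmetic difference is in (c), where you invoke uniqueness of the Doob--Meyer decomposition while the paper takes $t=0$, compares expectations, and combines the resulting inequality with the pointwise bound $g_n\leq\widetilde{g}_n$; the two arguments are interchangeable.
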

This lemma is proved Appendix \ref{Appendix4Section4}, and the following lemma constitutes our last main technical step for the proof of Theorem \ref{BSDE4Y(up)}, and it elaborates some inequalities for the process $\widetilde{g}$. 
\begin{lemma}\label{LemmaforgTildeControl}
Consider the process $\widetilde{g}$ defined in (\ref{Processes(g(n)gTilde(n)gTilde}). Then the following assertions hold.\\
{\rm{(a)}} For any stopping time $\tau$, we have 
\begin{equation}\label{Domination4Gtilde}
\begin{split}
\mathbb{E}_0\left[\int_{\tau}^T\widetilde{g}(u)du\big|{\cal{F}}_{\tau}\right]\leq \mathbb{E}_0\left[\Delta+\xi^-\big|{\cal{F}}_{\tau}\right],\quad\mbox{where}\quad \Delta:=\underset{\psi\in\Psi}{\esssup}\sup_{0\leq{t}\leq{T}}\mathbb{E}_0[D^{\psi}_T(D^{\psi}_t)^{-1}\xi^+\big|{\cal{F}}_t].
\end{split}
\end{equation}
{\rm{(b)}} For $p\in(1,\infty)$, there exists a universal constant $C_p\in(0,\infty)$ such that 
\begin{equation}\label{Domination4Gtilde1}
\begin{split}
\mathbb{E}_0\left[\left(\int_0^T\widetilde{g}(u)du\right)^p\right]\leq C_p \mathbb{E}_0\left[(\xi^-)^p\right]+ C_p \sup_{\tau\in{\cal{T}},\psi\in\Psi}\mathbb{E}_0\left[{{D^{\psi}_T}\over{D^{\psi}_{\tau}}}(\xi^+)^p\right].
\end{split}
\end{equation}
\end{lemma}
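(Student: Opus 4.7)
The plan is to identify the increasing process $A_t := \int_0^t \widetilde g(u)\,du$ with the excess of the upper-bound value process $Y^{(up)}$ over the conditional expectation of the payoff, via the BSDE-type potential representation already at hand, and then feed this identification into a Garsia--Lenglart $L^p$-inequality for increasing processes.

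For (a), I would first combine the monotone convergence $\widetilde g_n\uparrow\widetilde g$ from Lemma \ref{lemma4Gtilde}-(c), the potential identity $\mathbb{E}_0[\int_\tau^T g_{\psi_n}(s)\,ds\mid{\cal F}_\tau]=Y^{\psi_n}_\tau-\mathbb{E}_0[\xi\mid{\cal F}_\tau]$ from Lemma \ref{lemma4Gtilde}-(a), and the identification $Y^{(up)}_\tau=\sup_n Y^{\psi_n}_\tau$ (Theorem \ref{BSDE4Y(up)}-(c) together with $Y^{(n)}=Y^{\psi_n}$ from Lemma \ref{lemma4Gtilde}-(a)) with monotone convergence of conditional expectations, to obtain
\begin{equation*}
\mathbb{E}_0\Bigl[\int_\tau^T\widetilde g(s)\,ds\,\Big|\,{\cal F}_\tau\Bigr]=Y^{(up)}_\tau-\mathbb{E}_0[\xi\mid{\cal F}_\tau].
\end{equation*}
Next, using $r\geq 0$ and $\xi^-\geq 0$, the very definition of $Y^\psi_\tau$ in (\ref{EsscherInterval}) yields the pointwise bound $Y^\psi_\tau\leq\mathbb{E}_0[\tfrac{D^\psi_T}{D^\psi_\tau}\xi^+\mid{\cal F}_\tau]\leq\Delta$ almost surely, hence $Y^{(up)}_\tau\leq\Delta$ a.s.; since $Y^{(up)}_\tau$ is ${\cal F}_\tau$-measurable, monotonicity of conditional expectation upgrades this pointwise bound to $Y^{(up)}_\tau\leq\mathbb{E}_0[\Delta\mid{\cal F}_\tau]$, and combining with $-\mathbb{E}_0[\xi\mid{\cal F}_\tau]\leq\mathbb{E}_0[\xi^-\mid{\cal F}_\tau]$ closes (a).

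For (b), setting $H:=\Delta+\xi^-$, assertion (a) furnishes the Lenglart-type domination $\mathbb{E}_0[A_T-A_\tau\mid{\cal F}_\tau]\leq\mathbb{E}_0[H\mid{\cal F}_\tau]$ for every stopping time $\tau$. I would then run the Garsia argument: with $\tau_k:=\inf\{t:A_t>k\}\wedge T$, continuity of $A$ forces $A_{\tau_k}=k$ on $\{A_T>k\}=\{\tau_k<T\}\in{\cal F}_{\tau_k}$, so $\mathbb{E}_0[(A_T-k)^+]\leq\mathbb{E}_0[H\,\mathbf{1}_{\{A_T>k\}}]$; multiplying by $p(p-1)k^{p-2}$, integrating over $(0,\infty)$, and applying Fubini yields $\|A_T\|_p^p\leq p\,\mathbb{E}_0[HA_T^{p-1}]\leq p\,\|H\|_p\|A_T\|_p^{p-1}$, whence $\|A_T\|_p\leq p(\|\Delta\|_p+\|\xi^-\|_p)$. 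It then remains to bound $\|\Delta\|_p^p$ by $C_pK_p$; here I would use that for each $\psi$ the process $M^\psi_t:=\mathbb{E}^{R_\psi}[\xi^+\mid{\cal F}_t]$ is a positive $R_\psi$-martingale, so Doob's $L^p$-maximal inequality under $R_\psi$ combined with the upward directedness of $\{M^\psi\}$ (mirroring the directedness of $\{Y^\psi\}$ in Lemma \ref{Lemma1}-(d)) allows $\Delta$ to be realised as a monotone limit along a countable sequence and closed via Fatou.

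The main obstacle will be passing from the natural $R_\psi$-side control $\mathbb{E}_0[D^\psi_T(\sup_t M^\psi_t)^p]\leq C_pK_p$ (from Doob under $R_\psi$) to a uniform-in-$\psi$ $L^p(R_0)$-bound on $\sup_t M^\psi_t$ itself, since the densities $D^\psi_T$ are not uniformly bounded across $\Psi$. My planned bridge is the pointwise Jensen bound $(M^\psi_\tau)^p\leq\mathbb{E}^{R_\psi}[(\xi^+)^p\mid{\cal F}_\tau]$, whose $R_0$-expectation equals $\mathbb{E}_0[\tfrac{D^\psi_T}{D^\psi_\tau}(\xi^+)^p]\leq K_p$ uniformly in $(\tau,\psi)$: after the upward-directed reduction to a countable realising sequence this yields, via Fatou, a uniform $L^p(R_0)$-bound on $\esssup_\psi M^\psi_\tau$ that plugs back into the Garsia estimate and delivers (\ref{Domination4Gtilde1}).
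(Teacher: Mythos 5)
Your part (a) is essentially the paper's argument: the potential identity from Lemma \ref{lemma4Gtilde}-(a), monotone convergence along $g_n\uparrow\widetilde g$, and the pointwise bound $Y^{\psi}_\tau\leq \mathbb{E}_0[D^\psi_T(D^\psi_\tau)^{-1}\xi^+\,|\,{\cal F}_\tau]\leq\Delta$ are exactly what is used there; likewise your Garsia step at the start of (b) is the paper's appeal to Dellacherie--Meyer, just written out.

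The gap is in the last, and hardest, step of (b): bounding $\mathbb{E}_0[\Delta^p]$ by $C_p\,K_p$ with $K_p:=\sup_{\tau,\psi}\mathbb{E}_0[D^\psi_T(D^\psi_\tau)^{-1}(\xi^+)^p]$. You correctly identify the obstacle (Doob under $R_\psi$ controls $\mathbb{E}_0[D^\psi_T(\sup_tM^\psi_t)^p]$, i.e.\ the supremum in the wrong measure), but the bridge you propose does not close it. The Jensen bound at stopping times gives $\mathbb{E}_0[(M^\psi_\tau)^p]\leq K_p$ uniformly in $(\tau,\psi)$, and hence, by directedness and Fatou, a uniform bound on $\mathbb{E}_0[(\esssup_\psi M^\psi_\tau)^p]$ for each \emph{fixed} stopping time $\tau$. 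But $\Delta$ carries the running supremum over $t$ inside the essential supremum, and uniform stopped $p$-th moment bounds only yield the weak-type estimate $R_0(\sup_tM^\psi_t>a)\leq K_p\,a^{-p}$ (evaluate at the first passage time of level $a$); weak $L^p$ does not integrate to a strong $L^p$ bound on the supremum. Since neither $M^\psi$ nor $\esssup_\psi M^\psi$ is an $R_0$-submartingale, Doob's $L^p$ inequality is unavailable under $R_0$. The paper closes precisely this gap with a dedicated interpolation: it introduces $\tau_a:=\inf\{t:\ (Y^\psi_t)^+>a\}$, integrates the level-set inequality in $a$ to reach $\Vert\sup_t(Y^\psi_t)^+\Vert_{L^p(R_0)}^p\leq q\,\mathbb{E}_0\bigl[D^\psi_T(D^\psi_{T\wedge\tau_a})^{-1}\xi^+\bigl(\sup_t(Y^\psi_t)^+\bigr)^{p-1}\bigr]$, applies H\"older under $R_\psi$, splits the running supremum into its pre- and post-$\tau_a$ parts, handles the latter with the conditional Doob inequality under $R_\psi$, and absorbs the remainder by Young's inequality. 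Some version of this argument (or of the reverse-H\"older machinery of \cite{Choulli99}, whose constants in Lemma \ref{Lemma1} are however only uniform over $\Psi_n$, not over $\Psi$) is indispensable; without it, (\ref{Domination4Gtilde1}) is not established.
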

The proof of this lemma is relegated to \ref{Appendix4Section4}, while below we prove Theorem  \ref{BSDE4Y(up)}.\\
\begin{proof}[Proof of Theorem \ref{BSDE4Y(up)}]
Remark that there is no loss of generality in assuming that $r\equiv0$, which will be enforced throughout this proof. The proof of the theorem is divided into two parts. The first part proves assertion (b) and (c), while the second part proves both assertions (a) and (d).\\
{\bf Part 1.} Hereto we prove assertion (b). To this end, for $s\in [0,T]$, $(y,z,u)\in\mathbb{R}^3$ and $n\in\mathbb{N}$, we put 
\begin{equation}\label{DRiver4Y(n)BSDE}
 \kappa_n(s,y,z,u):={{\eta_s(-n)-\eta_s(0)}\over{\widetilde\gamma_s}}\sigma_s(\widetilde\gamma_s{z}-\sigma_s{u})^+ +  {{\eta_s(0)-\eta_s(n)}\over{\widetilde\gamma_s}}\sigma_s(\widetilde\gamma_s{z}-\sigma_s{u})^-,
\end{equation}
It is clear that $ \kappa_n$ is the driver of the BSDE (\ref{BSDE(n)}). In virtue of (\ref{mainassumtpion4S}) and Lemma \ref{Lemma4Z(psi)Densities}, there exists a constant $C_n\in(0,\infty)$ such that 
\begin{equation*}
\begin{split}
&\vert\kappa_n(t,y_1,z_1,u_1)-\kappa_n(t,y_2,z_2,u_2)\vert\\
&\leq{C_n}\vert (\widetilde\gamma_s{z_1}-\sigma_s{u_1})^+-(\widetilde\gamma_s{z_2}-\sigma_s{u_2})^+\vert+C_n\vert (\widetilde\gamma_s{z_1}-\sigma_s{u_1})^--(\widetilde\gamma_s{z_2}-\sigma_s{u_2})^-\vert\\
&\leq{2C^2_n}\left(\vert{z_1}-{z_2}\vert+\vert{u_1}-{u_2}\vert+\vert{y_1}-{y_2}\vert\right).\end{split}
\end{equation*}
This proves that the driver $\kappa_n$ is Lipschitz. Hence, we deduce that there exists a unique solution to the BSDE (\ref{BSDE(n)}), denoted by $(\widetilde{Y}^{(n)}, \widetilde{Z}^{(n)},\widetilde{U}^{(n)})$ and belongs to the space  $ \mathbb{S}^p(R_0) \times \mathbb{L}^p(W,R_0) \times \mathbb{L}^p(N,R_0)$.
Thus, assertion (b) will follow immediately as soon as we prove that actually  we have $Y^{(n)}=\widetilde{Y}^{(n)}$.\\
To this end, in virtue of Lemma \ref{EtaProcess}, we remark that  
\begin{equation}\label{h(psi)-kappa(n)}
h_{\psi}(t,y,z,u)\leq \kappa_n(t,y,z,u),\quad\mbox{ for any}\ t\in[0,T],\ (y,z,u)\in\mathbb{R}^3,\ \psi\in\Psi_n.\end{equation}
 On the one hand, in virtue of the assumption (\ref{mainassumtpion4S}) and Lemma \ref{Lemma4Z(psi)Densities}, direct It\^o calculations allows us to deduce that for any $\psi\in\Psi_n$, 
 \begin{equation}\label{Dpsi}
{\cal{E}}\Bigl(\sigma(\eta(\psi)-\eta(0))\is{W}^0-{{\sigma^2}\over{\widetilde\gamma\lambda}}(\eta(\psi)-\eta(0))e^{-\eta(0)\zeta}\is\widetilde{N}^0\Bigr)={D}^{\psi}\in{\cal{M}}(R_0),
 \end{equation}
 and hence 
  \begin{equation}\label{Wpsi2}
  W^{\psi}=W^0-\int_0^{\cdot}\sigma_s(\eta_s(\psi)-\eta_s(0))ds\quad \mbox{and}\quad\widetilde{N}^{\psi}=\widetilde{N}^0+\int_0^{\cdot}{{\sigma_s^2}\over{\widetilde\gamma_s}}(\eta_s(\psi)-\eta_s(0))ds.
   \end{equation}
 On the other hand, we derive 
 \begin{equation*}\label{differnce4Y(n)Y(psi)}
 \begin{split}
 &\widetilde{Y}^{(n)}_t-Y^{\psi}_t\\
 &=\int_t^T\left(\kappa_n(s, \widetilde{Z}^{(n)},\widetilde{U}^{(n)})-h_{\psi}(t,Z^{\psi}_s,U^{\psi}_s)\right) ds-\int_t^T(\widetilde{Z}^{(n)}_s-Z^{\psi}_s)dW^0_s-\int_t^T(\widetilde{Z}^{(n)}_s-Z^{\psi}_s)d\widetilde{N}^0_s\\
 &\geq\int_t^T\left(h_{\psi}(s, \widetilde{Z}^{(n)}_s,\widetilde{U}^{(n)}_s)-h_{\psi}(t,Z^{\psi}_s,U^{\psi}_s)\right) ds-\int_t^T(\widetilde{Z}^{(n)}_s-Z^{\psi}_s)dW^0_s-\int_t^T(\widetilde{Z}^{(n)}_s-Z^{\psi}_s)d\widetilde{N}^0_s\\
 &=-\int_t^T(\widetilde{Z}^{(n)}_s-Z^{\psi}_s)\left(dW^0_s-\sigma_s(\eta_s(\psi)-\eta_s(0))ds\right)-\int_t^T(\widetilde{Z}^{(n)}_s-Z^{\psi}_s)\left(d\widetilde{N}^0_s+{{\sigma_s^2}\over{\widetilde\gamma_s}}(\eta_s(\psi)-\eta_s(0))ds\right)\\
 &=-\int_t^T(\widetilde{Z}^{(n)}_s-Z^{\psi}_s)d W^{\psi}_s-\int_t^T(\widetilde{Z}^{(n)}_s-Z^{\psi}_s)d\widetilde{N}^{\psi}_s.
 \end{split}
 \end{equation*}
 Thus, by taking conditional expectation under $R_{\psi}$ in the above inequality, we deduce that $\widetilde{Y}^{(n)}\geq Y^{\psi}$ for any $\psi\in\Psi_n$, and hence this yields $Y^{(n)}\leq\widetilde{Y}^{(n)}$. To prove the converse, we consider the triplet $(\widetilde{Y}^{(n)}, \widetilde{Z}^{(n)},\widetilde{U}^{(n)})$ solution to (\ref{BSDE(n)}), we denote $\widetilde\Omega:=\Omega\times[0,\infty)$, and put 
\begin{equation}
\begin{split}
\eta_n^*:={{\eta_s(-n)-\eta_s(0)}\over{\widetilde\gamma_s}}I_{\Gamma^+_n}+ {{\eta_s(0)-\eta_s(n)}\over{\widetilde\gamma_s}}I_{\widetilde\Omega\setminus\Gamma^+_n},\quad \Gamma^+_n:=\left\{\widetilde\gamma\widetilde{Z}^{(n)}-\sigma\widetilde{U}^{(n)}\geq 0\right\}.
\end{split}
\end{equation}
In virtue of the equation (\ref{Phi2eta}), direct calculations shows that 
\begin{equation}\label{Psi(n)}
\eta^*_n=\eta(\psi_n),\quad\mbox{where}\quad \psi_n:=-nI_{\Gamma^+_n}+nI_{\Gamma^-_n}\in\Psi_n,\end{equation}
and obtain $\kappa_n(s, \widetilde{Y}^{(n)}_s,\widetilde{Z}^{(n)}_s,\widetilde{U}^{(n)}_s)=h_{\psi_n}(s, \widetilde{Y}^{(n)}_s,\widetilde{Z}^{(n)}_s,\widetilde{U}^{(n)}_s)$. This implies that $(\widetilde{Y}^{(n)},\widetilde{Z}^{(n)},\widetilde{U}^{(n)})$ is a solution to the BSDE (\ref{YpsiBSDE}) when $\psi=\psi_n$. Therefore, the uniqueness of the solution of this latter BSDE is guaranteed by Lemma \ref{Lemma1}, and this yields  $\widetilde{Y}^{(n)}={Y}^{\psi_n}\leq Y^{(n)}$. This proves $\widetilde{Y}^{(n)}=Y^{(n)}$, and the proof of assertion (b) is complete. \\
{\bf Part 2.} This parts proves assertion (c). Due to the definition of the essential supremum, it is clear that $(Y^{(n)})_n$ is a nondecreasing sequence and 
\begin{equation}\label{ComparisonY(n)}
Y^{(n)}\leq Y^{(n+1)}\leq Y^{up},\quad\mbox{and}\quad \sup_n Y^{(n)}\leq Y^{up}.
\end{equation}
Furthermore, thanks to Lemma \ref{Lemma1} -(d), there exists a sequence $(\psi_k)_k\subset\Psi$ such that 
$$Y^{up}= \underset{\psi\in\Psi}{\esssup} Y^{\psi}=\displaystyle\lim_{k\longrightarrow\infty}Y^{\psi_k}_t.$$
Thus, as $\psi_k$ is bounded, there exists $n_k\in\mathbb{N}$ such that $\psi_k\in\Psi_{n_k}$, and we deduce 
$$Y^{up}=\lim_{k\longrightarrow\infty}Y^{\psi_k}\leq \sup_k Y^{(n_k)}\leq \sup_n Y^{(n)}.$$
A combination of this with (\ref{ComparisonY(n)}) yields the convergence almost surely of $Y^{(n)}$ to $Y^{up}$. This proves (\ref{Increasiness4Y(n)}).  To prove the remaining claim of assertion (c), for any $n\geq 0$, we consider $(Y^{(n)}, Z^{(n)},U^{(n)})$ the unique solution to (\ref{BSDE(n)}), and  $(Y^{\psi}, Z^{\psi},U^{\psi})$ is the unique solution to (\ref{YpsiBSDE}) for any $\psi\in\Psi$. 
\begin{equation}\label{BSDE4Y(psi)}
\begin{split}
dY^{\psi}_t=- {{\eta_t(\psi)-\eta_t(0)}\over{\widetilde\gamma_t}}\sigma_t(\widetilde\gamma_t{Z}_t^{\psi}-\sigma_t{U}_t^{\psi})dt+{Z}_s^{\psi}dW^0_t+{Z}_s^{\psi}d\widetilde{N}^0_t,\quad Y^{\psi}_T=\xi.
\end{split}
\end{equation}
Then, thanks to assertion (a) of this lemma,  for any $n,m\in\mathbb{N}$, we derive
\begin{equation}\label{GtildeVersusg(n)}
\begin{split}
 Y^{(n+m)}_t-Y^{(n)}_t&=E_0\left[\int_t^T\left(g_{n+m}(u)-g_n(u)\right)du \big|{\cal{F}}_t\right]\\
&=E_0\left[\int_0^T\left(g_{n+m}(u)-g_n(u)\right)du \big|{\cal{F}}_t\right]-\int_0^t\left(g_{n+m}(u)-g_n(u)\right)du.
\end{split}
\end{equation}
Thus, in virtue of Lemma \ref{lemma4Gtilde}-(c), we conclude that is in fact a nonnegative supermartingale under $R_0$. This ends the proof of assertion (c). \\
{\bf Part 3.} In this part, we prove the following properties.
\begin{equation}\label{ConvergenceResults}
\begin{split}
&i)\quad (Y^{(n)}, Z^{(n)}, U^{(n)})\quad\mbox{  converges in}\quad \mathbb{S}^p(R_0)\times\mathbb{L}^p(W,R_0) \times \mathbb{L}^p(N,R_0),\\
&ii)\quad \mbox{the limit of  $(Y^{(n)}$ coincides with the process $Y^{up}$},\\
&iii)\quad \mbox{there exists a subsequence of $(Y^{(n)}, Z^{(n)}, U^{(n)})$ that converges $P\otimes{dt}$-a.e.}.
\end{split}
\end{equation}
On the one hand, by combining Lemma \ref{LemmaforgTildeControl}-(b) and Lemma \ref{lemma4Gtilde}-(c) and the dominated convergence theorem, we deduce that under the assumption  (\ref{MainAssumption4Xi}), we have
\begin{equation}\label{NormConvergence4g(n)}
\sup_{m}\mathbb{E}_0\left[\left(\int_0^T(g_{n+m}(u)-g_{n}(u))du\right)^p\right]\longrightarrow 0\quad\mbox{and}\quad\mathbb{E}_0\left[\left(\int_0^T(\widetilde{g}(u)-g_{n}(u))du\right)^p\right]\longrightarrow 0.
\end{equation}
On the other hand, thanks to (\ref{GtildeVersusg(n)}), we get 
\begin{equation*}
\begin{split}
0\leq Y^{(n+m)}_t-Y^{(n)}_t=\mathbb{E}_0\left[\int_t^T\left(g_{n+m}(u)-g_n(u)\right)du \big|{\cal{F}}_t\right]\leq\mathbb{E}_0\left[\int_0^T\left(g_{n+m}(u)-g_n(u)\right)du \big|{\cal{F}}_t\right].
\end{split}
\end{equation*}
Then by combining this latter inequality with Doob's inequality, we deduce the existence of a universal constant $C_p\in(0,\infty)$ such that 
\begin{equation*}
\begin{split}
\sup_{m}\Vert Y^{(n+m)}-Y^{(n)}\Vert_{{\cal{S}}^p}^p\leq C_p\sup_{m}\mathbb{E}_0\left[\left(\int_0^T(g_{n+m}(u)-g_{n}(u))du\right)^p\right]\longrightarrow 0.
\end{split}
\end{equation*}
This implies that $(Y^{(n)})_n$ is a Cauchy sequence in $ \mathbb{S}^p(R_0)$-norm and hence it converges in this norm to its limit $\widetilde{Y}\in \mathbb{S}^p(R_0)$. By combining this with assertion (c) proved in part 2--which states that $Y^{(n)}$ converges almost surely pointe-wise to $Y^{up}$-- and the uniqueness of the limit, we obtain the convergence of $Y^{(n)}$ almost surely pointe-wise and in $ \mathbb{S}^p(R_0)$-norm to $Y^{up}$. This proves the property  ii)  in (\ref{ConvergenceResults}). \\
Now we address the convergence of the sequence $(Z^{(n)}, U^{(n)})$ in the space $\mathbb{L}^p(W,R_0) \times \mathbb{L}^p(N,R_0)$. Thus, by combining assertion (c) which guarantees that  $Y^{(n)}- Y^{(n+m)}$ is an $R_0$-submartingale,  and \cite[Lemma 1.9]{Stricker} applied to $Y^{(n)}- Y^{(n+m)}$, we obtain 
\begin{equation}\label{Norm4Z(n)U(n)}
\Vert{Z}^{(n+m)}-{Z}^{(n)} \Vert^p_{L^p(W,R_0)}+\Vert{U}^{(n+m)}-{U}^{(n)}\Vert^p_{L^p(N,R_0)}\leq \Vert Y^{(n+m)}-Y^{(n)}\Vert_{{\cal{S}}^p(R_0)}^p.
\end{equation}
As a result, we conclude the existence of a pair $({Z}^{up},{U}^{up})\in\mathbb{L}^p(W,R_0) \times \mathbb{L}^p(N,R_0)$ such that $(Y^{(n)}, Z^{(n)},U^{(n)})$  converges to $({Y}^{up}, {Z}^{up},{U}^{up})$ in the space ${\cal{S}}^p(R_0)\times\mathbb{L}^p(W,R_0) \times \mathbb{L}^p(N,R_0)$, and the property i) in (\ref{ConvergenceResults}) is proved.\\
Furthermore, if $(T_k)_k$ be a sequence of stopping times that increases to infinity and $N^{T_n}\leq k$, then for any $k\geq 1$ we have 
\begin{equation*}
\begin{split}
R_0\otimes{{dt}\over{T}}\left(\lambda{e}^{\eta(0)\zeta}\vert{U}^{up}-{U}^{(n)}\vert{I}_{\Lbrack0,T_k\Rbrack}>\epsilon\right)&\leq(T\epsilon)^{-1}\mathbb{E}_0\left[\int_0^{T_k}\lambda_s{e}^{\eta_s(0)\zeta_s}\vert{U}^{up}_s-{U}^{(n)}_s\vert{d}s\right]\\
&=(T\epsilon)^{-1}\mathbb{E}_0\left[\int_0^{T_k}\vert{U}^{up}_s-{U}^{(n)}_s\vert{d}N_s\right]\\
&\leq(T\epsilon)^{-1}\Vert{U}^{up}-{U}^{(n)}\Vert^p_{L^p(N,R_0)}\Vert\sqrt{N_{T_k}}\Vert_{L^p(R_0)}.
\end{split}
\end{equation*}
This proves that ${U}^{(n)}$ converges to ${U}^{up}$ in probability under $R_0\otimes({dt}/T)$, then there exists a subsequence $(Y^{(n)}, Z^{(n)},U^{(n)})$ that converges to $({Y}^{up}, {Z}^{up},{U}^{up})$ in both norm and $R_0\otimes{dt}$-a.e.. This proves the property iii) in (\ref{ConvergenceResults}), and part 3 is complete.\\
 {\bf Part 4.} Here we prove that $({Y}^{up}, {Z}^{up},{U}^{up})$ satisfies the constraints and the Skorokhod condition, i.e. the last two conditions in (\ref{ConstrainedBSDE1}), and prove assertion (d).
  To this end, consider the subsequence$({Z}^{(n)},{U}^{(n)})$ that converges to $({Z}^{up},{U}^{up})$ $R_0\otimes{dt}$-a.e., which is obtained in Part 3.  Thus, by combining this with the fact that ${{\eta_s(-n)-\eta_s(0)}/{\widetilde\gamma_s}}$ converges to $(\eta_s^*-\eta_s(0))/\widetilde\gamma_s$, $R_0\otimes{dt}$-a.e. due to Lemma \ref{EtaProcess} -(c), we deduce that  $(\eta_s(-n)-\eta_s(0)){\widetilde\gamma_s}^{-1}\sigma_s(\widetilde\gamma_s{Z}_s^{(n)}-\sigma_s{U}_s^{(n)})^+$ converges to $(\eta_s^*-\eta_s(0)){\widetilde\gamma_s}^{-1}\sigma_s(\widetilde\gamma_s\widetilde{Z}_s-\sigma_s\widetilde{U}_s)^+$, $R_0\otimes{dt}$-a.e.. Hence,  as $g_n$ increases to $\widetilde{g}$ in both ${\cal{A}}^p(R_0)$-norm and $R_0\otimes{dt}$-a.e. (see Lemmas \ref{lemma4Gtilde}  and \ref{LemmaforgTildeControl}), and 
\begin{equation*}
\begin{split}
g_n(t)=\underbrace{{{\eta_s(-n)-\eta_s(0)}\over{\widetilde\gamma_s}}\sigma_s(\widetilde\gamma_s{Z}_s^{(n)}-\sigma_s{U}_s^{(n)})^+}_{g_n^{(1)}(t)}+ \underbrace{ {{\eta_s(0)-\eta_s(n)}\over{\widetilde\gamma_s}}\sigma_s(\widetilde\gamma_s{Z}_s^{(n)}-\sigma_s{U}_s^{(n)})^-}_{g_n^{(2)}(t)}\\
\end{split}
\end{equation*}
we obtain 
\begin{equation*}
\begin{split}
\mathbb{E}_0\left[\int_0^T{{g_n^{(2)}(t)}\over{n}}dt\right]\leq {1\over{n}} \mathbb{E}_0\left[\int_0^T{g}_n(u)du\right]\leq {1\over{n}}\mathbb{E}_0\left[\int_0^T\widetilde{g}(u)du\right]\longrightarrow 0.
\end{split}
\end{equation*}
By combining this with Fatou's lemma, the convergence a.e. of $ (\eta_s(0)-\eta_s(n))\sigma_s(\widetilde\gamma_s{Z}_s^{(n)}-\sigma_s{U}_s^{(n)})^-/n\widetilde\gamma_s$ to $\zeta\sigma_s(\widetilde\gamma_s\widetilde{Z}_s-\sigma_s\widetilde{U}_s)^-/\widetilde\gamma$, which is due (\ref{limite4eta(n)/n}) in Lemma \ref{EtaProcess}-(d), and $\zeta\sigma/\widetilde\gamma>0$, we deduce that 
\begin{equation*}
\begin{split}
\widetilde\gamma_s\widetilde{Z}_s-\sigma_s\widetilde{U}_s\geq0 \quad P\otimes {dt}-a.e..
\end{split}
\end{equation*}
As both processes $\int_0^{\cdot}g_n(t)dt$ and $\int_0^{\cdot}{g}_n^{(1)}(t)$ converge in $p$-variation-norm (i.e. ${\cal{A}}^p(R_0)$-norm), we deduce that $K_n(t):=\int_0^t g_n^{(2)}(s)ds$ converges also in $p$-variation-norm. Hence we deduce the existence of a nonnegative and adapted process $k^{up}$ such that 
\begin{equation*}
\begin{split}
K^{(n)}:=\int_0^{\cdot} g_n^{(2)}(s)ds\quad\mbox{converges in ${\cal{A}}^p(R_0)$-norm to }\quad\int_0^{\cdot}{k}^{up}_sds=:K^{up}.\end{split}
\end{equation*}
Therefore, as $I_{\{\widetilde\gamma{Z}^{(n)}-\sigma{U}^{(n)}\geq\epsilon\}}$ converges $R_0\otimes{dt}$-almost everywhere to $I_{\{\widetilde\gamma\widetilde{Z}-\sigma\widetilde{U}\geq\epsilon\}}$ for any $\epsilon>0$, we deduce that for any $\epsilon>0$, we have 
\begin{equation*}
\begin{split}
0=\mathbb{E}_0\left[\left(I_{\{\widetilde\gamma{Z}^{(n)}-\sigma{U}^{(n)}\geq\epsilon\}}\is{K_n}\right)_T\right]\longrightarrow \mathbb{E}_0\left[\left(I_{\{\widetilde\gamma\widetilde{Z}-\sigma\widetilde{U}\geq\epsilon\}}\is{K}^{up}\right)_T\right].
\end{split}
\end{equation*}
This implies, when taking $\epsilon$ to zero, that 
\begin{equation*}
\begin{split}
\int_0^T I_{\{\widetilde\gamma_s\widetilde{Z}_s-\sigma_s\widetilde{U}_s>0\}}d{K}^{up}_s=0\quad P\mbox{-.a.s}.\quad \mbox{or equivalently}\quad\int_0^T (\widetilde\gamma_s{Z}_s^{up}-\sigma_s{U}_s^{up})d{K}^{up}_s=0\quad P\mbox{-.a.s}..
\end{split}
\end{equation*}
This proves simultaneously assertion (d) and the last two conditions in  (\ref{ConstrainedBSDE1}), and Part 4 is complete.\\
{\bf Part 5.} Thanks to Part 3, we consider a subsequence  $(Y^{(n)}, Z^{(n)},U^{(n)})$ that converges to $({Y}^{up}, {Z}^{up},{U}^{up})$ in both norm and $R_0\otimes{dt}$-a.e. and $(Z^{(n)}\is{W}^0_t,U^{(n)}\is\widetilde{N}^0_t)$ converges $P$-almost surely to $(Z^{up}\is{W}^0_t,U^{up}\is\widetilde{N}^0_t)$ for any $t\in[0,T]$. Thus, by taking the limit  $R_0\otimes{dt}$-a.e.in the BSDE (\ref{BSDE(n)}), or equivalently in 
\begin{equation*}
\begin{split}
Y_t^{(n)}=&\xi+\int_t^T {{\eta_s(-n)-\eta_s(0)}\over{\widetilde\gamma_s}}\sigma_s(\widetilde\gamma_s{Z}_s^{(n)}-\sigma_s{U}_s^{(n)})^+ds  +  K^{(n)}_T-K^{(n)}_t-\int_t^T Z_s^{(n)}{d}W^0_s-\int_t^T{U}_s^{(n)}d\widetilde{N}^0_s,
\end{split}\end{equation*}
we deduce that $({Y}^{up}, {Z}^{up},{U}^{up},{K}^{up})\in\mathbb{S}^p(R_0)\times\mathbb{L}^p(W,R_0) \times \mathbb{L}^p(N,R_0)$, fulfills indeed (\ref{ConstrainedBSDE1}) fully. \\
 To prove that this solution is the smallest, we suppose that $(Y,Z,U,K)$ is a solution to the constrained BSDE (\ref{ConstrainedBSDE1}).  Then remark that $(Y,Z,U,K)$ satisfies 
 \begin{equation*}
 dY={{\sigma}\over{\widetilde\gamma}}(\eta(0)-\eta^*)(\widetilde\gamma{Z}-\sigma{U})dt-dK+ZdW^0+Ud\widetilde{N}^0,\quad Y_T=\xi.
 \end{equation*}
 Then by combining this with (\ref{YpsiBSDE}), (\ref{Dpsi}) and (\ref{Wpsi2}), for any $\psi\in\Psi$, we derive
 \begin{equation*}
 \begin{split}
 &d(Y-Y^{\psi})\\
 &={{\sigma}\over{\widetilde\gamma}}(\eta(0)-\eta^*)(\widetilde\gamma{Z}-\sigma{U})dt+{{\sigma}\over{\widetilde\gamma}}(\eta(\psi)-\eta(0))(\widetilde\gamma{Z}^{\psi}-\sigma{U}^{\psi})dt+(Z-Z^{\psi})dW^0+(U-U^{\psi})d\widetilde{N}^0-dK\\
 &={{\sigma}\over{\widetilde\gamma}}(\eta(\psi)-\eta^*)(\widetilde\gamma{Z}-\sigma{U})dt-dK+(Z-Z^{\psi})\Bigl(dW^0-\sigma(\eta(\psi)-\eta(0))dt\Bigr)\\
 &\ +(U-U^{\psi})\Bigl(d\widetilde{N}^0-{{\sigma^2}\over{\widetilde\gamma}}(\eta(\psi)-\eta(0))dt\Bigr)\\
 &={{\sigma}\over{\widetilde\gamma}}(\eta(\psi)-\eta^*)(\widetilde\gamma{Z}-\sigma{U})dt+(Z-Z^{\psi})dW^{\psi}+(U-U^{\psi})d\widetilde{N}^{\psi}.
 \end{split}
 \end{equation*}
In virtue of ${\widetilde\gamma}^{-1}(\eta(\psi)-\eta^*)\leq0$, which is due to assertions (b) and (c) of Lemma \ref{EtaProcess},  and $\widetilde\gamma{Z}-\sigma{U}\geq 0$, we deduce that $Y-Y^{\psi}$ is a $R_{\psi}$-supermartingale with a null terminal value. This implies that 
 $$
 Y_t-Y^{\psi}_t\geq\mathbb{E}^{R_{\psi}}\left[Y_T-Y^{\psi}_T\ \Big|{\cal{F}}_t\right]=0.$$
 Therefore, we get $ Y_t\geq{Y}^{\psi}_t$ $P$-a.s. for any $\psi\in\Psi$ and hence we get $ Y_t\geq\displaystyle\esssup_{\psi\in\Psi}(Y^{\psi}_t)=Y^{(up)}_t.$ This proves that $Y^{up}$ is the smallest solution to the constraint BSDE  (\ref{ConstrainedBSDE1}). Hence assertion (a) follows immediately and the proof of the theorem is complete.\end{proof}
\appendix
\section*{Appendices}
\section{Local martingale deflators via predictable characteristics}
\begin{theorem}\label{Characgteristics4Deflator} Suppose that $S$ is locally bounded. Then the following assertions hold.\\
{\rm{(a)}} ${\cal{Z}}_{loc}(S)\not=\emptyset$ if and only if there exists a pair $(\beta, f)$ such $\beta$ is a $d$-dimensional predictable process and $f$ is real-valued positive ${\cal{P}}\otimes{\cal{B}}(\mathbb{R})$-measurable functional (i.e. $f>0$), 
\begin{equation}\label{DeflatorConditions}
\begin{split}
\beta^{tr}c\beta\is{A}+\sqrt{(f(x)-1)^2\star\widetilde\mu}\in{\cal{A}}^+_{loc},\quad\mbox{and}\quad \widetilde{b}'+c\beta+\int\left(xf(x)-x\right)\widetilde{F}(dx)=0.
\end{split}
\end{equation}
{\rm{(b)}} ${\cal{Z}}_{loc}^{L\log{L}}(S)\not=\emptyset$ if and only if there exists a pair $(\beta, f)$ such that (\ref{DeflatorConditions}) holds and 
\begin{equation}\label{Intregbility4Deflator}
\left(f(x)\ln(f(x))-f(x)+1\right)\star\widetilde\mu\in{\cal{A}}^+_{loc}.
\end{equation}
\end{theorem}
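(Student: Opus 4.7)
I would work throughout with $\widetilde X$, since $S = S_0 {\cal{E}}(\widetilde X)$ by (\ref{fromXtoXtilde1}), and since, by (\ref{Xbounded}), local boundedness of $S$ gives $\widetilde X = X_0 + X^c + x\star(\widetilde\mu - \widetilde\nu) + \widetilde b' \is A$. The whole argument rests on the Jacod--Shiryaev characterization of local martingale deflators via the predictable characteristics of $\widetilde X$, together with It\^o's formula for $Z\ln Z$ in the form already used in the proofs of Theorems \ref{mainTHM4LinearEsscherBIS}--\ref{Charaterization4LinearEsscher}.

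For assertion (a), necessity goes as follows. Given $Z \in {\cal{Z}}_{loc}(S)$, write $Z = {\cal{E}}(N)$ with $N$ a local martingale satisfying $\Delta N > -1$, and apply the canonical decomposition of local martingales relative to $\widetilde X$ (Jacod--Shiryaev III.4.24) to get
\begin{equation*}
N = \beta \is X^c + (f-1)\star(\widetilde\mu - \widetilde\nu) + N^{\perp},
\end{equation*}
where $\beta$ is a predictable $\mathbb{R}^d$-valued process, $f>0$ is ${\cal{P}}\otimes{\cal{B}}(\mathbb{R}^d)$-measurable (positivity encoding $\Delta N > -1$), and $N^{\perp}$ is a local martingale strongly orthogonal to $X^c$ and to every integral with respect to $\widetilde\mu - \widetilde\nu$. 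The condition $\sqrt{(f-1)^2\star\widetilde\mu}\in{\cal{A}}^+_{loc}$ is the standard requirement for $(f-1)\star(\widetilde\mu - \widetilde\nu)$ to be a local martingale. Via Yor's formula, $ZS \in {\cal{M}}_{loc}$ is equivalent to $\widetilde X + [\widetilde X, N] \in {\cal{M}}_{loc}$; computing the drift of the right-hand side, the strong orthogonality of $N^{\perp}$ kills its contribution (both continuous and jump parts), and the remaining drift equals $\bigl(\widetilde b' + c\beta + \int x(f-1)\widetilde F(dx)\bigr)\is A$, whose vanishing is exactly the second condition in (\ref{DeflatorConditions}). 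For sufficiency, given $(\beta, f)$, I would set $Z := {\cal{E}}\bigl(\beta \is X^c + (f-1)\star(\widetilde\mu - \widetilde\nu)\bigr)$, note that the integrability condition makes this a well-defined positive local martingale (with $f>0$ controlling $\Delta Z > -Z_-$), and verify $ZS \in {\cal{M}}_{loc}$ by running the previous drift computation in reverse.

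For assertion (b), I would apply It\^o's formula to $\varphi(z) = z\ln z$, reproducing the entropy-Hellinger decomposition used in the proof of Theorem \ref{mainTHM4LinearEsscherBIS} and originally from \cite[Lemma 3.2 and Proposition 3.5]{Choulli2005}:
\begin{equation*}
Z\ln Z = (\ln Z_- + 1)\is N + Z_- \is H^E(Z,P), \qquad H^E(Z,P) = \tfrac{1}{2}\beta^{tr}c\beta\is A + (f\ln f - f + 1)\star\widetilde\mu.
\end{equation*}
Since $g(y) := y\ln y - y + 1 \geq 0$ for $y>0$, the process $H^E(Z,P)$ is nondecreasing and $Z\ln Z$ is a local submartingale precisely when $H^E(Z,P)\in{\cal{A}}^+_{loc}$. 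The first summand of $H^E$ is already contained in (\ref{DeflatorConditions}), so the extra requirement reduces to (\ref{Intregbility4Deflator}). Conversely, the pointwise estimate $g(y) \geq c_0\bigl[(y-1)^2 I_{\{|y-1|\leq 1\}} + |y-1|I_{\{|y-1|>1\}}\bigr]$ shows that (\ref{Intregbility4Deflator}) already subsumes the $f$-part of (\ref{DeflatorConditions}), so the two conditions together are equivalent to ${\cal{Z}}^{L\log L}_{loc}(S)\neq\emptyset$.

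\textbf{Main obstacle.} The most delicate step is the Jacod-type decomposition in the necessity direction of (a), specifically the verification that $N^{\perp}$ contributes zero drift to $[\widetilde X, N]$. Strong orthogonality to $(\widetilde\mu - \widetilde\nu)$-integrals translates (via Jacod's $M^P_{\mu}$-conditional expectation formalism) into the vanishing of the compensator of $\sum \Delta \widetilde X\,\Delta N^{\perp}$, while strong orthogonality to $X^c$ gives $\langle (\widetilde X)^c, (N^{\perp})^c\rangle = 0$. These facts are classical but require careful bookkeeping along a localizing sequence of stopping times to reconcile them with the ${\cal{A}}^+_{loc}$-type integrability conditions appearing in the statement.
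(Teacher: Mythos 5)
The paper does not actually prove Theorem \ref{Characgteristics4Deflator}: it is stated in the appendix as a recalled result (it is essentially \cite[Lemma 2.4]{Choulli2007} together with the entropy--Hellinger characterization of \cite{Choulli2005}, both of which the authors invoke elsewhere, e.g.\ in the proof of Lemma \ref{Deflator4S2Delfators4Shat}). So there is no in-paper proof to compare against; judged on its own, your argument is the standard one and is essentially correct. It also matches the toolkit the paper deploys in Section \ref{Subsection4Proofs}: the Jacod decomposition of $N$ with $Z={\cal{E}}(N)$, the drift computation for $\widetilde{X}+[\widetilde{X},N]$ (identical in structure to the one in the proof of Lemma \ref{Deflator4S2Delfators4Shat}, including the Cauchy--Schwarz bound $\vert x(f-1)\vert\star\widetilde\mu\preceq\sqrt{[\widetilde X,\widetilde X]}\,\sqrt{(f-1)^2\star\widetilde\mu}$ needed to justify compensating the jump term), and the identity $Z\ln Z=(\ln Z_-+1)\is N+Z_-\is H^E(Z,P)$.

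Two points deserve more care than your sketch gives them. First, in the necessity direction of (a), the positivity of $f$ is not literally ``encoded by $\Delta N>-1$'': with $N=\beta\is X^c+(f-1)\star(\widetilde\mu-\widetilde\nu)+N^{\perp}$ one has, on $\{\Delta X\neq 0\}$, $1+\Delta N=f(\cdot,\Delta\widetilde X)+\Delta N^{\perp}$, and $f=M^P_{\widetilde\mu}[1+\Delta N\,\vert\,\widetilde{\cal{P}}]>0$ follows because it is the $M^P_{\widetilde\mu}$-conditional expectation of a strictly positive quantity, not by a pointwise identification. Second, and more substantively, your formula $H^E(Z,P)=\tfrac12\beta^{tr}c\beta\is A+(f\ln f-f+1)\star\widetilde\mu$ is only valid for deflators of the pure form ${\cal{E}}(\beta\is X^c+(f-1)\star(\widetilde\mu-\widetilde\nu))$; in the necessity direction of (b) a general $Z\in{\cal{Z}}^{L\log L}_{loc}(S)$ carries the orthogonal part $N^{\perp}$, which contributes additional nonnegative terms to $H^E(Z,P)$. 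To conclude (\ref{Intregbility4Deflator}) for the projected pair $(\beta,f)$ you need a conditional Jensen inequality for $g(y)=y\ln y-y+1$ applied to $M^P_{\widetilde\mu}[\,\cdot\,\vert\,\widetilde{\cal{P}}]$, i.e.\ $g(f)\leq M^P_{\widetilde\mu}[g(1+\Delta N)\,\vert\,\widetilde{\cal{P}}]$ — the same convexity device the authors use in Part 2 of the proof of Theorem \ref{mainTHM4LinearEsscherBIS}. With these two repairs the proof is complete.
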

\section{Proofs for Lemmas  \ref{Deflator4S2Delfators4Shat},  \ref{Integrability4ThetaProcesses},  \ref{TechnicalLemma2} and \ref{Minimization2Root}}\label{Appendix4Section3}
\begin{proof}[Proof of Lemma \ref{Deflator4S2Delfators4Shat}] The proof of the lemma is achieved in two parts.\\
1) Here we prove  $Z:={\cal{E}}(\beta\is{X}^c+(g-1)\star(\mu-\nu))\in {\cal{Z}}_{loc}(S)$ iff $Z':={\cal{E}}(\beta\is{X}^c+(g-1)\star(\mu-\widehat\nu))\in{\cal{Z}}_{loc}(\widehat{S},\widehat{Z})$. To this end, on the one hand thanks to \cite[Lemma 2.4-(1)]{Choulli2007}, we deduce that  $Z\in {\cal{Z}}_{loc}(S)$ if and only if 
\begin{equation}\label{EquaDeflator1}
(e^{x}-1)(g-1)\star\mu\in{\cal{A}}_{loc},\quad\mbox{and}\quad \widetilde{b}'+c\beta+\int xf(x)(g(x)-1)F(dx)\equiv 0\quad P\otimes{dA}-a.e..
\end{equation}
On the other hand, by stopping we can assume without loss of generality that $\widehat{Z}\in{\cal{M}}$ and we put $\widehat{Q}:=\widehat{Z}_T\cdot P$. Then, in virtue of \cite[Lemma 2.4-(1)]{Choulli2007} again applied to $(\widehat{S},\widehat{Q})$, we conclude that $Z'\in {\cal{Z}}_{loc}(\widehat{S},\widehat{Q})$ if and only if $x(g-1)\star\mu\in {\cal{A}}_{loc}(\widehat{Q})$ and 
\begin{equation}\label{EquaDeflator2}
 b'+{{c}\over{2}}+\int(e^x-1-x)F(dx)+c\beta+ \int x(g(x)-1)f(x)F(dx)\equiv 0,\quad P\otimes{dA}-a.e..
\end{equation}
On the one hand, thanks to $\widetilde{b}'=b'+c/2+\int(e^x-1-x)F(dx)$, it is clear that the second equations in (\ref{EquaDeflator1}) and (\ref{EquaDeflator2}) coincide. On the other hand,  it is easy to prove that $x(g-1)\star\mu\in {\cal{A}}_{loc}(\widehat{Q})$ if and only if $(e^{x}-1)(g-1)\star\mu\in{\cal{A}}_{loc}$, and both these conditions hold. In fact,  we remark that 
\begin{equation*}
\begin{split}
\vert(e^{x}-1)(g-1)\vert\star\mu&=\sum \vert (e^{\Delta{X}}-1)(g(\Delta{X})-1)\vert\leq \sqrt{\sum  (e^{\Delta{X}}-1)^2}\sqrt{\sum(g(\Delta{X})-1)^2}\\
&\leq C\sqrt{[X,X]}\sqrt{(g-1)^2\star\mu}.\end{split}
\end{equation*}
Thus, as $[X,X]$ is locally bounded and  $\sqrt{(g-1)^2\star\mu}\in{\cal{A}}^+_{loc}$, we get $\sqrt{[X,X]}\sqrt{(g-1)^2\star\mu}\in{\cal{A}}^+_{loc}$, and hence $\vert(e^{x}-1)(g-1)\vert\star\mu\in{\cal{A}}^+_{loc}$. This ends the first part.\\
2) Here we prove that $Z\ln(Z)$ is a special semimartingale if and only if $\widehat{Z}Z'\ln(Z')$ is a special semimartingale. By stopping we can assume without loss of generality that $\widehat{Z}\in{\cal{M}}$ and put $\widehat{Q}:=\widehat{Z}_{\infty}\cdot{P}$. Then the problem reduces to prove that  $Z\ln(Z)$ is a special semimartingale if and only if $Z'\ln(Z')$ is a special semimartingale under $\widehat{Q}$. Thanks to the entropy-Hellinger process, defined in  \cite[Definition 4.3]{Choulli2005}, and \cite[Lemma 3.2]{Choulli2005}, the claim becomes $h^E(Z,P)\in{\cal{A}}^+_{loc}$ if and only if $h^E(Z',\widehat{Q})\in{\cal{A}}^+_{loc}$. Thus,  \cite[Proposition 3.5]{Choulli2005} yields  
\begin{equation}
h^E(Z,P)={1\over{2}}\beta{c}\beta\is A+\left(g\ln(g)-g+1\right)\star\nu\quad\mbox{and}\quad h^E(Z',\widehat{Q})={1\over{2}}\beta{c}\beta\is A+\left(g\ln(g)-g+1\right)\star\widehat\nu.
\end{equation}
Hence, as $\widehat\nu(dt,dx)=f(x)\nu(dt,dx)$ and $f$ is locally bounded and locally bounded away from zero, we deduce that the claim holds. This ends the proof of the lemma.
\end{proof}
\begin{proof}[Proof of Lemma \ref{Integrability4ThetaProcesses}:] Let $\theta\in L(Y)$, and consider the notation given in the lemma. The proof of the lemma will be achieved in three parts. \\
1) Here we prove assertion (a). To this end, on the one hand, we remark that the fact that $\theta\in L(Y)$ yields $I_{\{\vert\theta^{tr}x\vert>\epsilon\}}\star\mu^Y=\sum_{0<s\leq\cdot} I_{\{\vert\theta^{tr}_s\Delta{Y}_s\vert>\epsilon\}}
\in{\cal{A}}_{loc}^+$. Hence, we deduce that 
\begin{equation}\label{Integrability1bis}
b_1^{\theta}:=F^Y\left(\{x:\ \vert\theta^{tr}x\vert>\epsilon\}\right)<\infty\quad P\otimes{dA^Y}\mbox{-a.e.},\quad\mbox{and}\quad b_1^{\theta}\in L(A^Y).\end{equation}
On the other hand, we have 
$$
\int\Vert{x}\Vert{I}_{\{\vert\theta^{tr}x\vert>\epsilon\geq \Vert{x}\Vert\}}F^Y(dx)\leq\epsilon F^Y\left(\{x:\ \vert\theta^{tr}x\vert>\epsilon\}\right).$$
By combining this inequality with (\ref{Integrability1bis}) and $b\in L(A^Y)$,  we deduce that (\ref{ConditionIntegrability1}) follows immediately. To prove (\ref{ZdecompositionCannonical}), we appeal to the canonical decomposition of $Y$  (i.e. (\ref{Xcanon})) and use (\ref{Definition4U(theta)}) and derive
\begin{equation*}\begin{split}
Z^{\theta}(Y)&=X-U^{\theta}(Y)\\
&=Y_0 + Y^c + b^Y\is{A}^Y + h(x) \star (\mu^Y - \nu^Y) + \big(x - h(x) \big) \star \mu^Y-xI_{\{\vert\theta^{tr}x\vert>\epsilon\ \mbox{or}\quad \vert{x}\vert>\epsilon\}}\star\mu^Y\\
& = Y_0 + Y^c + b^Y\is{A}^Y + h(x) \star (\mu^Y - \nu^Y) - xI_{\{\vert{x}\vert\leq\epsilon<\vert\theta^{tr}x\vert\}}\star \mu^Y.
\end{split}
\end{equation*}
Remark that $ xI_{\{\vert{x}\vert\leq\epsilon<\vert\theta^{tr}x\vert\}}\star \mu^Y\in {\cal{A}}_{loc}$ and by compensating it and inserting the compensator process in the above equality we derive 
\begin{equation*}\begin{split}
Z^{\theta}(Y)&=Y_0 + Y^c + b^Y\is{A}^Y + h(x) \star (\mu^Y - \nu^Y) -xI_{\{\vert{x}\vert\leq\epsilon<\vert\theta^{tr}x\vert\}}\star (\mu^Y-\nu^Y)-xI_{\{\vert{x}\vert\leq\epsilon<\vert\theta^{tr}x\vert\}}\star\nu^Y\\
&=Y_0 + Y^c + b^Y\is{A}^Y +xI_{\{\vert{x}\vert\leq\epsilon\ \&\ \vert\theta^{tr}x\vert\leq\epsilon\}}\star (\mu^Y-\nu^Y)-xI_{\{\vert{x}\vert\leq\epsilon<\vert\theta^{tr}x\vert\}}\star\nu^Y.
\end{split}
\end{equation*}
This proves (\ref{ZdecompositionCannonical}), and ends the proof of assertion (a).\\
2) Here we prove assertion (b). To this end, we remark that it is clear that $\theta\in L(U_1^{\theta})\cap{L}(U_2^{\theta})$, and hence we deduce that $\theta\in L(Z^{\theta}(Y))$. Therefore, $\theta\in{L}(Y^c)\cap{L}(M^{\theta})\cap{L}(b^{\theta}\is{A}^Y)$ is a direct consequence of \cite[Corollaire 2.6]{Stricker} and the fact that both local martingales $Y^c$ and $M^{\theta}$ are orthogonal.
Furthermore, in virtue of  (\ref{ZdecompositionCannonical}), we get 
\begin{equation*}
\begin{split}
    \theta\is{Z}^{\theta}=\theta\is{Y^c}+\theta^{tr}xI_{\{\Vert{x}\Vert\leq\epsilon\ \&\ \vert\theta^{tr}x\vert\leq\epsilon\}}\star (\mu^Y-\nu^Y)+\theta^{tr}b^{\theta}\is{A}^Y.
\end{split}
\end{equation*}
This proves  (\ref{Z(theta)Decomposition})  and ends the proof of assertion (b).\\
3) This part proves assertion (c). To this end, we use (\ref{Z(theta)Decomposition}), and derive 
\begin{equation*}
\theta\is{Y}=\theta\is{Z}^{\theta}(Y)+\theta\is{U}^{\theta}(Y)=\theta\is{Y^c}+\theta^{tr}xI_{\{\vert{x}\vert\leq\epsilon\ \&\ \vert\theta^{tr}x\vert\leq\epsilon\}}\star (\mu^Y-\nu^Y)+\theta^{tr}b^{\theta}\is{A}^Y+\theta\is{U}^{\theta,1}(Y)+\theta\is{U}^{\theta,2}(Y).
\end{equation*}
Remark that $\theta\is{U}^{\theta,1}(Y)=\sum \theta^{tr}\Delta{Y}I_{\{\vert\theta^{tr}\Delta{Y}\vert\leq\epsilon<\Vert\Delta{Y}\Vert\}}=\theta^{tr}xI_{\{\vert\theta^{tr}x\vert\leq\epsilon<\Vert{x}\Vert\}}\star\mu^Y\in{\cal{A}}_{loc}$, and by compensating this process in the above equality we obtain
\begin{equation*}\begin{split}
\theta\is{Y}=&\theta\is{Y^c}+\theta^{tr}xI_{\{\Vert{x}\Vert\leq\epsilon\ \&\ \vert\theta^{tr}x\vert\leq\epsilon\}}\star (\mu^Y-\nu^Y)+\theta^{tr}b^{\theta}\is{A}^Y+ \theta^{tr}{x}I_{\{\vert\theta^{tr}x\vert\leq\epsilon<\Vert{x}\Vert\}}\star(\mu^Y-\nu^Y)\\
&+\theta^{tr}{x}I_{\{\vert\theta^{tr}x\vert\leq\epsilon<\vert{x}\vert\}}\star\nu^Y+\theta\is{U}^{\theta,2}(Y)\\
=&\theta\is{Y^c}+\theta^{tr}xI_{\{\vert\theta^{tr}x\vert\leq\epsilon\}}\star (\mu^Y-\nu^Y)
+\theta^{tr}b^{\theta}\is{A}^Y+\theta^{tr}{x}I_{\{\vert\theta^{tr}x\vert\leq\epsilon<\vert{x}\vert\}}\star\nu+\theta\is{U}^{\theta,2}(Y).
\end{split}
\end{equation*}
Thus, this equality yields the decomposition (\ref{Decomposition4Theta(X)}), and the proof of the lemma is complete.
\end{proof}
\begin{proof}[Proof of Lemma \ref{TechnicalLemma2}]1) This part proves assertion (a). Then we consider the following decomposition
\begin{equation*}
\begin{split}
 &e^{{\theta^{tr} x + x^{tr} \psi x}}- 1-h_{\epsilon}(\theta^{tr}x)\\
 &=\left( e^{{\theta^{tr} x + x^{tr} \psi x}}- 1-\theta^{tr}x\right)I_{\{\vert\theta^{tr}x\vert\leq\epsilon\}}+\left( e^{{\theta^{tr} x + x^{tr} \psi x}}- 1\right)I_{\{\vert\theta^{tr}x\vert>\epsilon\}}\\
 &=\left( e^{{\theta^{tr} x + x^{tr} \psi x}}- 1-\theta^{tr}x\right)I_{\{\vert\theta^{tr}x\vert\leq\epsilon\ \&\ \Vert{x}\Vert\leq\epsilon\}}+e^{{\theta^{tr} x + x^{tr} \psi x}}I_{\{\vert\theta^{tr}x\vert\leq\epsilon<\Vert{x}\Vert\}}- (1+\theta^{tr}x)I_{\{\vert\theta^{tr}x\vert\leq\epsilon<\Vert{x}\Vert\}}\\
 &\ - I_{\{\vert\theta^{tr}x\vert>\epsilon\}}+ e^{{\theta^{tr} x + x^{tr} \psi x}}I_{\{\vert\theta^{tr}x\vert>\epsilon\}}
\end{split}
\end{equation*}
Remark that $ I_{\{\vert\theta^{tr}x\vert>\epsilon\}}\star\widetilde\mu$, $ \vert1+\theta^{tr}x\vert{I}_{\{\vert\theta^{tr}x\vert\leq\epsilon<\Vert{x}\Vert\}}\star\widetilde\mu$, and $\vert{e}^{{\theta^{tr} x + x^{tr} \psi x}}- 1-\theta^{tr}x\vert{I}_{\{\vert\theta^{tr}x\vert\leq\epsilon\ \&\ \Vert{x}\Vert\leq\epsilon\}} \star\widetilde\mu$ belong to ${\cal{A}}_{loc}^+$. Therefore, we deduce that $\bigg| \exp({\theta^{tr} x + x^{tr} \psi x}) - 1-h_{\epsilon}(\theta^{tr}x) \bigg|\star\widetilde{\mu}  \in{\cal{A}}_{loc}^+$ if and only if 
\begin{equation}\label{Equa180}
 e^{{\theta^{tr} x + x^{tr} \psi x}}\left(I_{\{\vert\theta^{tr}x\vert>\epsilon\}}+I_{\{\vert\theta^{tr}x\vert\leq\epsilon<\Vert{x}\Vert\}}\right)\star\widetilde{\mu}  \in{\cal{A}}_{loc}^+.
\end{equation}
Similarly, we consider the following decomposition
\begin{equation*}
\begin{split}
 xe^{\theta^{tr}x+x^{tr}\psi{x}}-h_{\epsilon}(x)=&x\left(e^{\theta^{tr}x+x^{tr}\psi{x}}-1\right)I_{\{\Vert{x}\Vert\leq\epsilon\}}+xe^{\theta^{tr}x+x^{tr}\psi{x}}I_{\{\Vert{x}\Vert>\epsilon\}}\\
 =&x\left(e^{\theta^{tr}x+x^{tr}\psi{x}}-1\right)I_{\{\Vert{x}\Vert\leq\epsilon\ \&\ \vert\theta^{tr}x\vert\leq\epsilon\}}-xI_{\{\Vert{x}\Vert\leq\epsilon<\vert\theta^{tr}x\vert\}}\\
 &\ +xe^{\theta^{tr}x+x^{tr}\psi{x}}I_{\{\Vert{x}\Vert\leq\epsilon<\vert\theta^{tr}x\vert\}}+xe^{\theta^{tr}x+x^{tr}\psi{x}}I_{\{\Vert{x}\Vert>\epsilon\}}
\end{split}
\end{equation*}
Thanks to the boundedness of $\psi$, we deduce that $\Vert{x}\left(e^{\theta^{tr}x+x^{tr}\psi{x}}-1\right)\Vert{I}_{\{\Vert{x}\Vert\leq\epsilon\ \&\ \vert\theta^{tr}x\vert\leq\epsilon\}}\star\widetilde\mu$ and 
$\Vert{x}\Vert{I}_{\{\Vert{x}\Vert\leq\epsilon<\vert\theta^{tr}x\vert\}}\star\widetilde\mu$  belong to ${\cal{A}}^+_{loc}$. Hence, we conclude that $\Vert x\exp(\theta^{tr}x+x^{tr}\psi{x})-h_{\epsilon}(x)\Vert\star\widetilde{\mu} \in {\cal{A}}^+_{loc}$ if and only if 
\begin{equation}\label{Equa179}
\Vert{x}\Vert{e}^{\theta^{tr}x+x^{tr}\psi{x}}\left(I_{\{\Vert{x}\Vert>\epsilon\}}+I_{\{\Vert{x}\Vert\leq\epsilon<\vert\theta^{tr}x\vert\}}\right)\star\widetilde{\mu} \in {\cal{A}}^+_{loc}.
\end{equation}
As a result (\ref{IntegrabilityCondition4Linear4Proof}) holds if and only if  (\ref{Equa180}) and (\ref{Equa179}) hold, or equivalently  (\ref{IntegrabilityCondition4Linear}) is fulfilled. This ends the proof of assertion (a).\\
2) Here we prove assertion (b). To this end, throughout this part we suppose that  (\ref{IntegrabilityCondition4Linear})  holds. On the one hand, we combine this latter assumption with $({I}_{\{\Vert{x}\Vert>\epsilon\}}+I_{\{\vert\theta^{tr}x\vert>\epsilon\}})\star\widetilde\mu \in{\cal{A}}_{loc}^+$  and derive
\begin{equation}\label{Piece1}
\begin{split}
&\vert{e}^{\theta^{tr}x+x^{tr}\psi{x}}-1\vert{I}_{\{\Vert{x}\Vert>\epsilon\ \mbox{or}\ \vert\theta^{tr}x\vert>\epsilon\}}\star\widetilde\mu\\
&\leq  \vert{e}^{\theta^{tr}x+x^{tr}\psi{x}}-1\vert{I}_{\{\Vert{x}\Vert>\epsilon\}}\star\widetilde\mu+\vert{e}^{\theta^{tr}x+x^{tr}\psi{x}}-1\vert{I}_{\{\vert\theta^{tr}x\vert>\epsilon\}}\star\widetilde\mu\\
&\leq \left({{\Vert{x}\Vert}\over{\epsilon}}{e}^{\theta^{tr}x+x^{tr}\psi{x}}+1\right){I}_{\{\Vert{x}\Vert>\epsilon\}}\star\widetilde\mu +({e}^{\theta^{tr}x+x^{tr}\psi{x}}+1){I}_{\{\vert\theta^{tr}x\vert>\epsilon\}}\star\widetilde\mu\in{\cal{A}}_{loc}^+.
\end{split}
\end{equation}
On the other hand, as $\theta\in L(\widetilde{X})$ and hence $((\theta^{tr}x)^2 {I}_{\{\vert\theta^{tr}x\vert\leq \epsilon\}}\star\widetilde\mu \leq I_{\{ \vert\theta^{tr}\Delta\widetilde{X}\vert\leq\epsilon\}}\is [\theta\is \widetilde{X},\theta\is \widetilde{X}]\in{\cal{A}}_{loc}^+$, we deduce that 
\begin{equation}\label{Piece2}
({e}^{\theta^{tr}x+x^{tr}\psi{x}}-1)^2{I}_{\{\vert\theta^{tr}x\vert\leq \epsilon\ \&\ \Vert{x}\Vert\}}\star\widetilde\mu\leq C_{\epsilon}\left((\theta^{tr}x)^2+\vert{x}^{tr}\psi{x}\vert\right){I}_{\{\vert\theta^{tr}x\vert\leq \epsilon\ \&\ \Vert{x}\Vert\leq\epsilon\}}\star\widetilde\mu \in{\cal{A}}_{loc}^+.
\end{equation}
Thus, $\sqrt{({e}^{\theta^{tr}x+x^{tr}\psi{x}}-1)^2\star\widetilde\mu} \in{\cal{A}}_{loc}^+$ follows immediately from combining (\ref{Piece2}), (\ref{Piece1}) and the fact that 
$$
\sqrt{({e}^{\theta^{tr}x+x^{tr}\psi{x}}-1)^2\star\widetilde\mu}\leq \sqrt{({e}^{\theta^{tr}x+x^{tr}\psi{x}}-1)^2{I}_{\{\max(\vert\theta^{tr}x\vert, \Vert{x}\Vert)\leq \epsilon\}}\star\widetilde\mu}+\vert{e}^{\theta^{tr}x+x^{tr}\psi{x}}-1\vert{I}_{\{\Vert{x}\Vert>\epsilon\ \mbox{or}\ \vert\theta^{tr}x\vert>\epsilon\}}\star\widetilde\mu.$$
This proves the first claim of assertion (b).\\
3) Here we prove assertion (c). To this end we suppose that (\ref{IntegrabilityCondition4Linear})  and (\ref{mgEquation4Linear}) hold and 
\begin{equation}\label{Assumption4Psi}
\vert{e}^{x^{tr}\psi{x}}-1\vert\star\widetilde\mu\in{\cal{A}}_{loc}^+.\end{equation}
Remark that, in virtue of assertion (a),  it is clear that $\vert\theta^{tr}xe^{{\theta^{tr} x + x^{tr} \psi x}}  -e^{{\theta^{tr} x + x^{tr} \psi x}} + 1\vert\star\widetilde\mu\in{\cal{A}}_{loc}^+$ if and only if  $ \vert\theta^{tr}xe^{{\theta^{tr} x + x^{tr} \psi x}} -\theta^{tr}h_{\epsilon}(x)\vert\star\widetilde\mu\in{\cal{A}}_{loc}^+$ . Thus,  assertion (c) follows immediately as soon as we prove the latter property.

Then, on the one hand, remark that we always  have 
\begin{equation}\label{Decomposition1}
\theta^{tr}x{e}^{\theta^{tr}x+x^{tr}\psi{x}}-\theta^{tr}h_{\epsilon}(x)=e^{x^{tr}\psi{x}}\left((\theta^{tr}x){e}^{\theta^{tr}x}-{e}^{\theta^{tr}x}+1\right)+\left({e}^{\theta^{tr}x+x^{tr}\psi{x}}-1-\theta^{tr}h_{\epsilon}(x)\right)+(1-e^{x^{tr}\psi{x}}),
\end{equation}
and that assertion (a) combined with (\ref{IntegrabilityCondition4Linear}) implies that $\vert{e}^{\theta^{tr}x+x^{tr}\psi{x}}-1-\theta^{tr}h_{\epsilon}(x)\vert\star\widetilde\mu\in{\cal{A}}_{loc}^+$. Thus, this proves that 
thanks to (\ref{Assumption4Psi}) and (\ref{Decomposition1}), we conclude that 
\begin{equation}
 \int \left(\theta^{tr}x{e}^{\theta^{tr}x+x^{tr}\psi{x}}-\theta^{tr}h_{\epsilon}(x)\right)\widetilde{F}(dx)=\theta^{tr}\widetilde{b}+\theta^{tr}c\theta\in L(A)\quad \mbox{iff}\quad \vert{e}^{\theta^{tr}x+x^{tr}\psi{x}}-1-\theta^{tr}h_{\epsilon}(x)\vert\star\widetilde\mu\in{\cal{A}}_{loc}^+.
\end{equation}
On the other hand, by combining $\theta\in L(S)$ and the fact that (\ref{IntegrabilityCondition4Linear}) implies that $\theta\in L^1_{loc}(h_{\epsilon}(x)\star(\widetilde\mu-\widetilde\nu))\cap L((x-h_{\epsilon}(x))\star\widetilde\mu)$, we deduce that $\theta\in{L}(\widetilde{b}\is{A})$ or equivalently $\theta^{tr}\widetilde{b}\in L(A)$. Furthermore, $\theta^{tr}c\theta\is A\leq I_{\{\vert\theta^{tr}\Delta\widetilde{X}\vert\leq \epsilon\}}\is [\theta\is\widetilde{X},\theta\is\widetilde{X}]\in{\cal{A}}_{loc}^+$, we get $\theta^{tr}c\theta\in{L}(A)$. Thus, by combining all these and 
 $$\int \left(\theta^{tr}x{e}^{\theta^{tr}x+x^{tr}\psi{x}}-\theta^{tr}h_{\epsilon}(x)\right)\widetilde{F}(dx)=\theta^{tr}\widetilde{b}+\theta^{tr}c\theta,$$
 which is due to (\ref{mgEquation4Linear}), it clear that assertion (c) follows immediately. \\
4) Here we prove assertion (d). Thus, we remark that 
\begin{equation*}
\begin{split}
&\left(e^{\theta^{tr}x+x^{tr}\psi{x}}-1\right)({\bf{e}}(x)-\mathbb{I}_d)\\
&=\left(e^{\theta^{tr}x+x^{tr}\psi{x}}-1\right)({\bf{e}}(x)-\mathbb{I}_d)I_{\{\vert\theta^{tr}x\vert>\alpha\}}+\left(e^{\theta^{tr}x+x^{tr}\psi{x}}-1\right)({\bf{e}}(x)-\mathbb{I}_d)I_{\{\vert\theta^{tr}x\vert\leq\alpha\}}\\
&=e^{\theta^{tr}x+x^{tr}\psi{x}}({\bf{e}}(x)-\mathbb{I}_d)I_{\{\vert\theta^{tr}x\vert>\alpha\}}-({\bf{e}}(x)-\mathbb{I}_d)I_{\{\vert\theta^{tr}x\vert>\alpha\}}\\
&+\left(e^{\theta^{tr}x+x^{tr}\psi{x}}-1\right)({\bf{e}}(x)-\mathbb{I}_d)I_{\{\vert\theta^{tr}x\vert\leq\alpha\}}
\end{split}
\end{equation*}
As $S$ and $\psi$ are locally bounded, it is clear that 
$$\Vert({\bf{e}}(x)-\mathbb{I}_d\Vert{I}_{\{\vert\theta^{tr}x\vert>\alpha\}}\star\mu+\Vert\left(e^{\theta^{tr}x+x^{tr}\psi{x}}-1\right)({\bf{e}}(x)-\mathbb{I}_d)\Vert{I}_{\{\vert\theta^{tr}x\vert\leq\alpha\}}\star\mu\in{\cal{A}}^+_{loc}.$$
Furthermore, $e^{\theta^{tr}x+x^{tr}\psi{x}}\Vert{\bf{e}}(x)-\mathbb{I}_d\Vert{I}_{\{\vert\theta^{tr}x\vert>\alpha\}}\star\mu\in {\cal{A}}^+_{loc}$ if and only if 
\begin{equation}\label{Equa199}
e^{\theta^{tr}x}\Vert{\bf{e}}(x)-\mathbb{I}_d\Vert{I}_{\{\vert\theta^{tr}x\vert>\alpha\}}\star\mu\in {\cal{A}}^+_{loc}.\end{equation}
\begin{equation*}
\begin{split}
 e^{{\theta^{tr} x + x^{tr} \psi x}} - 1-h_{\epsilon}(\theta^{tr}x)&=\left(e^{{\theta^{tr} x + x^{tr} \psi x}} - 1-\theta^{tr}x\right)I_{\{ \vert\theta^{tr}x\vert\leq\epsilon\}}+\left(e^{{\theta^{tr} x + x^{tr} \psi x}} - 1\right)I_{\{ \vert\theta^{tr}x\vert>\epsilon\}}\\
 &=\left(e^{{\theta^{tr} x + x^{tr} \psi x}} - 1-\theta^{tr}x\right)I_{\{ \vert\theta^{tr}x\vert\leq\epsilon\}}- I_{\{ \vert\theta^{tr}x\vert>\epsilon\}}+e^{{\theta^{tr} x + x^{tr} \psi x}} I_{\{ \vert\theta^{tr}x\vert>\epsilon\}}
 \end{split}
 \end{equation*}
 On the one hand, it is clear that $I_{\{ \vert\theta^{tr}x\vert>\epsilon\}}\star\mu$ and $\left(e^{{\theta^{tr} x + x^{tr} \psi x}} - 1-\theta^{tr}x\right)I_{\{ \vert\theta^{tr}x\vert\leq\epsilon\}}\star\mu$ belongs to ${\cal{A}}_{loc}$. On the other hand, $e^{{\theta^{tr} x + x^{tr} \psi x}} I_{\{ \vert\theta^{tr}x\vert>\epsilon\}}\star\mu\in {\cal{A}}_{loc}^+$ if and only if 
 \begin{equation}\label{Equa200}
 e^{\theta^{tr} x} I_{\{ \vert\theta^{tr}x\vert>\epsilon\}}\star\mu\in {\cal{A}}_{loc}^+.
 \end{equation}
 Thanks to the local boundedness of $S$, it clear that (\ref{Equa200}) implies (\ref{Equa199}). Thus, this proves that (\ref{Equa198}) is equivalent to (\ref{IntegrabilityCondition4EE}), and the proof of assertion (b) is complete.
\end{proof}
\begin{proof}[Proof of Lemma \ref{Minimization2Root}] For any $\theta$ and any $\epsilon\in\mathbb{R}$, we put $\overline\theta:=\widehat\theta+\epsilon(\theta-\widehat\theta)=:\widehat\theta+\epsilon\Delta$. Therefore, as $\widehat\theta$ is solution to the minimization problem almost surely, then we derive
\begin{equation*}
\begin{split}
0\leq f(\overline\theta)-f(\widehat\theta)=\epsilon\Delta^{tr}c\widehat\theta+{{\epsilon^2}\over{2}}\Delta^{tr}c\Delta+\epsilon\Delta^{tr}\widetilde{b}'+\int\left( e^{\widehat\theta^{tr}x+x^{tr}\psi{x}}(e^{\epsilon\Delta^{tr}x}-1)-\epsilon\Delta^{tr}x\right)\widetilde{F}(dx).
\end{split}
\end{equation*}
Then by distinguishing the cases whether $\epsilon$ is positive or negative, dividing with it, and taking the limit afterwards, we obtain
\begin{equation*}
0=\Delta^{tr}\widetilde{b}'+\Delta^{tr}c\widehat\theta+\int\left( \Delta^{tr}x{e}^{\widehat\theta^{tr}x+x^{tr}\psi{x}} -\Delta^{tr}x\right)\widetilde{F}(dx).
\end{equation*}
Thus, as $\Delta=\theta-\widehat\theta$ spans $\mathbb{R}^d$, we conclude that $\widehat\theta$ is a solution to (\ref{Equation(3.7)}).
\end{proof}
\section{Proofs for lemmas of Section \ref{Section4pricing}}\label{Appendix4Section4}
\begin{proof}[Proof of Lemma \ref{Lemma4Z(psi)Densities} ] For any $\psi\in\Psi$, we denote 
$$\overline{M}^{\psi}:=\eta(\psi)\sigma\is{W}+\left(\exp(\eta(\psi)\zeta+\zeta^2\psi)-1\right)\is\widetilde{N}\in{\cal{M}}_{loc}.$$
Then we calculate the predictable quadratic variation of $\overline{M}^{\psi}$, under $P$, as follows
$$
\langle\overline{M}^{\psi}\rangle=\int_0^{\cdot}\left\{\eta(\psi)_s^2\sigma_s^2+\lambda_s\left(\exp(\eta_s(\psi)\zeta_s+\zeta^2_s\psi_s)-1\right)^2\right\}ds.$$
Therefore, thanks to Lemma \ref{EtaProcess}-(a), for any $\psi\in\Psi$, there exists $C_{\psi}\in(0,\infty)$ such that $\langle\overline{M}^{\psi}\rangle\leq C_{\psi}T$, and hence $\langle\overline{M}^{\psi}\rangle$ is bounded. This yields $\overline{Z}^{\psi}\in{\cal{M}}$, for an y $\psi\in\Psi$, and assertion (a) is proved. Assertion (b) follows from combining Girsanov's theorem and (\ref{root4eta(psi)}).\\
Thanks to Lemma \ref{EtaProcess} again, we deduce that there exists $C'_n\in(0,\infty)$ such that 
$$
\sup_{\psi\in\Psi_n}\vert{{\eta(\psi)}\over{\widetilde\gamma}}\vert\leq \sup_{\psi\in\Psi_n}\vert{{\eta(\psi)-\eta(0)}\over{\widetilde\gamma}}\vert+\vert{{\eta(0)}\over{\widetilde\gamma}}\vert\leq {{\eta(-n)-\eta(n)}\over{\widetilde\gamma}}++\vert{{\eta(0)}\over{\widetilde\gamma}}\vert\leq C_n'.
$$
Thus, by combining this with (\ref{mainassumtpion4S}) and the Hellinger process of order $p$
 \begin{equation*}
 \begin{split}
 {{dh^{(p)}_t(D^{\psi},R_0)}\over{dt}}=&{1\over{2}}(\eta_t(\psi)-\eta_t(0))^2\sigma_t^2 \\
 &+\lambda_t{e}^{\eta_t(0)\zeta_t}{{\exp(p(\eta_t(\psi)-\eta_t(0))\zeta_t+p\zeta^2_t\psi_t)-1-p(\exp\left((\eta_t(\psi)-\eta_t(0))\zeta_t+\zeta^2_t\psi_t\right)-1)}\over{p(p-1)}},
 \end{split}
 \end{equation*}
 which is due to \cite[Proposition 3.5]{Choulli2007}, we deduce the existence of $C_n\in(0,\infty)$ such that 
$$
\vert\eta(\psi)\vert+h^{(p)}(D^{\psi},R_0)\leq C_n,\quad \mbox{for any}\quad \psi\in\Psi_n,\quad P\otimes{dt}\mbox{-a.e.}.$$
Hence, thanks to \cite[Theorem 3.9]{Choulli2007}, this latter property implies the existence of $C''_n\in (0,\infty)$ such that 
$$E_0\left[\left({{D^{\psi}_T}\over{D^{\psi}_t}}\right)^p\Big|{\cal{F}}_t\right]\leq C''_n,\quad P\mbox{-a.s..}$$ This proves assertion (c), and the proof of the lemma is complete.
\end{proof}
\begin{proof}[Proof of Lemma \ref{EtaProcess}] 1) the continuity of $\Phi$ is obvious, while $\Phi(\infty)=\infty$, $\Phi(-\infty)=-\infty$ and $\Phi'(x)=\sigma^2+\lambda\widetilde\gamma\zeta{e}^x>0$ follow from $\lambda>0$, $\sigma>0$ and $\zeta\widetilde\gamma>0$. Thus, $\Phi^{-1}$ (the inverse function) exists, and direct calculations afterwards yield (\ref{Phi2eta}). Furthermore, in virtue of (\ref{mainassumtpion4S}), for any $\psi\in\Psi$ there exists $C'_{\psi}\in(0,\infty)$ such that $\vert \zeta\psi\vert+\vert\zeta(r-\widetilde{b}+\lambda\widetilde\gamma)+\sigma^2\zeta^2\psi\vert\leq C'_{\psi}$ $P\otimes{dt}$-a.e., and for any $C_1>0$, there exist $C_2\in(0,\infty)$ such that $-C2\leq\Phi^{-1}(C_1)\leq C_2$. Thus, by combining these, we deduce the existence of $C_{\psi}\in(0,\infty)$ such that $\vert \eta(\psi)\vert\leq C_{\psi}$ $P\otimes{dt}$-a.e.. This proves assertion (a).\\
2) To prove assertion (b), we remark that due to $\zeta\widetilde\gamma>0$, we have 
$${{\partial\eta(\psi)}\over{\widetilde\gamma\partial\psi}}={{\zeta}\over{\widetilde\gamma}}\left({{\sigma^2}\over{\sigma^2+\lambda\widetilde\gamma\zeta\exp(\Phi^{-1}(\Delta_{\psi}))}}-1\right)<0,\quad\mbox{where}\quad\Delta_{\psi}:=\zeta(r-\widetilde{b}+\lambda\widetilde\gamma)+\sigma^2\zeta^2\psi.$$
This proves assertion (b). \\
3) Thanks to assertion (b), we know that $\eta(\psi)/\widetilde\gamma$ is decreasing with respect to $\psi$, thus both 
$$l_{\infty}:=\lim_{x\longrightarrow-\infty}{{\eta(x)}\over{\widetilde\gamma}}={{\eta(-\infty)}\over{\widetilde\gamma}}\quad\mbox{and}\quad \bar{l}_{\infty}:=\lim_{x\longrightarrow\infty}{{\eta(x)}\over{\widetilde\gamma}}={{\eta(\infty)}\over{\widetilde\gamma}},$$
exist  and belong to $[-\infty,\infty]$. On the one hand,  remark that due to (\ref{root4eta(psi)}), we have 
$$\eta(\psi)/{\widetilde\gamma}\leq\left(r-\widetilde{b}+\lambda\widetilde\gamma\right)/{\widetilde\gamma}\sigma^2=:\Delta^{up}<\infty,\quad \mbox{for any}\quad \psi.$$
Thus, by combining this latter fact with assertion (b), we deduce that $-\infty<\eta(0)/{\widetilde\gamma}\leq l_{\infty}\leq \Delta^{up}<\infty$, and by letting $\psi$ to go to $-\infty$ in the equation (\ref{root4eta(psi)}), we obtain the first equality in (\ref{etaLimitsetaStar}). To prove the second equality in (\ref{etaLimitsetaStar}), we remark that $ \bar{l}_{\infty}\leq \Delta^{up}<\infty$, and we use again the equation (\ref{root4eta(psi)}) afterwards to contradict the assumption $\bar{l}_{\infty}>-\infty$. This ends the proof of assertion (c).\\
4) Notice that the second inequality in (\ref{limite4eta(n)/n}) is direct consequence of 
$${{\eta(0)-\eta(\psi)}\over{\widetilde\gamma}}={{\zeta\psi}\over{\widetilde\gamma}}\left(1-{{\sigma^2}\over{\sigma^2+\lambda\widetilde\gamma\zeta\exp(\Delta_{\psi}')}}\right)\leq{{\zeta\psi}\over{\widetilde\gamma}},\quad\mbox{where}\ \Delta_{\psi}'\in[\Delta_{0},\Delta_{\psi}],\quad\mbox{for any}\quad \psi\geq 0.$$
The first equality (\ref{limite4eta(n)/n}) follows from  (\ref{Phi2eta}), which yields 
$${{\eta(n)}\over{n\widetilde\gamma}}=-{{\zeta}\over{\widetilde\gamma}}+{{\Phi^{-1}(\Delta_0+\zeta^2\sigma^2{n})}\over{n\zeta\widetilde\gamma}},$$
and the fact that $\Phi(x)/x$ goes to infinity with $x$ (or equivalently $\Phi^{-1}(y)/y$ goes to zero when $y$ goes to infinity. This ends the proof of the lemma.
\end{proof}
\begin{proof}[Proof of Lemma \ref{Lemma1}] 1) Here that for any $n\geq 1$, there exist $C_n,\overline{C}_n\in(0,\infty)$ such that 
\begin{equation}\label{Control4Y(psi)}
\Vert\sup_{0\leq{t}\leq{T}}\vert{Y}^{\psi}_t\vert\Vert_{L^p(R_0)}\leq C_n\Vert\xi\Vert_{L^p(R_0)},\quad\mbox{for any}\quad \psi\in\Psi_n,\end{equation}
and assertion (a) afterwards. To this end, we consider $n\geq 1$, and use Lemma \ref{Lemma4Z(psi)Densities}-(c) and deduce  $Z^{\psi}\in {\cal{R}}_q(R_0)$, where $q$ is the conjugate of $p$. Thus, by combining this fact with \cite[Theorem 2.10]{Choulli99}, $Y^{\psi}_T=\xi$ and $\exp(-\int_0^{\cdot}r_s ds)Y^{\psi}$ is a martingale under $R_{\psi}$, the inequality (\ref{Control4Y(psi)}) follows immediately. Then due to the assumption (\ref{mainassumtpion4S}), the inequality implies that $Y^{\psi}$ is a special semimartingale under $R_0$, and its canonical decomposition is denoted by 
\begin{equation}\label{Doob4Y(psi)}
Y^{\psi}=Y^{\psi}_0+M^{\psi}+A^{\psi},\quad M^{\psi}\in{\cal{M}}_{loc}(R_0),\quad A^{\psi}\in{\cal{A}}_{loc}(R_0)\ \mbox{and is predictable}.
\end{equation}
Thus, again Lemma \ref{Lemma4Z(psi)Densities}-(c) guarantees that $M^{\psi}\in\mbox{bmo}_q$ and hence thanks to \cite[Theorem 4.5]{Choulli99}, we conclude the existence of $\overline{C}_n\in(0,\infty)$ such that 
\begin{equation}\label{Control4Y(psi)1}
\Vert{Y}^{\psi}_0\Vert_{L^p(R_0)}+\Vert[M^{\psi},M^{\psi}]^{1/2}_T\Vert_{L^p(R_0)}+ \Vert\mbox{Var}_T(A^{\psi})\Vert_{L^p(R_0)}\leq \overline{C}_n  \Vert\sup_{0\leq{t}\leq{T}}\vert{Y}^{\psi}_t\vert\Vert_{L^p(R_0)}.
\end{equation}
By combining (\ref{Control4Y(psi)}) and (\ref{Control4Y(psi)1}), under assumption (\ref{mainassumtpion4S}), we obtain that $M^{\psi}\in{\cal{M}}^p(R_0)$, and due to the martingale representation theorem, we obtain a unique pair $(Z^{\psi},U^{\psi})\in\mathbb{L}^p(W,R_0) \times \mathbb{L}^p(N,R_0)$ such that 
\begin{equation}\label{Y(psi)Representation}
Y^{\psi}=Y^{\psi}_0+Z^{\psi}\is{W}^0+U^{\psi}\is\widetilde{N}^0+A^{\psi}.\end{equation}
Then, as  $\exp(-\int_0^{\cdot}r_s ds)Y^{\psi}\in{\cal{M}}_{loc}(R_{\psi})$, the above equality and  Girsanov's theorem yield
\begin{equation}\label{A(psi)}
A^{\psi}=\int_0^{ \cdot}\left((\eta_s(\psi)-\eta_s(0)){{\sigma_s}\over{\widetilde\gamma_s}}(\widetilde\gamma_s{Z}_s^{\psi}-\sigma{U}_s)-r_s Y^{\psi}_s\right) ds.\end{equation}
On the one hand, by combining this latter claim with (\ref{Control4Y(psi)1}), (\ref{Doob4Y(psi)}) and  (\ref{Control4Y(psi)}), the inequality (\ref{Control4(Y,Z,U,A)(psi)}) follows immediately. On the other hand, by combining  (\ref{A(psi)}), (\ref{Y(psi)Representation}) and $Y^{\psi}_T=\xi$, the second equality in (\ref{YpsiBSDE}) follows immediately, while the first equality is a direct consequence of the second one and Girsanov's theorem. This ends the proof of assertion (a).\\
2) Here we prove assertion (b). Then  remark that for $\psi\in\Psi$ and denoted by $(Y^{\psi},Z^{\psi},U^{\psi})$ the solution to (\ref{YpsiBSDE}), the processes $Z^{\psi}\is{W}^0$  and $U^{\psi}\is\widetilde{N}^0$ are an $R_0$-martingales (due to assertion (a)).  Thus, by taking conditional expectation under $R_0$ in both sides of  (\ref{YpsiBSDE}), the equality (\ref{Y(psi)potentials}) follows immediately.\\
3) This part proves assertions (c) and (d). To this end,  we remark that assertion (d) is a direct consequence of assertion (c). In fact, for any $\psi_i\in\Psi$, $i=1,2$, we put $\Gamma:=\{g^{\psi_1}>g^{\psi_2}\}$, and then $\psi_3:=\psi_1{I}_{\Gamma}+\psi_2{I}_{\Gamma^c}$ satisfies (\ref{Domination4psi}). As a direct consequence of these inequalities, we deduce that all the four sets of assertion (d) are upward directed. This proves assertion (d), and the rest of this proof focuses on proving assertion (c). Therefore, on the one hand, we consider $\psi_i\in\Psi$, $i=1,2$, $\Gamma$ be a predictable set, and define the quadruplet $(\psi_3,\overline{Y},\overline{Z},\overline{U})$ of processes by 
\begin{equation}\label{YbarZbar}
\begin{split}
&\psi_3:=\psi_1{I}_{\Gamma}+\psi_2{I}_{\Gamma^c},\quad d\overline{Y}:=I_{\Gamma}d(e^{-B}Y^{\psi_1})+I_{\Gamma^c}d(e^{-B}Y^{\psi_2}),\quad \overline{Y}_T:=e^{-B_T}\xi,\\
&\overline{Z}:=e^{-B}Z^{\psi_1}I_{\Gamma}+e^{-B}Z^{\psi_2}I_{\Gamma^c},\quad \overline{U}:=e^{-B}U^{\psi_1}I_{\Gamma}+e^{-B}U^{\psi_2}I_{\Gamma^c}.
\end{split}
 \end{equation}
 On the other hand, for any $\psi\in\Psi$, the triplet $(e^{-B}Y^{\psi},e^{-B}Z^{\psi},e^{-B}U^{\psi})$ is the unique solution to the following BSDE
 \begin{equation}\label{BSDE4e(-B)Y(psi)}
 Y_t=e^{-B_T}\xi+\int_t^T(\eta_s(\psi)-\eta_s(0)){{\sigma_s}\over{\widetilde\gamma_s}}\left(\widetilde\gamma_s{Z}_s-\sigma_s{U}_s\right)ds -\int_t^T Z_s dW^0_s-\int_t^T U_s d\widetilde{N}^0_s
 \end{equation}
 Thus, in virtue of the notation in (\ref{YbarZbar}) and using (\ref{BSDE4e(-B)Y(psi)}) for each $e^{-B}Y^{\psi_i}$, $i=1,2$, we easily derive 
 $$
 \overline{Y}_t:=e^{-B_T}\xi+\int_t^T (\eta_s(\psi)-\eta_s(0)){{\sigma_s}\over{\widetilde\gamma_s}}\left(\widetilde\gamma_s\overline{Z}_s-\sigma_s\overline{U}_s\right)ds-\int_t^T \overline{Z}_s dW^0_s-\int_t^T \overline{U}_s d\widetilde{N}^0_s.$$
This proves that $( \overline{Y},\overline{Z} ,\overline{U} ) $ is a solution to (\ref{BSDE4e(-B)Y(psi)}) when $\psi=\psi_3$. Then due to the uniqueness of this BSDE, we deduce that $( \overline{Y},\overline{Z} ,\overline{U} ) $ coincides with $(e^{-B}Y^{\psi_3} ,e^{-B}Z^{\psi_3} ,e^{-B}U^{\psi_3} )$, and  hence (\ref{psi1psi2}) follows immediately. This ends the proof of assertion (c), and completes the proof of the lemma.
\end{proof}
\begin{proof}[Proof of Lemma \ref{lemma4Gtilde} ] It is clear that assertion (a) follows immediately from Lemma \ref{Lemma1} (see assertion (b), (c) and (d)), while assertion (b)  follows from combining assertions (a) , (b) and (d) of Lemma \ref{Lemma1} again. Thus, the rest of this proof addresses assertion (c).\\ 
As $\widetilde{g}\geq 0$, on the other hand, it is clear that 
$$\widetilde{g}=\left(\underset{\psi\in\Psi_n}{\esssup}({g}_{\psi})\right)^+=\underset{\psi\in\Psi_n}{\esssup}({g}_{\psi}^+).$$  On the other hand, in virtue of Lemma \ref{Lemma1}-(d), there exists a sequence $(\psi_k)_k\subset\Psi_n$ such that $(g_{\psi_k})^+$ converges $P\otimes{dt}$-almost everywhere to $\widetilde{g}$, and hence, due to Fatou, we get
\begin{equation*}
\begin{split}
E_0\left[\int_0^T\widetilde{g}_n(t)dt\right]&=E_0\left[\int_0^T\lim_{k\longrightarrow\infty}({g}_{\psi_k}(t))^+dt\right]\\
&\leq \lim_{k\longrightarrow\infty}E_0\left[\int_0^T\vert{g}_{\psi_k}(t)\vert{d}t\right]\leq \sup_{\psi\in\Psi_n}E_0\left[\int_0^T\vert{g}_{\psi}(t)\vert{d}t\right]= C_n.
\end{split}
\end{equation*}
This proves the first property in (\ref{GtildeVersusg(n)}). To prove the second property, we remark that due to Lemma \ref{Lemma1}-(c)-(d) ${g}_n={g}_{\psi_n}\leq \widetilde{g}_n$ obviously follow, and the rest of the proof focuses on the converse inequality. Let  $(\psi_k)_k\subset\Psi_n$ such that $g_{\psi_k}$ converges $P\otimes{dt}$-almost everywhere to $\widetilde{g}$, and in virtue of the first property in (\ref{GtildeVersusg(n)}) and the dominated convergence theorem, this convergence holds in $L^1(P\otimes dt)$. Thus, we derive 
\begin{equation*}
\begin{split}
E\left[\xi+\int_t^T \widetilde{g}(u)du\big|{\cal{F}}_t\right]&=\lim_{k\longrightarrow\infty}E\left[\xi+\int_t^T g_{\psi_k}(u)du\big|{\cal{F}}_t\right]\\
&=\lim_{k\longrightarrow\infty}Y^{\psi_k}_t\leq Y^{(n)}_t=E\left[\xi+\int_t^T g_{n}(u)du\big|{\cal{F}}_t\right]
\end{split}
\end{equation*}
Then by putting $t=0$ and taking expectation under $R_0$ afterwards, we get 
$$E\left[\int_0^T \widetilde{g}(u)du\right]=E\left[\int_0^T g_{n}(u)du\right].$$
Thus, by combining this with  ${g}_n\leq\widetilde{g}_n$ $P\otimes{dt}-a.e.$, we deduce that  ${g}_n= \widetilde{g}_n$, $P\otimes dt$-a.e.. This ends the proof of the lemma.
\end{proof}
\begin{proof}[Proof of Lemma \ref{LemmaforgTildeControl}] Let $\tau$ be a stopping time. Thanks to Lemma \ref{lemma4Gtilde}-(a), for any $n\geq 1$, we derive 
\begin{equation}
\begin{split}
\mathbb{E}_0\left[\int_{\tau}^T g_n(s)ds\big|{\cal{F}}_t\right]&=\mathbb{E}_0\left[-\xi+{Y}^{(n)}_t\big|{\cal{F}}_{\tau}\right]=\mathbb{E}_0\left[-\xi+Y^{\psi_n}_{\tau}\big|{\cal{F}}_{\tau}\right]\\
&\leq \mathbb{E}_0\left[\xi^-+\underset{\psi\in\Psi}{\esssup}\sup_{\tau\in{\cal{T}}}(Y^{\psi}_{\tau})^+\big|{\cal{F}}_{\tau}\right].
\end{split}
\end{equation} 
Then by combining this inequality with Fatou and Lemma \ref{lemma4Gtilde}-(c), assertion (a) follows immediately. \\
To prove assertion (b), we apply Garsia's lemma to the predictable process $\int_0^{\cdot}\widetilde{g}(u)du$, see \cite[Theorem 99]{dellacheriemeyer80}, and use assertion (a) to derive 
\begin{equation}\label{Inequa997}
\begin{split}
\mathbb{E}_0\left[\left(\int_0^T\widetilde{g}(u)du\right)^p\right]&\leq C_p \mathbb{E}_0\left[(\xi^-)^p\right]+ C_p \mathbb{E}_0\left[\left(\underset{\psi\in\Psi}{\esssup}\sup_{\tau\in{\cal{T}}}(Y^{\psi}_{\tau})^+\right)^p\right]\\
&\leq C_p\mathbb{E}_0\left[(\xi^-)^p\right]+ C_p \sup_{\psi\in\Psi} \mathbb{E}_0\left[\left(\sup_{\tau\in{\cal{T}}}(Y^{\psi}_{\tau})^+\right)^p\right]
\end{split}
\end{equation} 
The latter inequality follows from combining  the convergence monotone theorem and Lemma \ref{Lemma1}-(d), which states that the family $\{Y^{\psi}:\ \psi\in\Psi\}$ is upward direct.\\
Now remark that $\sup_{\tau\in{\cal{T}}}(Y^{\psi}_{\tau})^+\leq \sup_{0\leq{t}\leq{T}}E_0[D^{\psi}_T{\xi^+}(D^{\psi}_t)^{-1}\ \Big|{\cal{F}}_t]$, and consider
$$
\tau_a:=\inf\{t\geq 0:\ (Y_t^{\psi})^+ >a\},\quad a\in(0,\infty),\quad \inf(\emptyset):=\infty.$$
Then we have 
\begin{equation*}
\begin{split}
pa^{p-1}R_0\left( \sup_{0\leq{t}\leq{T}}(Y^{\psi}_t)^+>a\right)&\leq pa^{p-1}R_0\left( \tau_a\leq T\right)\\
&\leq pa^{p-2}E_0[(Y_{\tau_a}^{\psi})^+I_{\{\tau_a\leq T\}}]\leq pa^{p-2}E_0\left[{{D^{\psi}_T}\over{D^{\psi}_{T\wedge\tau_a}}}{\xi^+}I_{\{ \sup_{0\leq{t}\leq{T}}(Y_t^{\psi})^+\geq{a}\}}\right]
\end{split}
\end{equation*}
Then by integrating with respect to $da$ and put $q:=p/(p-1)$, we get 
\begin{equation}\label{Inequa998}
\begin{split}
\Vert \sup_{0\leq{t}\leq{T}}(Y^{\psi}_t)^+\Vert_{L^p(R_0)}^p&\leq{q} E_0\left[{{D^{\psi}_T}\over{D^{\psi}_{T\wedge\tau_a}}} {\xi^+}\left(\sup_{0\leq{t}\leq{T}}(Y^{\psi}_t)^+\right)^{p-1}\right]={q} E_{\psi}\left[{{\xi^+}\over{D^{\psi}_{T\wedge\tau_a}}}\left(\sup_{0\leq{t}\leq{T}}(Y^{\psi}_t)^+\right)^{p-1}\right]\\
&\leq {q}\Vert {\xi^+}(D^{\psi}_{T\wedge\tau_a})^{-1/p}\Vert_{L^p(R_{\psi})}\Vert  (D^{\psi}_{T\wedge\tau_a})^{-1/q}\left(\sup_{0\leq{t}\leq{T}}(Y^{\psi}_t)^+\right)^{p-1}   \Vert_{L^q(R_{\psi})}.
\end{split}
\end{equation}
In virtue of  
$$\sup_{0\leq{t}\leq{T}}(Y^{\psi}_t)^+\leq \sup_{0\leq{t}\leq{T}\wedge\tau_a}(Y^{\psi}_t)^++\sup_{{T}\wedge\tau_a\leq{t}\leq{T}}(Y^{\psi}_t)^+,$$ we deduce that 
\begin{equation}\label{Inequa999}
\begin{split}
&2^{1-q}\Vert  (D^{\psi}_{{T}\wedge\tau_a})^{-1/q}\left(\sup_{0\leq{t}\leq{T}}(Y^{\psi}_t)^+\right)^{p-1}   \Vert_{L^q(R_{\psi})}^q=2^{1-q}\E_{\psi}\left[(D^{\psi}_{{T}\wedge\tau_a})^{-1}\left(\sup_{0\leq{t}\leq{T}}(Y^{\psi}_t)^+\right)^{p} \right]\\
&\leq E_{\psi}\left[(D^{\psi}_{{T}\wedge\tau_a})^{-1}\left(\sup_{0\leq{t}\leq{T}\wedge\tau_a}(Y^{\psi}_t)^+\right)^{p} \right]+E_{\psi}\left[(D^{\psi}_{{T}\wedge\tau_a})^{-1}\left(\sup_{T\wedge\tau_a\leq{t}\leq{T}}(Y^{\psi}_t)^+\right)^{p} \right]\\
&\leq \Vert \sup_{0\leq{t}\leq{T}}(Y^{\psi}_t)^+\Vert_{L^p(R_0)}^p+p^pE_{\psi}\left[(D^{\psi}_{{T}\wedge\tau_a})^{-1}(\xi^+)^{p}\right]
\end{split}
\end{equation}
The last inequality follows from conditional Doob's inequality. Thus, by combining (\ref{Inequa997}),  (\ref{Inequa998}) and (\ref{Inequa999}) and Young's inequality, assertion (b) follows immediately. This ends the proof of the lemma.
\end{proof}
{\bf Acknowledgement:} The first author is very grateful to Professor Mich\`ele Vanmaele and the department of Applied Mathematics, Computer Science and Statistics at Ghent University, for their welcome and their hospitality, where the major part of this work was achieved when the first author was visiting Ghent. This visit was exclusively financed by FWO, through the mobility grant V500324N.\\
The first author would like to thank Professor Said Hamadene (Le Mans University) for his discussion about constrained BSDE and the literature about it. \\
 Ella Elazkany is financially supported by NSERC (through grant NSERC RGPIN04987). 


\end{document}